\documentclass[preprint]{imsart}

\RequirePackage[OT1]{fontenc}
\RequirePackage{amsthm,amsmath}
\RequirePackage[numbers]{natbib}
\RequirePackage[colorlinks,citecolor=blue,urlcolor=blue]{hyperref}

\startlocaldefs
\numberwithin{equation}{section}
\theoremstyle{plain}
\newtheorem{thm}{Theorem}[section]
\endlocaldefs

\usepackage{mystyle}
\usepackage{bigdelim}
\usepackage{rotating}
\usepackage{subfigure}
\graphicspath{{./images/}}

\begin{document}
\sloppy

\begin{frontmatter}
  \title{Estimation of Kullback-Leibler losses for noisy recovery problems
    within the exponential family}
  \runtitle{Estimation of Kullback-Leibler losses for the exponential family}

  \begin{aug}
    \author{\fnms{Charles-Alban} \snm{Deledalle}\thanksref{t1}%
      \ead[label=e1]{charles-alban.deledalle@math.u-bordeaux.fr}}
    \runauthor{C.-A. Deledalle}

    \address{IMB, CNRS, Universit\'e de Bordeaux, Bordeaux INP\\%
      351 cours de la Libération, F-33405 Talence, France\\
      \printead{e1}}

    \thankstext{t1}{The author would like to thank
J\'er\'emie Bigot and Jalal Fadili, as well as, the anonymous reviewers
for their generous help, relevant comments and constructive criticisms.
The author would also like to thank Janice Hau for proofreading this paper.}
  \end{aug}

  \begin{abstract}
    We address the question of estimating Kullback-Leibler losses
    rather than squared losses in recovery problems where
    the noise is distributed within the exponential family.
    Inspired by Stein unbiased risk estimator (SURE),
    we exhibit conditions under which these losses
    can be unbiasedly estimated or estimated with a controlled bias.
    Simulations on parameter selection problems in
    applications to image denoising and variable selection
    with Gamma and Poisson noises
    illustrate the interest of Kullback-Leibler losses and
    the proposed estimators.
  \end{abstract}

  \begin{keyword}[class=MSC]
    \kwd[Primary ]{62G05}
    \kwd{62F10}
    \kwd[; secondary ]{62J12}
  \end{keyword}

  \begin{keyword}
    \kwd{Stein unbiased risk estimator}
    \kwd{Model selection}
    \kwd{Kullback-Leibler divergence}
    \kwd{Exponential family}
  \end{keyword}

  \tableofcontents

\end{frontmatter}

\section{Introduction}

We consider the problem of predicting an unknown $d$-dimensional vector $\mu \in \RR^d$
from its noisy measurements $V \in \RR^d$. Given a collection
of parametric predictors of $\mu$, we focus on the selection of the predictor $\hat{\mu}$
that minimizes the discrepancy with the unknown vector $\mu$.
For instance, this includes the problem of selecting the best predictors from the
set of Least Absolute Shrinkage and Selection Operator (LASSO) solutions \cite{tibshirani1996regre} obtained for all possible choices of
regularization parameters.
To this end, the common approach is to select $\hat \mu$ that minimizes an unbiased estimate
of the expected squared loss $\EE \norm{\mu - \hat \mu}^2$,
typically, with the Stein unbiased risk estimator (SURE) \cite{stein1981estimation}.
Such estimators are classically built on some statistical modeling of the noise, e.g.,
as being distributed within the exponential family. In this context, we investigate
the interest of going beyond squared losses by
rather estimating a loss function grounded on an
information based criterion, namely, the Kullback-Leibler divergence.
We will first recall some basic properties of the
exponential family, give a quick review on risk estimation and
motivate the use of the Kullback-Leibler divergence.

\begin{table}[!t]
  \small
  \centering
  \caption{Examples of univariate distributions of the exponential family.
    The variable $y$ denotes an outcome of the random variable $Y$,
    $\mu = \EE[Y]$ is the unknown (location) parameter of interest,
    and the variables in brackets are known nuisance (scale and shape) parameters.}
  \hspace*{-0.85cm}%
  \begin{tabular}{@{}l@{\;\;\;\;}c@{\;\;\;\;}c@{\;\;\;}c@{\;\;\;}c@{}}
    \hline
    Distribution law & $\theta=\phi(\mu)$ & $\Lambda(\mu)$ & $h(y)$ & $A(\theta)$\\
    \hline
    \hline
    Gaussian ($\sigma > 0$)\\[-0.7em]
    $\quad \displaystyle \frac{1}{\sqrt{2\pi}\sigma} \exp\left(-\frac{(y-\mu)^2}{2\sigma^2}\right)$
    &
    $\dfrac{\mu}{\sigma^2}$
    &
    $\sigma^2$
    &
    $\displaystyle \frac{e^{-\frac{y^2}{2\sigma^2}}}{\sqrt{2\pi}\sigma}$
    &
    $\dfrac{\sigma^2 \theta^2}{2}$
    \\
    $\mu \in \RR$
    \\
    \hline
    Gamma ($L > 0$)\\
    $\quad \displaystyle \frac{L^L y^{L-1}}{\Gamma(L)\mu^L} e^{-\frac{Ly}{\mu}} \mathds{1}_{\RR^+}(y)$
    &
    $\dfrac{-L}{\mu}$
    &
    $\dfrac{\mu^2}{L}$
    &
    $\displaystyle \frac{L^L y^{L-1}}{\Gamma(L)}\mathds{1}_{\RR^+}(y)$
    &
    $-L \log ( -\theta/L )$
    \\
    $\mu > 0$
    \\
    \hline
    Poisson\\
    $\quad \displaystyle \frac{\mu^y e^{-\mu}}{y!}\mathds{1}_{\NN}(y)$
    &
    $\log \mu$
    &
    $\mu$
    &
    $\displaystyle \frac{\mathds{1}_{\NN}(y)}{y!}$
    &
    $\exp \theta$
    \\
    $\mu > 0$\\
    \hline
    Binomial ($n > 0$)\\
    $\quad \displaystyle \binom{n}{y} \mathtt{p}^y (1-\mathtt{p})^{n-y} \mathds{1}_{[0,n]}(y)$
    &
    $\log \dfrac{\mu}{n-\mu}$
    &
    $-\dfrac{\mu^2}{n} + \mu$
    &
    $\displaystyle \binom{n}{y}\mathds{1}_{[0,n]}(y)$
    &
    $n \log(1 + e^\theta)$
    \\
    $\mu = n\mathtt{p}$, $\mathtt{p} \in [0, 1]$
    \\
    \hline
    Negative Binomial ($r > 0$)\\
    $\quad \displaystyle \frac{\Gamma(r+y)}{y! \Gamma(r)} \mathtt{p}^y(1-\mathtt{p})^r \mathds{1}_{\NN}(y)$
    &
    $\log \dfrac{\mu}{r+\mu}$
    &
    $\dfrac{\mu^2}{n} - \mu$
    &
    $\dfrac{\Gamma(r+y)}{y! \Gamma(r)}\mathds{1}_{\NN}(y)$
    &
    $-r \log(1 - e^\theta)$
    \\
    \multicolumn{2}{l}{$\mu = r\mathtt{p}/(1-\mathtt{p})$, $\mathtt{p} \in [0, 1]$}
    \\
    \hline
    \hline
    \\
  \end{tabular}%
  \hspace*{-1cm}
  \label{tab:examples}
\end{table}

\paragraph{Exponential family.}
We assume that in the aforementioned recovery problem the noise distribution
belongs to the exponential family. Formally, the recovery problem can be reparametrized
using two one-to-one mappings $\psi : \RR^d \to \RR^d$
and $\phi : \RR^d \to \RR^d$ such that
$Y = \psi(V)$ has a probability measure $P_\theta$ characterized by a
probability density or mass function
with respect to the Lebesgue measure $\mathrm{d}y$
of the following form
\begin{align}\label{eq:klfamily}
  p(y; \theta) =
  h(y) \exp\left(\dotp{y}{\theta} - A(\theta) \right)
\end{align}
where $\theta = \phi(\mu) \in \RR^d$.
The distribution $P_\theta$ is said to be within the natural exponential family.
We call $\theta$ the natural parameter,
$Y$ a sufficient statistic for $\theta$,
$h : \RR^d \to \RR^+$ the base measure,
and $A : \RR^d \to \RR$ the log-partition function.
Classical and important properties of the exponential family include
$A$ that is convex,
$\EE[Y] = \nabla A (\theta)$ and $\Var[Y] = \nabla \nabla^t A (\theta)$
(see, e.g.,~\cite{brown1986fundamentals}).
Here and in the following, $\EE[Y] = \int Y \mathrm{d} P_\theta$ denotes the expectation
of the random vector $Y$ with respect to
the measure $\mathrm{d} P_\theta$,
and $\Var[Y] = \EE[(Y -  \EE[Y]) (Y -  \EE[Y])^t]$
is its so-called variance-covariance matrix.

Without loss of generality,
we consider that $Y$ is a minimal sufficient statistic. As a consequence,
$\nabla A$ is one-to-one and we can choose $\phi$ as
the canonical link function
satisfying $\phi = (\nabla A)^{-1}$ (as coined in the language of generalized linear models).
An immediate consequence is that $Y$ has expectation $\EE[Y] = \mu$ and
its variance is a function of $\mu$ given by $\Var[Y] = \Lambda(\mu)$ where
$\Lambda = (\nabla \nabla^t A) \circ \phi$.
The function $\Lambda : \RR^d \to \RR^{d \times d}$ is
the so-called variance function
(see, e.g.,~\cite{morris1982natural}),
also known as the noise level function
(in the language of signal processing).

Table \ref{tab:examples} gives five examples of univariate distributions of the
exponential family -- two of them are defined in a continuous domain,
the other three are defined in a discrete domain.

\paragraph{Risk estimation.}
We now assume that the predictor $\hat{\mu}$ of $\mu$ is a function of $Y$ only,
hence, we write it $\hat{\mu}(Y)$, and we focus on estimating
the loss associated to $\hat{\mu}(Y)$ with respect to $\mu$.
When the noise has a Gaussian distribution with independent entries,
SURE \cite{stein1981estimation} can be used
to estimate the mean squared error (MSE), or in short the risk, defined as:
$\mathrm{MSE}_\mu = \EE \norm{\mu - \hat{\mu}(Y)}^2$.
The resulting estimator, being independent on the unknown predictor $\mu$,
can serve in practice as an objective for parameter selection.
Eldar \cite{eldar2009robust} builds on Stein's lemma \cite{stein1981estimation},
a generalization of SURE valid for some continuous distributions of the exponential family.
It provides an unbiased estimate of the ``natural'' risk, defined as:
$\mathrm{MSE}_\theta = \EE \norm{\phi(\mu) - \phi(\hat{\mu}(Y))}^2$, i.e.,
the risk with respect to $\theta = \phi(\mu)$.
In the same vein, when the distribution is discrete, Hudson \cite{hudson1978nie}
provides another result for estimating the ``exp-natural'' risk:
$\mathrm{MSE}_\eta = \EE \norm{\exp \phi(\mu)  - \exp \phi(\hat{\mu}(Y))}^2$, i.e.,
the risk with respect to $\eta = \exp \theta$,
where $\exp : \RR^d \to \RR^d$ is the entry-wise exponential.
As $\phi$ is assumed one-to-one,
there is no doubt that if such loss functions cancel then
$\hat{\mu}(Y) = \mu$. In this sense, they provide
good objectives for selecting $\hat{\mu}(Y)$.
However, within a family of parametric predictors and without
strong assumptions on $\mu$, such a loss function might never cancel.
In such a case, it becomes unclear what its minimization leads it to select,
all the more when $\phi$ or $\exp \circ \, \phi$ are non-linear.
Furthermore, even when they are linear (e.g., $\exp \circ \,\phi = \text{id}$ for Poisson noise),
minimizing $\mathrm{MSE}_\mu = \EE \norm{\mu - \hat{\mu}(Y)}^2$
might not even be relevant
as it does not compensate for the heteroscedasticity of the noise
(this will be made clear in our experiments).
Estimating the reweighted or Mahanalobis risk given by $\EE \norm{\Lambda(\mu)^{-1/2} (\mu - \hat{\mu}(Y))}^2$
could be more relevant in this case, but its estimation is more intricate.

\paragraph{Kullback-Leibler divergence.}
The Kullback-Leibler (KL) divergence \cite{kullback1951information}
is a measure of information loss
when an alternative distribution $P_1$ is used to approximate the underlying one $P_0$.
Its formal definition is given by $  \Dd(P_0 \| P_1)
  =
  \int
  \ds P_0
  \log \frac{\ds P_0}{\ds P_1}
$.
Unlike squared losses,
it does not measure the discrepancy between an unknown parameter
and its estimate, but between the unknown distribution $P_0$ of $Y$
and its estimate $P_1$.
As a consequence, it is invariant with one-to-one reparametrization of the parameters
and, hence, becomes a serious competitor to squared losses.
Remark that it is also invariant under one-to-one transformations of $Y$ because
such transforms do not affect the quantity of information carried by $Y$.
Interestingly, provided $P_0$ and $P_1$ belongs to the same member of
the natural exponential family respectively with parameters $\theta_0$
and $\theta_1$,
the KL divergence can be written in terms
of the Bregman divergence associated with $A$ for points $\theta_0$
and ${\theta}_1$, i.e.,
\begin{align}\label{eq:kl}
  \Dd(P_0 \| P_1)
  =
  A({\theta}_1) - A(\theta_0) - \dotp{\nabla A(\theta_0)}{{\theta}_1 - \theta_0}.
\end{align}
While squared losses are defined irrespective of the noise distribution,
the KL divergence adjusts its penalty with respect to the scales and the shapes of the deviations.
In particular, it accounts for heteroscedasticity.

\paragraph{Contributions.}
In this paper, we address the problem of estimating KL losses,
i.e., losses based on the KL divergence.
As it is a non symmetric discrepancy measure,
we can define two KL loss functions.
The first one
\begin{equation}\tag{MKLA}
  \mathrm{MKLA} = \EE[\Dd(P_\theta \| P_{\hat{\theta}(Y)})]
\end{equation}
will be referred to as
the mean KL analysis loss as it can be given the following interpretation:
``how well might $P_{\hat{\theta}(Y)}$ explain independent copies of $Y$''.
The mean KL analysis loss is inherent to many statistical problems as it takes
as reference the true underlying distribution. It is at the heart of the maximum likelihood
estimator and is typically involved in non-parametric density estimation, oracle inequalities,
mini-max control, etc.~%
(see, e.g., \cite{hall1987kullback,george2006improved,rigollet2012kullback}).
The second one will be referred to as
the mean KL synthesis loss given by
\begin{equation}\tag{MKLS}
\mathrm{MKLS} = \EE[\Dd(P_{\hat{\theta}(Y)} \| P_\theta)]
\end{equation}
which can be given the following interpretation:
``how well might $P_{\hat{\theta}(Y)}$ generate independent copies of $Y$''.
The mean KL synthesis loss has also been considered in different statistical studies.
For instance, the authors of \cite{yanagimoto1994kullback} consider
this loss function to design a James Stein-like shrinkage predictor.
Hannig and Lee address a very similar problem to ours,
by designing a consistent estimator of $\mathrm{MKLS}$ used as an objective
for bandwidth selection in kernel smoothing problems subject to Gamma \cite{hannig2004kernel} and Poisson noise \cite{hannig2006poisson}.
Table \ref{tab:summary} gives a summary of our contributions.
It highlights which loss can be estimated and
under which conditions of the exponential family.
The main contributions of our paper are:
\begin{enumerate}
\item provided $y \mapsto \hat{\mu}(y)$ and the base measure $h$ are both weakly differentiable,
$\mathrm{MKLS}$ can be unbiasedly estimated
(Theorem \ref{thm:kls_continuous}),
\item for any mapping $y \mapsto \hat{\mu}(y)$,
$\mathrm{MKLA}$ can be unbiasedly estimated for Poisson variates
(Theorem \ref{thm:pukla}),
\item provided $y \mapsto \hat{\mu}(y)$ is $k \geq 3$ times differentiable
with bounded $k$-th derivative,
$\mathrm{MKLA}$ can be estimated with vanishing bias when $Y$ results from a large sample mean
of independent random vectors with finite $k$-th order moments
(Theorem \ref{thm:dkla}).
\end{enumerate}
It is worth mentioning that a symmetrized
version of the mean Kullback-Leibler loss: $\mathrm{MKLA} + \mathrm{MKLS}$,
can be estimated as soon as $\mathrm{MKLA}$ and $\mathrm{MKLS}$ can
both be estimated (e.g., for continuous distributions according to Table \ref{tab:summary}).

\begin{table}[!t]
\centering
\caption{Summary of what can be estimated provided $y \mapsto \hat{\mu}(y)$ is sufficiently regular.\hspace{1.5cm}{ }}
\label{tab:summary}
\begin{tabular}{lccc}
\cline{1-3}\\[-1em]
  & Continuous & Discrete\\
\cline{1-3}\\[-1em]
\cline{1-3}\\[-1em]
$\mathrm{MSE}_\mu$
& if $\phi(\mu) = \alpha \mu + \beta$ & if $\phi(\mu) = \log(\alpha \mu + \beta)$
&
\rdelim\}{4}{0.4cm}{
  \begin{sideways}\hspace{-1.2cm}\begin{minipage}{1.45cm}\centering Stein \&\\Hudson\end{minipage}\end{sideways}
}\\
& (Gaussian) & (Poisson)\\
\cline{1-3}\\[-1em]
$\mathrm{MSE}_{\theta}$ &
yes & \\
\cline{1-3}\\[-1em]
$\mathrm{MSE}_{\eta}$ & & yes\\
\cline{1-3}\\[-1em]
\cline{1-3}\\[-1em]
$\mathrm{MKLA}$
& if $\phi(\mu) = \alpha \mu + \beta$
& if $\phi(\mu) = \log(\alpha \mu + \beta)$
&
\rdelim\}{5}{0.4cm}{
  \begin{sideways}\hspace{-1.50cm}\begin{minipage}{1.65cm}\centering Our\\contributions\end{minipage}\end{sideways}
}\\
& (Gaussian) & (Poisson)\\
\cline{2-3}\\[-1em]
& \multicolumn{2}{c}{yes, when $Y$ results from a large sample mean}\\
& (Gaussian, Gamma, \ldots) & (Poisson, NegBin, Binomial, \ldots)\\
\cline{1-3}\\[-1em]
$\mathrm{MKLS}^*$ & yes & \\
\cline{1-3}\\[-1em]
\cline{1-3}\\[-1em]
\end{tabular}\\
$^*$consistently for kernel smoothing under Gamma \cite{hannig2004kernel} and Poisson noises \cite{hannig2006poisson}.\hspace{1.cm}{ }
\vspace{2em}
\end{table}

\section{Risk estimation under Gaussian noise}

This section recalls important properties of the $\mathrm{MSE}$
and the definition of SURE
under additive noise models of the form $Y = \mu + Z$
where $Z \sim \Nn(0, \sigma^2 \Id_d)$ and $\Id_d$ denotes the $d \times d$ identity matrix.

Before turning to the unbiased estimation of $\mathrm{MSE}_\mu$, it is important
to recall that for any additive models and zero-mean noise
with variance $\sigma^2 \Id_d$, provided the following quantities exists,
we have
\begin{align}\label{eq:variational_l2}
  \mathrm{MSE}_\mu
  =
  \underbrace{\EE \norm{Y - \hat{\mu}(Y)}^2 - d \sigma^2}_{\text{expected data fidelity}}
  +
  2 \underbrace{\tr \Cov(Y, \hat{\mu}(Y))}_{\text{model complexity}}
\end{align}
where $\Cov(Y, \hat{\mu}(Y)) = \EE[(Y -  \EE[Y]) (\hat{\mu}(Y) - \EE[\hat{\mu}(Y)])^t]$ is the cross-covariance matrix
between $Y$ and $\hat{\mu}(Y)$.
Equation \eqref{eq:variational_l2} gives
a variational interpretation of the minimization
of the $\mathrm{MSE}$ as the optimization of a trade-off between overfitting
(first term) and complexity (second term).
In fact, $\sigma^{-2} \tr \Cov(Y, \hat{\mu}(Y))$ is a classical
measure of the complexity of a statistical modeling procedure, known
as the degrees of freedom (DOF), see, e.g., \cite{efron1986biased}.
The DOF plays an important role in
model validation and model selection rules,
such as, Akaike information criteria (AIC) \cite{akaike1973information},
Bayesian information criteria (BIC) \cite{schwarz1978estimating},
and the generalized cross-validation (GCV) \cite{golub1979generalized}.

For linear predictors of the form $\hat{\mu}(y)=W y$, $W \in \RR^{d \times d}$
(think of least-square or ridge regression),
the DOF boils down to $\tr W$. As a consequence, the random quantity
$\norm{Y - \hat{\mu}(Y)}^2 - d \sigma^2 + 2 \tr W$
becomes an unbiased estimator of $\mathrm{MSE}_\mu$,
that depends solely on $Y$ without {\it prior} knowledge of $\mu$.
If $W$ is a projector, the DOF corresponds to the dimension of the target space,
and we retrieve the well known Mallows' $C_p$ statistic \cite{mallows1973some}
as well as the aforementioned AIC.
The SURE provides a generalization of these results that is
not only restricted to linear predictors but can be applied to weakly differentiable mappings.
A comprehensive account on weak differentiability can be found in e.g.,~\cite{EvansGariepy92,GilbargTrudinger98}.
Let us now recall Stein's lemma \cite{stein1981estimation}.

\begin{lem}[Stein lemma]\label{lem:stein}
Assume $f$ is weakly differentiable with essentially bounded weak partial derivatives
on $\RR^d$ and $Y \sim \Nn(\mu, \sigma^2 \Id_d)$, then
\begin{align*}
  \tr \Cov(Y, f(Y))
  &=
  \sigma^2 \EE\left[ \tr \pda{f(y)}{y}{Y} \right].
\end{align*}
\end{lem}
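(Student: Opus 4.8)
The plan is to reduce the multivariate statement to a sequence of one-dimensional integration-by-parts computations, one per coordinate. Writing $f = (f_1, \dots, f_d)$ and $\mu = \EE[Y]$, the trace of the cross-covariance matrix expands as
\begin{align*}
  \tr \Cov(Y, f(Y))
  = \sum_{i=1}^d \Cov(Y_i, f_i(Y))
  = \sum_{i=1}^d \EE\bigl[(Y_i - \mu_i) f_i(Y)\bigr],
\end{align*}
where the second equality uses $\EE[Y_i - \mu_i] = 0$, so centering $f_i$ inside the covariance changes nothing. Since the right-hand side of the lemma is $\sigma^2 \sum_{i=1}^d \EE[\partial f_i / \partial y_i (Y)]$, it suffices to prove the scalar identity $\EE[(Y_i - \mu_i) f_i(Y)] = \sigma^2 \EE[\partial f_i / \partial y_i (Y)]$ for each fixed $i$.

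The mechanism driving the proof is the log-derivative identity for the Gaussian density $p(y) = (2\pi\sigma^2)^{-d/2} \exp(-\norm{y - \mu}^2 / (2\sigma^2))$, namely
\begin{align*}
  \frac{\partial p}{\partial y_i}(y) = -\frac{y_i - \mu_i}{\sigma^2}\, p(y).
\end{align*}
Substituting this into the scalar expectation gives
\begin{align*}
  \EE\bigl[(Y_i - \mu_i) f_i(Y)\bigr]
  = \int_{\RR^d} (y_i - \mu_i) f_i(y)\, p(y)\, \mathrm{d} y
  = -\sigma^2 \int_{\RR^d} f_i(y)\, \frac{\partial p}{\partial y_i}(y)\, \mathrm{d} y.
\end{align*}
Isolating the inner one-dimensional integral over $y_i$ by Fubini and integrating by parts then transfers the derivative onto $f_i$, producing $\sigma^2 \int_{\RR^d} (\partial f_i / \partial y_i)(y)\, p(y)\, \mathrm{d} y = \sigma^2 \EE[\partial f_i / \partial y_i (Y)]$, provided the boundary contributions vanish.

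The delicate step, and the main obstacle, is to justify this integration by parts when $f_i$ is only weakly differentiable rather than classically $C^1$. Here I would invoke the ACL (absolutely continuous on lines) characterization of Sobolev functions: a function with essentially bounded weak partial derivatives admits a representative that is absolutely continuous on almost every line parallel to each coordinate axis, with classical derivative equal almost everywhere to the weak one. This legitimizes the fundamental theorem of calculus inside each Fubini slice. Essential boundedness of $\partial f_i / \partial y_i$ additionally forces $f_i$ to grow at most linearly, which guarantees that $f_i(y) p(y) \to 0$ as $y_i \to \pm\infty$ (so the boundary terms disappear) and that both integrands $(y_i - \mu_i) f_i(y) p(y)$ and $(\partial f_i / \partial y_i)(y) p(y)$ are absolutely integrable, so that the relevant expectations are finite and Fubini indeed applies. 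Summing the $d$ resulting scalar identities yields the lemma.
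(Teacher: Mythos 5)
The paper itself does not prove this lemma: it is recalled verbatim from Stein's 1981 paper with a citation, so there is no internal proof to compare against. Your argument is the classical one --- and essentially Stein's own: reduce to the scalar identities $\EE[(Y_i-\mu_i)f_i(Y)] = \sigma^2\,\EE[\partial f_i/\partial y_i(Y)]$, exploit the Gaussian log-derivative identity $\partial p/\partial y_i = -\sigma^{-2}(y_i-\mu_i)p$, and integrate by parts slice-wise via Fubini, using the ACL (absolutely continuous on lines) characterization to handle weak rather than classical differentiability. This is correct, and invoking ACL is exactly the right device for the weak-differentiability hypothesis. One imprecision worth fixing: you justify the linear growth of $f_i$, the vanishing of the boundary terms, and the absolute integrability of $(y_i-\mu_i)f_i(y)p(y)$ by essential boundedness of $\partial f_i/\partial y_i$ alone. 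Boundedness of that single partial only controls growth along the $y_i$-direction (enough for the boundary term on each Fubini slice, but not for global integrability: $f_i$ could still blow up super-exponentially in the other coordinates). What you actually need --- and what the lemma's hypothesis provides, since \emph{all} weak partial derivatives of $f$ are assumed essentially bounded --- is that $f$ has a Lipschitz representative, hence at most linear growth in every direction, which secures both the Fubini step and the finiteness of all expectations. With that attribution corrected, the proof is complete.
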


\noindent%
A direct consequence of Stein's Lemma,
provided $\hat{\mu}$ fulfills the assumptions of Lemma \ref{lem:stein}, is that
\begin{align}\label{eq:sure}
  \mathrm{SURE} =
  \norm{Y - \hat{\mu}(Y)}^2 - d \sigma^2 + 2 \sigma^2 \tr \pda{\hat{\mu}(y)}{y}{Y}
\end{align}
satisfies $\EE\mathrm{SURE} = \mathrm{MSE}_\mu$.
Applications of SURE emerged for choosing the smoothing
parameters in families of linear predictors \cite{li1985sure}
such as for model selection, ridge regression,
smoothing splines, etc.
After its introduction in the wavelet community with the
SURE-Shrink algorithm \cite{donoho1995adapting},
it has been widely used to various image restoration problems, e.g.,~%
with sparse regularizations
\cite{blu2007surelet,ramani2008montecarlosure,chaux2008nonlinear,pesquet-deconv,cai2009data,luisier2010sure,ramani2012regularization} or with non-local filters
\cite{vandeville2009sure,duval2011abv,dds2011nlmsap,vandeville2011non}.

\section{Risk estimation for the exponential family and beyond}

In this section, we recall how SURE has been extended
beyond Gaussian noises
towards noises distributed within the natural exponential family.

\paragraph{Continuous exponential family.}
We first consider continuous noise models, e.g., Gamma noise.
To begin, we recall a well known result
derived by Eldar \cite{eldar2009generalized},
that can be traced back to Hudson%
\footnote{In his paper, Hudson mentioned that Stein already knew about this result.}
in the case of independent entries \cite{hudson1978nie},
and that can be seen as a generalization of Stein's lemma.

\begin{lem}[Generalized Stein's lemma]\label{lem:hudson_stein}
Assume $f$ is weakly differentiable with essentially bounded weak partial derivatives on $\RR^d$
and $Y$ follows a distribution of the natural exponential family
with natural parameter $\theta$,
provided $h$ is also weakly differentiable on $\RR^d$, we have
\begin{align*}
  \EE \dotp{\theta}{f(Y)}
  &=
  - \EE\left[ \dotp{\frac{\nabla h(Y)}{h(Y)}}{f(Y)} + \tr \pda{f(y)}{y}{Y} \right].
\end{align*}
\end{lem}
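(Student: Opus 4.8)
The plan is to reduce the identity to an integration-by-parts formula against the density $p(y;\theta)$ from \eqref{eq:klfamily}. Writing $\EE[g(Y)] = \int_{\RR^d} g(y)\, p(y;\theta)\,\mathrm{d}y$ with $p(y;\theta) = h(y)\exp(\dotp{y}{\theta} - A(\theta))$, the starting observation is that $p$ inherits weak differentiability from $h$ (the exponential factor being smooth in $y$), and that its logarithmic gradient with respect to $y$ is
\begin{align*}
  \nabla_y \log p(y;\theta) = \theta + \frac{\nabla h(y)}{h(y)},
\end{align*}
so that $\nabla_y p(y;\theta) = \bigl(\theta + \nabla h(y)/h(y)\bigr)\, p(y;\theta)$ holds almost everywhere. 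This single identity is what injects both $\theta$ and $\nabla h / h$ into the final formula.

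First I would establish the componentwise product rule for the weakly differentiable vector field $y \mapsto f(y)\, p(y;\theta)$, namely $\partial_{y_i}(f_i\, p) = (\partial_{y_i} f_i)\, p + f_i\, \partial_{y_i} p$ almost everywhere; this is legitimate because $f$ and $h$ are weakly differentiable and the weak partial derivatives of $f$ are essentially bounded (see, e.g.,~\cite{EvansGariepy92}). Summing over $i = 1, \dots, d$ and using the logarithmic-gradient identity gives
\begin{align*}
  \sum_{i=1}^d \partial_{y_i}\bigl(f_i(y)\, p(y;\theta)\bigr)
  =
  \Bigl(\tr \pda{f(y)}{y}{y}\Bigr)\, p(y;\theta)
  + \dotp{f(y)}{\theta + \tfrac{\nabla h(y)}{h(y)}}\, p(y;\theta).
\end{align*}
Integrating both sides over $\RR^d$ and taking expectations, the claimed identity follows at once provided the integral of the left-hand side, which is the total flux of $f\, p$ across the boundary of its support, vanishes.

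The crux is therefore justifying $\int_{\RR^d} \sum_i \partial_{y_i}(f_i\, p)\,\mathrm{d}y = 0$. I would verify that $f\, p \in W^{1,1}(\RR^d)^d$ by combining the essential boundedness of the weak derivatives of $f$ with the integrability of $p$ and of $y \mapsto \dotp{y}{\theta}\,p(y;\theta)$ (finiteness of the first moments of the exponential family), so that the divergence theorem for Sobolev fields forces the boundary flux to vanish; an approximation argument with compactly supported smooth fields makes this rigorous without demanding classical differentiability of $f$ or $h$. Granting this, the integrated identity reads
\begin{align*}
  0 = \EE\Bigl[\tr \pda{f(y)}{y}{Y}\Bigr] + \EE\dotp{f(Y)}{\theta} + \EE\dotp{\frac{\nabla h(Y)}{h(Y)}}{f(Y)},
\end{align*}
and isolating $\EE\dotp{\theta}{f(Y)}$ yields exactly the stated formula.

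The main obstacle is the control of this boundary term, which is precisely where the weak-differentiability and essential-boundedness hypotheses on $f$ and $h$ are genuinely needed: the flux must be shown to vanish both at infinity (handled by the exponential decay of $p$) and, for densities supported on a proper subset of $\RR^d$ such as the Gamma law, at the boundary of $\mathrm{supp}(h)$. The remaining manipulations are routine once the product rule and the flux-vanishing are in place.
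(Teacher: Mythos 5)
Your proof is correct and takes essentially the same route as the one the paper relies on: the paper does not prove this lemma itself but defers to \cite{eldar2009generalized}, whose argument is precisely this integration by parts against $p(y;\theta)=h(y)\exp(\dotp{y}{\theta}-A(\theta))$, exploiting that $\theta$ enters the density only through the smooth exponential factor so that $\nabla_y p = (\theta + \nabla h(y)/h(y))\,p$. Your divergence-form packaging (product rule for $f\,p$, vanishing of $\int_{\RR^d}\sum_i \partial_{y_i}(f_i\,p)\,\mathrm{d}y$, with the flux at the boundary of $\mathrm{supp}(h)$ killed by weak differentiability of $h$ on all of $\RR^d$ and the flux at infinity by the decay of $p$) is the same computation, merely reorganized.
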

\noindent
Lemma \ref{lem:hudson_stein}, whose proof can be found in \cite{eldar2009generalized},
provides an estimator of the dot product $\EE \dotp{\theta}{f(Y)}$
that solely depends on $Y$ without reference to $\theta$.
As a consequence, the Generalized SURE (as coined by \cite{eldar2009generalized}) defined by
\begin{align}\label{eq:gsure}
  \!\!\!
  \mathrm{GSURE}
  =
  \norm{\hat{\theta}(Y)}^2
  +
  2 \dotp{\frac{\nabla h(Y)}{h(Y)}}{\hat{\theta}(Y)}
  +
  2 \tr \pda{\hat{\theta}(y)}{y}{Y}
  \!\!\!
  +
  \frac{1}{h(Y)} \tr \pdda{h(y)}{y}{Y}
\end{align}
is an unbiased estimator of $\mathrm{MSE}_\theta$, i.e.,
$\EE\mathrm{GSURE} = \mathrm{MSE}_\theta$, provided
$\hat{\theta}$, $h$ and $\nabla h$ are weakly differentiable%
\footnote{Eq.~\eqref{eq:gsure} is obtained by applying Lemma \ref{lem:hudson_stein}
  on $\dotp{\theta}{\hat{\theta}(Y)}$, $\dotp{\theta}{\theta}$ and
  $\dotp{h(Y)^{-1} \nabla h(Y)}{\theta}$.}.
Note that omitting the last term in \eqref{eq:gsure}
leads to the seminal definition of GSURE given in \cite{eldar2009generalized}
which provides an unbiased estimate of $\mathrm{MSE}_\theta - \norm{\theta}^2$,
even though $\nabla h$ is not weakly differentiable.

The GSURE can be specified for Gaussian noise, and in this case
$\mathrm{GSURE} = \sigma^{-4} \mathrm{SURE}$
and the ``natural'' risk boils down to the risk as
$\mathrm{MSE}_\theta = \sigma^{-4} \mathrm{MSE}_\mu$.
In general, such a linear relationship between the ``natural'' risk and
the risk of interest might not be met. For instance, under Gamma noise%
\footnote{%
  A random variable $Y$ follows a Gamma distribution with scale parameter $L$
  if it results from the mean of $L$ independent and identically distributed
  exponential random variables.
  For this reason, $L$ is often referred to as the number of looks
  and controls the spread of the distribution as
  $\Var[Y] = \Lambda(\mu) = \frac{\mu^2}{L}$.
  This distribution is widely used to describe fluctuations of speckle
  in coherent laser imagery \cite{goodman1976some}.
}
with scale parameter $L$ (see Table \ref{tab:examples}),
with expectation $\mu$ and independent entries,
the GSURE reads as
\begin{align}\label{eq:gsure_gamma}
  \underset{\text{Gamma}}{\mathrm{GSURE}}
  =
  \sum_{i=1}^d
  \frac{L^2}{\hat{\mu}_i(Y)^2}
  -
  \frac{2 L (L-1)}{Y_i \hat{\mu}_i(Y)}
  +
  \frac{2 L}{\hat{\mu}_i(Y)^2} \pda{\hat{\mu}_i(y)}{y_i}{Y}
  +
  \frac{(L-1)(L-2)}{Y_i^2}
\end{align}
which, as soon as
$L>2$ and $\hat{\mu}$ fulfills the assumptions of Lemma \ref{lem:hudson_stein}, unbiasedly estimates
$\mathrm{MSE}_\theta = L^2 \EE\norm{\mu^{-1} - \hat{\mu}(Y)^{-1}}^{2}$,
where $(\cdot)^{-1}$ is the entry-wise inversion%
\footnote{$L\!>\!2$ implies that $h$ and $\nabla h$ are weakly differentiable.
  By omitting the last term of GSURE, an unbiased estimate of
  $L^2 \EE\norm{\mu^{-1} \!-\! \hat{\mu}(Y)^{-1}}^{2} \!-\! L^2 \norm{\mu^{-1}}^2$
  is obtained as soon as $L \!>\! 1$.}.
We will see in practice that minima of $\mathrm{MSE}_\theta$ can strongly
depart from those of interest.
As the GSURE can only measure discrepancy in the ``natural'' parameter space,
its applicability in real scenarios can thus be seriously limited.

\paragraph{Discrete exponential family.}
We now consider discrete noises distributed within the natural exponential family, e.g., Poisson
or binomial.
Before turning to the general result,
let us focus on Poisson noise with mean $\mu$
and independent entries for which
the Poisson unbiased risk estimator (PURE) defined as
\begin{align}
  \!\!
  \mathrm{PURE} \!=\! \norm{\hat{\mu}(Y)}^2 - 2 \dotp{Y}{\hat{\mu}_{\downarrow}(Y)} + \dotp{Y}{Y\!-\!1}
  \qwhereq \hat{\mu}_{\downarrow}(Y)_i \!=\! \hat{\mu}_i(Y \!-\! e_i),
\end{align}
unbiasedly estimates $\mathrm{MSE}_\mu$,
see, e.g., \cite{chen1975poisson,hudson1978nie}.
The vector $e_i$ is defined as $(e_i)_i=1$ and $(e_i)_j=0$ for $j\ne i$.
The PURE is in fact the consequence of the following lemma also due to Hudson \cite{hudson1978nie}.
\begin{lem}[Hudson's lemma]\label{lem:hudson}
Assume $Y$ follows a discrete distribution on $\ZZ^d$ of the
natural exponential family with natural parameter $\theta$, then
\begin{align*}
  \EE \dotp{\exp \theta}{f(Y)}
  =
  \EE \left[ \dotp{\frac{h_{\downarrow}(Y)}{h(Y)}}{f_\downarrow(Y)} \right]
  \qwhereq
  h_{\downarrow}(Y)_i = h(Y - e_i)
\end{align*}
holds for every mapping $f : \ZZ^d \to \RR$ where
$\exp$ is the entry-wise exponential.
\end{lem}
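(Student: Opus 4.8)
The plan is to establish the identity coordinate by coordinate by means of a discrete change of summation index, which is the lattice counterpart of the integration-by-parts argument underlying Lemma \ref{lem:hudson_stein}. Since $Y$ is supported on $\ZZ^d$ with mass function \eqref{eq:klfamily}, I would first write the $i$-th contribution to the left-hand side, using $(\exp\theta)_i = e^{\theta_i}$, as
\begin{align*}
  \EE[(\exp\theta)_i\, f_i(Y)]
  =
  \sum_{y \in \ZZ^d} e^{\theta_i}\, f_i(y)\, h(y)\, \exp\left(\dotp{y}{\theta} - A(\theta)\right).
\end{align*}

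The crucial step is to absorb the scalar $e^{\theta_i}$ into the exponential through $e^{\theta_i}\exp(\dotp{y}{\theta}) = \exp(\dotp{y + e_i}{\theta})$, so that the summand depends on $\theta$ only through the shifted point $y + e_i$. I would then substitute $z = y + e_i$, a bijection of $\ZZ^d$ onto itself, and use $\exp(\dotp{z}{\theta} - A(\theta)) = p(z;\theta)/h(z)$ to obtain
\begin{align*}
  \EE[(\exp\theta)_i\, f_i(Y)]
  =
  \sum_{z \in \ZZ^d} f_i(z - e_i)\, h(z - e_i)\, \exp\left(\dotp{z}{\theta} - A(\theta)\right)
  =
  \sum_{z \in \ZZ^d} f_i(z - e_i)\, \frac{h(z - e_i)}{h(z)}\, p(z;\theta).
\end{align*}
Recognizing the right-hand side as an expectation and adopting the notation $f_\downarrow(Y)_i = f_i(Y - e_i)$ in parallel with $h_\downarrow(Y)_i = h(Y - e_i)$, the $i$-th contribution is exactly $\EE[\,h_\downarrow(Y)_i\, f_\downarrow(Y)_i / h(Y)\,]$; summing over $i$ and collecting the coordinates into an inner product then gives the stated identity. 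Notice that, in contrast with the continuous case, no differentiability of $f$ is needed, since the shift is purely combinatorial.

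The main obstacle is not the algebra but the justification of the reindexing: rearranging the terms of the sum is legitimate only under absolute convergence, which I would secure by assuming the left-hand expectation $\EE[(\exp\theta)_i\,|f_i(Y)|]$ to be finite (and stating this integrability hypothesis explicitly). A second, more delicate point is the behaviour at the boundary of the support, i.e., at lattice points where $h$ vanishes. Whenever $h(z) = 0$ the mass $p(z;\theta)$ vanishes and the term drops out; whenever $h(z - e_i) = 0$ the factor $f_i(z - e_i)\, h(z - e_i)$ vanishes, so the ratio $h(z - e_i)/h(z)$ may be read consistently as $0$ under the convention $0/0 = 0$ without altering any term. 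Making these two points rigorous is the only substantive content beyond the index shift.
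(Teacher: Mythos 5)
The paper itself offers no proof of this lemma (it simply defers to Hudson's 1978 paper), so there is no internal argument to compare against; your index-shift computation is precisely the standard route, the discrete analogue of the integration-by-parts behind Lemma \ref{lem:hudson_stein}, and your insistence on absolute convergence to license the reindexing is a legitimate point that the statement ``holds for every mapping $f$'' silently glosses over.

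There is, however, a genuine gap in your boundary analysis, and it is not a technicality: the lemma as stated is false without an extra support hypothesis, and your $0/0$ convention does not repair it. After the substitution $z = y + e_i$, the generic term of the shifted sum is $f_i(z-e_i)\, h(z-e_i)\, \exp\left(\dotp{z}{\theta} - A(\theta)\right)$; it vanishes when $h(z-e_i)=0$, \emph{not} when $h(z)=0$. Multiplying and dividing by $h(z)$ to recognize $p(z;\theta)$ is only legitimate at points where $h(z) \neq 0$. The dangerous case is $h(z-e_i) \neq 0$ but $h(z) = 0$: such a term is present and nonzero on the left-hand side, yet it is invisible on the right-hand side, because a point of zero probability mass contributes nothing to an expectation --- contrary to your claim that it ``drops out''. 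Concretely, take the Bernoulli law of Table \ref{tab:examples} ($d=1$, $n=1$, $h = \mathds{1}_{\{0,1\}}$) and $f \equiv 1$: the left-hand side is $e^\theta$, while the right-hand side is $\EE\left[h(Y-1)/h(Y)\right] = \Pr(Y=1) = e^\theta/(1+e^\theta)$, and the two differ. The fix is to assume that the support of $h$ is closed under coordinate increments, i.e., $h(y) \neq 0$ implies $h(y+e_i) \neq 0$ for every $i$; then every nonzero term of the shifted sum has $h(z) \neq 0$, your rewriting is valid, and the rest of your argument goes through. This hypothesis holds for the Poisson and negative binomial distributions (support $\NN^d$), which are the only cases where the paper actually invokes the lemma, but it fails for the binomial, so it must be stated explicitly rather than argued away at the boundary.
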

Hudson's lemma provides an estimator of the dot product $\EE \dotp{\exp \theta}{f(Y)}$
that solely depends on $Y$ without reference to the parameter $\eta = \exp \theta$.
As a consequence, we can define a Generalized PURE (GPURE) as
\begin{align}\label{eq:gpure}
  \!
  \mathrm{GPURE}
  =
  \norm{\exp \hat{\theta}(Y)}^2
  - 2 \dotp{\frac{h_{\downarrow}(Y)}{h(Y)}}{\exp \hat{\theta}_\downarrow(Y)}
  + \dotp{\frac{h_{\downarrow}(Y)}{h(Y)}}{\left(\frac{h_{\downarrow}}{h}\right)_{\!\!\downarrow}\!\!(Y)}
\end{align}
which unbiasedly estimates $\mathrm{MSE}_\eta$
for the discrete natural exponential family%
\footnote{
  Eq.~\eqref{eq:gpure} is obtained by applying three times Lemma \ref{lem:hudson}.
}.

As for GSURE, GPURE cannot in general measure discrepancy in the parameter space of interest,
and for this reason, its applicability in real scenarios can also be limited.
However, under Poisson noise, the ``exp-natural'' space coincides with the parameter space
of interest as $\eta = \exp(\phi(\mu)) = \mu$, hence, leading to the PURE.
Another interesting case, already investigated in \cite{hudson1978nie},
is the one of noise with a negative binomial distribution
with mean $\mu$ and independent entries, for which
the ``exp-natural'' space does not match with the one of $\mu$ but with the one of the underlying
probability vector $\mathtt{p} \in [0, 1]^d$ as defined in Table \ref{tab:examples} (we have $\theta_i = \log \mathtt{p}_i$).
In such a case, GPURE reads, for $r \in \RR^+_* / \{ 1, 2 \}$, as
\begin{align}
  \underset{\text{negbin}}{\mathrm{GPURE}} =
  \norm{\hat{\mathtt{p}}(Y)}^2
  -
  2
  \sum_{i=1}^d
  \displaystyle \frac{Y_i \hat{\mathtt{p}}_i(Y_i-1)}{Y_i + r - 1}
  +
  \sum_{i=1}^d
  \displaystyle \frac{Y_i (Y_i-1)}{(Y_i + r - 1)(Y_i + r - 2)}
\end{align}
and is an unbiased estimator of $\EE\left[\norm{\hat{\mathtt{p}}(Y) - \mathtt{p}}^2\right]$.

\paragraph{Other related works.}
It is worth mentioning that there have been several works focusing
on estimating mean squared errors in other scenarios.
For instance, when $Y$ has an elliptical-contoured distribution with a finite
known covariance matrix $\Sigma$, the works of
\cite{landsman2008stein,hamada2008capm}
provide a generalization of Stein's lemma that can also be used
to estimate the risk associated to $\mu$.
In \cite{raphan2007learning}, the authors provide a versatile approach that
provides unbiased risk estimators in many cases, including,
all members of the exponential family (continuous or discrete),
the Cauchy distribution,
the Laplace distribution,
and the uniform distribution
\cite{raphan2007learning}.
The authors of \cite{luisier2012cure} use a similar approach to
design such an estimator in the case of
the non-centered $\chi^2$ distribution
\cite{luisier2012cure}.

\section{Kullback-Leibler loss estimation for the exponential family}

We now turn to our first contribution
that provides, for continuous distributions of the natural exponential family,
an unbiased estimator of the Kullback-Leibler synthesis loss.

\begin{thm}[Stein Unbiased KLS estimator] \label{thm:kls_continuous}
Assume $y \mapsto \hat{\mu}(y)$ is weakly differentiable
with essentially bounded weak partial derivatives on $\RR^d$
and $Y$ follows a distribution of the natural exponential family with natural parameter $\theta$,
provided $h$ is weakly differentiable on $\RR^d$,
the following
\begin{align*}
  \mathrm{SUKLS}
  =
  \dotp{\hat{\theta}(Y) + \frac{\nabla h(Y)}{h(Y)}}{\hat{\mu}(Y)}
  +
  \tr \pda{\hat{\mu}(y)}{y}{Y}
  - A(\hat{\theta}(Y))
\end{align*}
where $\hat{\theta}(Y) = \phi(\hat{\mu}(Y))$,
is an unbiased estimator of $\mathrm{MKLS} - A(\theta)$.
\end{thm}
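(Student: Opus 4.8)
The plan is to start from the definition of $\mathrm{MKLS}$ and rewrite the KL divergence using the Bregman form \eqref{eq:kl}. Taking $\theta_0 = \hat{\theta}(Y)$ and $\theta_1 = \theta$, I would write
\begin{align*}
  \Dd(P_{\hat{\theta}(Y)} \| P_\theta)
  = A(\theta) - A(\hat{\theta}(Y)) - \dotp{\nabla A(\hat{\theta}(Y))}{\theta - \hat{\theta}(Y)}.
\end{align*}
The crucial simplification is the identity $\nabla A(\hat{\theta}(Y)) = \hat{\mu}(Y)$, which holds because $\phi = (\nabla A)^{-1}$ and $\hat{\theta}(Y) = \phi(\hat{\mu}(Y))$, so that $\nabla A$ undoes $\phi$. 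Substituting and expanding the inner product gives
\begin{align*}
  \Dd(P_{\hat{\theta}(Y)} \| P_\theta)
  = A(\theta) - A(\hat{\theta}(Y)) - \dotp{\theta}{\hat{\mu}(Y)} + \dotp{\hat{\theta}(Y)}{\hat{\mu}(Y)}.
\end{align*}

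Next I would take expectations and move the deterministic constant $A(\theta)$ to the left-hand side, leaving
\begin{align*}
  \mathrm{MKLS} - A(\theta)
  = -\EE[A(\hat{\theta}(Y))] - \EE\dotp{\theta}{\hat{\mu}(Y)} + \EE\dotp{\hat{\theta}(Y)}{\hat{\mu}(Y)}.
\end{align*}
The only term that still references the unknown $\theta$ is $\EE\dotp{\theta}{\hat{\mu}(Y)}$, and this is exactly where the Generalized Stein's lemma (Lemma \ref{lem:hudson_stein}) enters. Applying it with $f = \hat{\mu}$ replaces this term by $-\EE\bigl[\dotp{\frac{\nabla h(Y)}{h(Y)}}{\hat{\mu}(Y)} + \tr \pda{\hat{\mu}(y)}{y}{Y}\bigr]$, which depends on $Y$ alone. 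Collecting the resulting terms reproduces precisely $\EE[\mathrm{SUKLS}]$, establishing the claim.

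The bookkeeping I expect to be the delicate part is twofold: keeping the non-symmetry of the KL divergence straight (the synthesis loss places $\hat{\theta}(Y)$ in the first argument, so it is $\nabla A$ evaluated at $\hat{\theta}(Y)$, not at $\theta$, that collapses to $\hat{\mu}(Y)$), and verifying that $\hat{\mu}$ and $h$ meet the regularity hypotheses so that Lemma \ref{lem:hudson_stein} applies. The remaining point is conceptual rather than computational: the term $A(\theta)$ cannot be recovered without knowledge of $\theta$, which is why $\mathrm{SUKLS}$ is unbiased only for $\mathrm{MKLS} - A(\theta)$; since $A(\theta)$ is an additive constant independent of the predictor, this is harmless for the purpose of comparing and selecting predictors.
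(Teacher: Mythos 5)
Your proof is correct and follows essentially the same route as the paper's: both rewrite $\mathrm{MKLS}$ via the Bregman form \eqref{eq:kl} with $\theta_0 = \hat{\theta}(Y)$, collapse $\nabla A(\hat{\theta}(Y))$ to $\hat{\mu}(Y)$, and then eliminate the single $\theta$-dependent term $\EE\dotp{\theta}{\hat{\mu}(Y)}$ by applying Lemma \ref{lem:hudson_stein} to $f = \hat{\mu}$. The only difference is that you spell out the algebra that the paper compresses into one line.
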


\begin{proof}
Remark that
$
\mathrm{MKLS}
=
\EE \left[ \dotp{\hat{\theta}(Y) - {\theta}}{\hat{\mu}(Y)} - A(\hat{\theta}(Y))\right]
+ A(\theta)
$ since $\nabla A(\hat{\theta}(Y)) = \hat{\mu}(Y)$.
Hence, Lemma \ref{lem:hudson_stein} leads to
\begin{align}
  \EE \left[ \dotp{\theta}{\hat{\mu}(Y)} \right]
  &=
  - \EE \left[ \dotp{\frac{\nabla h(Y)}{h(Y)}}{\hat{\mu}(Y)} +
    \tr \pda{\hat{\mu}(y)}{y}{Y} \right],
\end{align}
which concludes the proof.
\end{proof}

As GSURE, SUKLS can be specified for Gaussian noise,
and in this case $\mathrm{SUKLS} = (2\sigma^2)^{-1} (\mathrm{SURE} - \norm{Y}^2 + d\sigma^2)$
and the Kullback-Leibler synthesis loss boils down to the risk as
$\mathrm{MKLS} = (2\sigma^2)^{-1} \mathrm{MSE}_\mu$.
More interestingly, consider the following example of Gamma noise.
\begin{exmp}
Under Gamma noise with expectation $\mu$, shape parameter $L$ (as defined in Table \ref{tab:examples})
and independent entries,
SUKLS reads as
\begin{align}
  \underset{\text{Gamma}}{\mathrm{SUKLS}}
  =
  \sum_{i=1}^d
  \left[
  \frac{(L-1) \hat{\mu}(Y)_i}{Y_i}
  - L \log(\hat{\mu}(Y)_i)
  - L
  \right]
  +
  \tr \pda{\hat{\mu}(y)}{y}{Y}
\end{align}
which, up to a constant, and provided $L > 1$, unbiasedly estimates
\begin{align}
  \underset{\text{Gamma}}{\mathrm{MKLS}}
  =
  \sum_{i=1}^d \EE\left[L \frac{\hat{\mu}(Y)_i}{\mu_i}
    - L\log\left(\frac{\hat{\mu}(Y)_i}{\mu_i}\right)
    - L
  \right].
\end{align}
\end{exmp}
\noindent
In our experiments,
we will see that minimizing $\mathrm{MKLS}$ (or its SUKLS estimate)
leads to relevant selections, unlike
minimizing $\mathrm{MSE}_\theta$ (or its GSURE estimate).
Note that the authors of \cite{hannig2004kernel} have proposed
a consistent estimator of ${\mathrm{MKLS}}$ when $L=1$
(they did not study the case where $L>1$), their estimator
has been however designed only for kernel smoothing problems.\\

Theorem \ref{thm:kls_continuous} is a straightforward application of Lemma \ref{lem:hudson_stein}
that applies since $\mathrm{MKLS} - A(\theta)$ depends only on $\theta$
through a dot product $\dotp{\theta}{f(Y)}$ for some mappings $f$.
For discrete distributions, Lemma \ref{lem:hudson} only provides an
estimate of $\dotp{\exp(\theta)}{f(Y)}$ and hence cannot be applied
to estimate $\mathrm{MKLS}$.
Alternatively, we can focus on estimating the Kullback-Leibler analysis loss $\mathrm{MKLA}$.
To this end, a formula that provides an estimate of $\dotp{\nabla A(\theta)}{f(Y)}$ for some mappings $f$
is needed.
Of course, if $\nabla A(\theta) = \theta$ for some continuous distributions,
Lemma \ref{lem:hudson_stein} applies and can be used to design an estimator of $\mathrm{MKLA}$.
However, the only distribution
satisfying $\nabla A(\theta) = \theta$ is the normal distribution, for which
SURE can already be used to estimate $\mathrm{MKLA} = (2\sigma^2)^{-1} \mathrm{MSE}_\mu$.
More interestingly, if $\nabla A(\theta) = \exp(\theta)$ for some discrete distributions,
Lemma \ref{lem:hudson} applies and can be used to design an unbiased estimator of $\mathrm{MKLA}$.
The Poisson distribution satisfies this relation leading us to state the following
theorem.

\begin{thm}[Poisson Unbiased KLA estimator]\label{thm:pukla}
Assume $Y$ follows a Poisson distribution with expectation $\mu$
and independent entries, then
\begin{align*}
  \mathrm{PUKLA}
  =
  \norm{\hat{\mu}(Y)}_1 - \dotp{Y}{\log \hat{\mu}_\downarrow(Y)},
\end{align*}
is an unbiased estimator of $\underset{\text{Poisson}}{\mathrm{MKLA}} + \norm{\mu}_1 - \dotp{\mu}{\log \mu}$ where
\begin{align*}
\underset{\text{Poisson}}{\mathrm{MKLA}}
&= \EE \left[ \norm{\hat{\mu}(Y)}_1 - \dotp{\mu}{\log \hat{\mu}(Y) - \log \mu} \right] - \norm{\mu}_1
\end{align*}
and $\log$ is the entry-wise logarithm.
\end{thm}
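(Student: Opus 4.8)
The plan is to reduce the statement to a single application of Hudson's lemma (Lemma~\ref{lem:hudson}), mirroring the way Theorem~\ref{thm:kls_continuous} reduced to the generalized Stein lemma. First I would put $\underset{\text{Poisson}}{\mathrm{MKLA}}$ into Bregman form using \eqref{eq:kl}. For Poisson noise the natural parameter is $\theta = \log\mu$, the estimated one is $\hat{\theta}(Y) = \log\hat{\mu}(Y)$, the log-partition is $A(\theta) = \sum_i e^{\theta_i}$, and $\nabla A(\theta) = \exp\theta = \mu$. Substituting into \eqref{eq:kl} and summing entrywise gives
\[
  \Dd(P_\theta \,\|\, P_{\hat{\theta}(Y)})
  = \norm{\hat{\mu}(Y)}_1 - \norm{\mu}_1 - \dotp{\mu}{\log\hat{\mu}(Y) - \log\mu},
\]
where I use $A(\hat{\theta}(Y)) = \sum_i e^{\hat{\theta}_i(Y)} = \norm{\hat{\mu}(Y)}_1$ and $A(\theta) = \norm{\mu}_1$. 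Taking the expectation over $Y$ then reproduces exactly the displayed expression for $\underset{\text{Poisson}}{\mathrm{MKLA}}$ and confirms the claimed normalization.

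Next I would collect the target quantity. Adding $\norm{\mu}_1 - \dotp{\mu}{\log\mu}$ cancels the two $\mu$-only terms and leaves
\[
  \underset{\text{Poisson}}{\mathrm{MKLA}} + \norm{\mu}_1 - \dotp{\mu}{\log\mu}
  = \EE\big[\norm{\hat{\mu}(Y)}_1\big] - \dotp{\mu}{\EE[\log\hat{\mu}(Y)]}.
\]
Comparing with $\EE[\mathrm{PUKLA}] = \EE[\norm{\hat{\mu}(Y)}_1] - \EE\dotp{Y}{\log\hat{\mu}_\downarrow(Y)}$, the $\norm{\hat{\mu}(Y)}_1$ terms match, so the whole theorem collapses to the single identity $\EE\dotp{\mu}{\log\hat{\mu}(Y)} = \EE\dotp{Y}{\log\hat{\mu}_\downarrow(Y)}$.

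This last identity is precisely Lemma~\ref{lem:hudson} applied to $f = \log\hat{\mu}$: since $\exp\theta = \mu$, the left-hand side is $\EE\dotp{\exp\theta}{f(Y)}$ and the right-hand side is $\EE[\dotp{h_\downarrow(Y)/h(Y)}{f_\downarrow(Y)}]$, once I verify that for the Poisson base measure $h(y) = \prod_i (y_i!)^{-1}\mathds{1}_{\NN}(y_i)$ the ratio simplifies to $h_\downarrow(Y)_i/h(Y)_i = Y_i$ and that $f_\downarrow(Y) = \log\hat{\mu}_\downarrow(Y)$. The main point to check carefully is this ratio at the boundary $Y_i = 0$: there $h(Y-e_i)$ vanishes because $-1 \notin \NN$, but the accompanying weight $Y_i = 0$ annihilates the term, so the convention is consistent and no differentiability of $\hat{\mu}$ is needed — this is exactly why the result holds for an arbitrary mapping $y \mapsto \hat{\mu}(y)$, provided only that $\hat{\mu} > 0$ so $\log\hat{\mu}$ is well defined and the relevant expectations are finite. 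Granting this, Lemma~\ref{lem:hudson} closes the argument at once.
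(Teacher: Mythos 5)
Your proof is correct and follows essentially the same route as the paper: express $\mathrm{MKLA}$ via the Bregman form \eqref{eq:kl} with $\exp\theta = \mu$, then apply Hudson's lemma (Lemma~\ref{lem:hudson}) to $f = \hat{\theta} = \log\hat{\mu}$, using $h_\downarrow(Y)/h(Y) = Y$ and $f_\downarrow(Y) = \log\hat{\mu}_\downarrow(Y)$. Your explicit check of the boundary case $Y_i = 0$ is a nice addition that the paper's proof leaves implicit, but the argument is the same.
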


\begin{proof}
The expression of $\mathrm{MKLA}$ follows directly from Table \ref{tab:examples}
and Equation \eqref{eq:kl}
since $\exp \theta = \mu$. From Lemma \ref{lem:hudson}, we get
\begin{align}
\EE\left[\dotp{\mu}{\log(\hat{\mu}(Y))}\right]
=
\EE\left[\dotp{\exp \theta}{\hat{\theta}(Y)}\right]
=
\EE\left[\dotp{\frac{h_{\downarrow}(Y)}{h(Y)}}{\hat{\theta}_\downarrow(Y)}\right],
\end{align}
which concludes the proof as $h_{\downarrow}(y)/h(y)=y$ and $\hat{\theta}_\downarrow(Y)=\log \hat{\mu}_\downarrow(Y)$.
\end{proof}

With such results at hand, only the Poisson distribution admits
an unbiased estimator of the mean Kullback-Leibler analysis loss.
In order to design an estimator of $\mathrm{MKLA}$ for a larger class of natural exponential distributions,
we will make use of the following proposition.

\begin{prop}\label{prop:kl_interp}
For any probability density or mass function $y\mapsto p(y ; \theta)$
of the natural exponential family of parameter $\theta$,
the Kullback-Leibler analysis loss
associated to $y \mapsto \hat{\theta}(y)$
can be decomposed as follows
\begin{align*}
  \mathrm{MKLA}
  =
  \underbrace{
    -\EE \log \frac{p(Y; \hat{\theta}(Y))}{p(Y ; \theta)}
  }_{\text{expected data fidelity loss}}
  +
  \underbrace{\tr \Cov\left( \hat{\theta}(Y), Y \right)}_{\text{model complexity}},
\end{align*}\\[-2em]
\begin{align*}
  \qwhereq
  -\EE \log \frac{p(Y; \hat{\theta}(Y))}{p(Y ; \theta)}
  &=
  \EE\left[ A(\hat{\theta}(Y)) - A(\theta) - \dotp{Y}{\hat{\theta}(Y) - \theta} \right]\\
  \qandq \quad
  \tr \Cov\left( \hat{\theta}(Y), Y \right)
  &=
  \EE \left[ \dotp{Y - \mu}{\hat{\theta}(Y)} \right]~.
\end{align*}
\end{prop}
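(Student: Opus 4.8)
The plan is to prove this as a direct algebraic identity: I will write $\mathrm{MKLA}$ in its Bregman form, compute each of the two announced terms on its own, and check that their sum collapses back to $\mathrm{MKLA}$. The guiding structural observation is that the ``data fidelity'' term is obtained from $\mathrm{MKLA}$ by replacing the deterministic $\mu$ inside the inner product by the random $Y$, and that the ``model complexity'' term $\tr \Cov(\hat{\theta}(Y), Y)$ is precisely the correction needed to undo that replacement in expectation. To set up, I first expand $\mathrm{MKLA} = \EE[\Dd(P_\theta \| P_{\hat{\theta}(Y)})]$ by substituting $\theta_0 = \theta$ and $\theta_1 = \hat{\theta}(Y)$ into the Bregman expression \eqref{eq:kl} and using the exponential-family identity $\nabla A(\theta) = \EE[Y] = \mu$, which yields $\mathrm{MKLA} = \EE[ A(\hat{\theta}(Y)) - A(\theta) - \dotp{\mu}{\hat{\theta}(Y) - \theta} ]$.

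Next I verify the data fidelity term directly from the density \eqref{eq:klfamily}. Forming the log-ratio $\log\frac{p(Y; \hat{\theta}(Y))}{p(Y; \theta)}$ makes the base measure $h(Y)$ cancel, so that $-\log\frac{p(Y; \hat{\theta}(Y))}{p(Y; \theta)} = A(\hat{\theta}(Y)) - A(\theta) - \dotp{Y}{\hat{\theta}(Y) - \theta}$; taking expectations gives the stated formula. I then verify the model complexity term from the definition of the cross-covariance: $\tr \Cov(\hat{\theta}(Y), Y) = \EE\dotp{Y - \mu}{\hat{\theta}(Y) - \EE[\hat{\theta}(Y)]}$, and since $\EE[Y - \mu] = 0$ the centering constant $\EE[\hat{\theta}(Y)]$ contributes nothing, leaving $\EE\dotp{Y - \mu}{\hat{\theta}(Y)}$.

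Finally I add the two pieces. The cross terms $\EE\dotp{Y}{\hat{\theta}(Y)}$ appear with opposite signs and cancel, and using that $\theta$ is deterministic so $\EE\dotp{Y}{\theta} = \dotp{\mu}{\theta}$, the sum simplifies to $\EE[ A(\hat{\theta}(Y)) - A(\theta) - \dotp{\mu}{\hat{\theta}(Y) - \theta} ]$, which is exactly the expression for $\mathrm{MKLA}$ obtained above. There is no genuine obstacle here, as the proposition is essentially a bookkeeping identity; the only points demanding care are tracking which quantities are random versus deterministic (so that $\EE\dotp{Y}{\theta} = \dotp{\mu}{\theta}$) and confirming that the centering term in the covariance vanishes, so that the cross terms cancel cleanly and the two interpretations glue back together into $\mathrm{MKLA}$.
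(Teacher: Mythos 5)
Your proof is correct and is essentially the paper's own argument run in the opposite direction: the paper adds and subtracts $\dotp{Y}{\hat{\theta}(Y) - \theta}$ inside the Bregman expression for $\mathrm{MKLA}$ and then identifies the two resulting groups of terms, while you compute the two announced terms separately and check that their sum collapses to $\mathrm{MKLA}$. Both rest on exactly the same ingredients --- the cancellation of $h(Y)$ in the log-ratio, the identity $\nabla A(\theta) = \mu = \EE[Y]$, and the vanishing of the centering term in the covariance --- so this is the same proof, merely reorganized.
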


\begin{proof}
Subtracting and adding $\dotp{Y}{\hat{\theta}(Y) - \theta}$ in the MKLA definition leads to
\begin{align*}
  \mathrm{MKLA}
  &=
  \EE\left[ A(\hat{\theta}(Y)) - A(\theta) - \dotp{Y}{\hat{\theta}(Y) - \theta} + \dotp{Y - \nabla A(\theta)}{\hat{\theta}(Y) - \theta} \right].
\end{align*}
As $-\log p(Y ; \theta) = -\log h(Y) - \dotp{Y}{\theta} + A(\theta)$
and $\nabla A(\theta) = \mu = \EE[Y]$, this concludes the proof.
\end{proof}

In the same vein as
for the decomposition \eqref{eq:variational_l2},
Proposition \ref{prop:kl_interp} provides a variational interpretation
of the minimization of $\mathrm{MKLA}$,
valid for noise distributions within the exponential family.
Minimizing $\mathrm{MKLA}$ leads to a maximum {\it a posteriori} selection
promoting faithful models with low complexity.
It boils down to \eqref{eq:variational_l2} when specified for Gaussian noise.
As for the $\mathrm{MSE}$, the fidelity term can always be unbiasedly estimated,
up to an additive constant, without knowledge of $\theta$.
Only the complexity term $\tr \Cov(\hat{\theta}(Y), Y)$, which
generalizes the notion of degrees of freedom,
is required to be estimated. Except for the Poisson distribution, none of the previous lemmas
can be applied to unbiasedly estimate this term.
However, we will show that it can be biasedly estimated,
with vanishing bias
depending on both the ``smoothness'' of $\hat{\theta}$
and the behavior of the moments of $Y$.
Towards this goal, let us first recall the Delta method.

\begin{lem}[Delta method]\label{lem:delta}
Let $Y_n = \tfrac{1}{n} (Z_1 + \ldots + Z_n)$, $n\geq 1$, where
$Z_1, Z_2, \ldots$ is an infinite sequence of independent and identically distributed
random vectors in $\RR^d$ with $\EE Z_i = \mu$, $\Var[Z_i] = \Sigma$
and finite moments up to order $k \geq 3$.
Let $f : \RR^d \to \RR$ be $k$ times totally differentiable
with bounded $k$-th derivative,
then
\begin{align*}
  \EE\left[f(Y_n) - f(\mu) \right]
  &=
  \frac{1}{2n}
  \tr\left( \Sigma \pdda{f(y)}{y}{\mu}\right)
  + O(n^{-2})
  =
  O(n^{-1}).
\end{align*}
\end{lem}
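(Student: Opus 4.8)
The plan is to expand $f$ in a Taylor series about $\mu$ and take expectations term by term, exploiting that $Y_n$ is an average of i.i.d.\ vectors so that its centered moments have explicit orders in $n$. The two facts I would lean on are $\EE[Y_n] = \mu$ and, since the $Z_i$ are independent with $\Var[Z_i]=\Sigma$, $\Var[Y_n] = \Sigma/n$.

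First I would invoke Taylor's theorem: as $f$ is $k$ times totally differentiable with bounded $k$-th derivative, for every realization one may write
\begin{align*}
  f(Y_n) - f(\mu)
  =
  \sum_{j=1}^{k-1} \frac{1}{j!} D^j f(\mu)\big[(Y_n - \mu)^{\otimes j}\big]
  + R_k(Y_n),
\end{align*}
where $|R_k(Y_n)| \leq \tfrac{M}{k!}\,|Y_n - \mu|^k$ with $M = \sup\|D^k f\|$. Taking expectations, the first-order term drops out because $\EE[Y_n - \mu] = 0$, and the second-order term reproduces the announced leading contribution:
\begin{align*}
  \tfrac{1}{2} D^2 f(\mu)\big[\EE (Y_n - \mu)^{\otimes 2}\big]
  =
  \tfrac{1}{2}\tr\!\left(\Var[Y_n]\,\pdda{f(y)}{y}{\mu}\right)
  =
  \frac{1}{2n}\tr\!\left(\Sigma\,\pdda{f(y)}{y}{\mu}\right).
\end{align*}

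The heart of the argument is to show that everything beyond this is $O(n^{-2})$. For the exact terms of order $3 \leq j \leq k-1$ I would compute the centered moments of $Y_n$ directly: writing $Y_n - \mu = \tfrac1n\sum_i (Z_i - \mu)$ and expanding $(Y_n-\mu)^{\otimes j}$, independence together with $\EE[Z_i - \mu]=0$ annihilates every monomial in which some index appears exactly once. Only groupings into at most $\lfloor j/2\rfloor$ distinct indices survive, so there are $O(n^{\lfloor j/2\rfloor})$ nonzero summands and hence $\EE (Y_n - \mu)^{\otimes j} = O(n^{\lfloor j/2\rfloor - j}) = O(n^{-\lceil j/2\rceil})$, which is $O(n^{-2})$ as soon as $j \geq 3$. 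The same bookkeeping applied to $\EE|Y_n-\mu|^k$ controls $\EE[R_k(Y_n)]$ through the bounded $k$-th derivative. Combining the vanishing first-order term, the second-order main term, and these $O(n^{-2})$ contributions yields both equalities, the weaker $O(n^{-1})$ being immediate once the leading term is $O(n^{-1})$.

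I expect the delicate point to be the sharpness of the remainder bound. The naive estimate $|R_k| \leq \tfrac{M}{k!}|Y_n-\mu|^k$ only gives $\EE|Y_n-\mu|^k = O(n^{-\lceil k/2\rceil})$, which at the borderline $k=3$ is merely $O(n^{-3/2})$ rather than $O(n^{-2})$; the gap reflects the difference between the absolute third moment ($O(n^{-3/2})$) and the signed third central moment, which collapses to $O(n^{-2})$ thanks to the cancellation described above. To recover the advertised $O(n^{-2})$ I would therefore isolate the exact third-order Taylor term (whose expectation is $O(n^{-2})$ by the moment computation) before bounding what remains by the $k$-th derivative, so that the crude absolute-moment estimate is only applied at order $\geq 4$, where it already delivers $O(n^{-2})$.
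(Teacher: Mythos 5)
Your overall route---Taylor expansion about $\mu$ plus order-counting for the centered moments of a sample mean---is in substance the right one: the paper itself gives no proof, disposing of the lemma in one sentence as ``a direct $d$-dimensional extension'' of Lehmann (1983, Theorem~5.1a, p.~109), and your argument is exactly the standard proof underlying that citation. Your bookkeeping for the exact terms is correct: the first-order term vanishes, the second-order term produces $\frac{1}{2n}\tr\big(\Sigma\,\nabla^2 f(\mu)\big)$, and for $3\le j\le k-1$ every surviving monomial must use each index at least twice, so $\EE\big[(Y_n-\mu)^{\otimes j}\big]=O(n^{\lfloor j/2\rfloor-j})=O(n^{-2})$. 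Combined with the Lagrange remainder bound $\EE|R_k|\le \frac{M}{k!}\EE|Y_n-\mu|^k=O(n^{-k/2})$ (note it is $n^{-k/2}$, not the $n^{-\lceil k/2\rceil}$ you wrote: the ceiling gain comes from cancellation in \emph{signed} moments and is unavailable for absolute moments, as your own $k=3$ remark implicitly acknowledges), this yields a complete proof for every $k\ge 4$.

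The genuine gap is at the borderline $k=3$, precisely the case your final paragraph tries to repair, and the repair fails. After isolating the exact third-order term $\frac16 D^3f(\mu)\big[(Y_n-\mu)^{\otimes 3}\big]$ (whose expectation is indeed $O(n^{-2})$), what remains is $\frac16\big(D^3f(\xi_n)-D^3f(\mu)\big)\big[(Y_n-\mu)^{\otimes 3}\big]$ with $\xi_n$ on the segment $[\mu,Y_n]$; there is no fourth derivative to turn this into an ``order $\ge 4$'' remainder, and the only bound the hypotheses give is $\frac{M}{3}\EE|Y_n-\mu|^3=O(n^{-3/2})$. Moreover, no cleverer argument can close the gap, because the $O(n^{-2})$ claim is false at $k=3$ as stated. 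Take $d=1$, $\mu=0$, $Z_i$ Rademacher, and $f(y)=|y|^{7/2}$ for $|y|\le 1$, extended smoothly: then $f$ is three times totally differentiable with bounded third derivative and $f'(0)=f''(0)=f'''(0)=0$, yet $\EE\big[f(Y_n)-f(0)\big]=\EE|Y_n|^{7/2}\asymp n^{-7/4}$, which is not $O(n^{-2})$. So for $k=3$ only the weaker conclusions $O(n^{-3/2})$, hence the displayed $O(n^{-1})$, are provable. This is not purely a defect of your write-up: the lemma itself over-extends the cited result, since Lehmann's theorem assumes four derivatives with bounded fourth derivative and finite fourth moments, i.e.\ exactly the regime $k\ge4$ where your proof works. (For the paper's downstream use in Theorem~3 the damage is mild: an $O(n^{-3/2})$ remainder still gives a vanishing, though slower, $O(n^{-1/2})$ bias after multiplication by $n$.)
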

\noindent
Lemma \ref{lem:delta} is a direct $d$-dimensional extension of
\cite{lehmann1983theory} (Theorem~5.1a, page 109),
that allows us to introduce our biased estimator of $\mathrm{MKLA}$.

\begin{thm}[Delta KLA estimator]\label{thm:dkla}
Let $Y_n = \tfrac{1}{n} (Z_1 + \ldots + Z_n)$, $n \geq 1$, where
$Z_1, Z_2, \ldots$ is an infinite sequence of independent
random vectors in $\RR^d$ identically distributed within
the natural exponential family with natural parameter $\theta$,
log-partition function $A$, expectation $\mu$, variance function $\Lambda$
and finite moments up to order $k \geq 3$.
As a result, the distribution of $Y_n$ is also in
the natural exponential family parametrized
by $\theta_n = n \theta$ with log-partition function $A_n(\theta_n) = nA(\theta_n/n)$,
expectation $\mu$ and variance function $\Lambda_n = \Lambda / n$.
Provided $\hat{\theta}_n$ reads as $\hat{\theta}_n = n\hat{\theta}$, and
$\hat{\theta} : \RR^d \to \RR^d$ is $k$ times totally differentiable
with bounded $k$-th derivative,
then
\begin{align*}
  \mathrm{DKLA}_n &=
  A_n(\hat{\theta}_n(Y_n)) - \dotp{Y_n}{\hat{\theta}_n(Y_n)}
  +
  \tr\left(\Lambda_n(Y_n) \pda{\hat{\theta}_n(y)}{y}{Y_n}\right)\\
  \text{satisfies} \quad
  \EE \mathrm{DKLA}_n
  &=
  \mathrm{MKLA}_n - \dotp{\mu}{\theta_n} + A_n(\theta_n)
  + O(n^{-1})
\end{align*}
where $\mathrm{MKLA}_n$ is the KL analysis loss associated
to $\hat{\theta}_n$ with respect to $\theta_n$.
\end{thm}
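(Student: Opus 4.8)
The plan is to peel off the data-fidelity part exactly using Proposition~\ref{prop:kl_interp}, and then to reduce the whole theorem to a single approximate Stein-type identity for the complexity term, which I would control by a Taylor expansion around $\mu$ together with the Delta method. First I would apply Proposition~\ref{prop:kl_interp} to the family of $Y_n$ (which the theorem has shown to be a natural exponential family with parameter $\theta_n$, log-partition $A_n$ and expectation $\mu$). Since $\EE\dotp{Y_n}{\theta_n}=\dotp{\mu}{\theta_n}$, the proposition yields the \emph{exact} identity
\begin{align*}
  \mathrm{MKLA}_n - \dotp{\mu}{\theta_n} + A_n(\theta_n)
  =
  \EE\left[A_n(\hat{\theta}_n(Y_n)) - \dotp{Y_n}{\hat{\theta}_n(Y_n)}\right]
  + \tr\Cov(\hat{\theta}_n(Y_n), Y_n).
\end{align*}
The first two terms on the right coincide with the first two terms of $\mathrm{DKLA}_n$, so subtracting reduces the statement to the single estimate
\begin{align*}
  \EE\left[\tr\left(\Lambda_n(Y_n)\pda{\hat{\theta}_n(y)}{y}{Y_n}\right)\right]
  - \tr\Cov(\hat{\theta}_n(Y_n), Y_n) = O(n^{-1}),
\end{align*}
which isolates the one place where a bias can appear.

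Next I would analyse the true complexity $\tr\Cov(\hat{\theta}_n(Y_n),Y_n)=\EE\dotp{Y_n-\mu}{\hat{\theta}_n(Y_n)}$, where the last equality is again Proposition~\ref{prop:kl_interp}. Writing $U=Y_n-\mu$ and noting $\EE[U]=0$, I may replace $\hat{\theta}_n(Y_n)$ by $\hat{\theta}_n(Y_n)-\hat{\theta}_n(\mu)$ and Taylor-expand $\hat{\theta}_n$ about $\mu$ to order $k-1$. The order-one term gives the leading contribution
\begin{align*}
  \EE\dotp{U}{\left(\pda{\hat{\theta}_n(y)}{y}{\mu}\right)U}
  = \tr\left(\Lambda_n(\mu)\pda{\hat{\theta}_n(y)}{y}{\mu}\right),
\end{align*}
using $\EE[UU^t]=\Var[Y_n]=\Lambda_n(\mu)$. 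The crucial bookkeeping is that every derivative of $\hat{\theta}_n=n\hat{\theta}$ carries a factor $n$, whereas a degree-$p$ central moment of the sample mean $Y_n$ is $O(n^{-\lceil p/2\rceil})$; the order-$m$ term has $U$-degree $m{+}1$ and is therefore $O(n^{\,1-\lceil(m+1)/2\rceil})$, which is $O(n^{-1})$ for every $m\geq 2$, while the order-$0$ term vanishes by $\EE[U]=0$. The Taylor remainder, bounded by the bounded $k$-th derivative of $\hat{\theta}$ times $\|U\|^{k}$, contributes $O(n)\cdot O(n^{-\lceil(k+1)/2\rceil})=O(n^{-1})$ for $k\geq 3$. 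Hence $\tr\Cov(\hat{\theta}_n(Y_n),Y_n)=\tr(\Lambda_n(\mu)\,\pda{\hat{\theta}_n(y)}{y}{\mu})+O(n^{-1})$.

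Then I would treat the surrogate complexity term. The key observation is that it is in fact $n$-independent in form: since $\Lambda_n=\Lambda/n$ and $\pda{\hat{\theta}_n(y)}{y}{}=n\,\pda{\hat{\theta}(y)}{y}{}$, the two factors of $n$ cancel and the integrand reduces to the scalar $g(Y_n):=\tr(\Lambda(Y_n)\,\pda{\hat{\theta}(y)}{y}{Y_n})$. Applying the Delta method (Lemma~\ref{lem:delta}) to $g$ gives $\EE[g(Y_n)]=g(\mu)+O(n^{-1})$, and $g(\mu)=\tr(\Lambda(\mu)\,\pda{\hat{\theta}(y)}{y}{\mu})=\tr(\Lambda_n(\mu)\,\pda{\hat{\theta}_n(y)}{y}{\mu})$ is exactly the leading term found above. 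Combining the two previous steps, both the true complexity and its DKLA surrogate equal this common leading term up to $O(n^{-1})$, so their difference is $O(n^{-1})$, which is precisely the reduction from the first step and completes the argument.

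The main obstacle I anticipate is the bookkeeping in the second step: verifying rigorously that the single factor $n$ produced by differentiating $\hat{\theta}_n$ is always beaten by the faster decay of the higher central moments of the sample mean, and controlling the Taylor remainder. This requires moments of $Y_n$ up to order $k+1$, which are available here because the exponential-family moment generating function $\EE\exp\dotp{s}{Y}=\exp(A(\theta+s)-A(\theta))$ is finite for $s$ near the origin at an interior $\theta$, so all moments exist. A secondary technical point is that the variance function $\Lambda$ may have unbounded derivatives, so the application of Lemma~\ref{lem:delta} to $g$ should be justified using the sharp concentration of $Y_n$ about $\mu$, whose tails decay fast enough to absorb any polynomial growth of $\Lambda$ and its derivatives away from $\mu$.
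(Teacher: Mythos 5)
Your proposal is correct and follows essentially the same route as the paper: the same reduction via Proposition~\ref{prop:kl_interp} to comparing $\EE[\tr(\Lambda_n(Y_n)\,\partial\hat{\theta}_n(Y_n))]$ with $\tr\Cov(\hat{\theta}_n(Y_n),Y_n)$, the same cancellation of the factors of $n$, and the same first-order Delta-method step for the surrogate term. The only difference is that where you Taylor-expand $\dotp{Y_n-\mu}{\hat{\theta}_n(Y_n)}$ by hand with central-moment bookkeeping, the paper simply invokes the second-order form of Lemma~\ref{lem:delta} applied to $f(y)=\dotp{\hat{\theta}(y)}{y-\mu}$ (using $f(\mu)=0$ and $\nabla^2 f(\mu)=2\,\partial\hat{\theta}(\mu)$), which is the same computation; your closing remarks on $(k{+}1)$-th moments and on the unbounded derivatives of $\Lambda$ address regularity details the paper glosses over.
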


\begin{proof}
Let $f(y) = \dotp{\hat{\theta}(y)}{y - \mu}$.
We have $f(\mu) = 0$ and $\pdda{f(y)}{y}{\mu} = 2 \pda{\hat{\theta}(y)}{y}{\mu}$.
Under the assumptions on $\hat{\theta}$,
the second-order approximation of Lemma \ref{lem:delta} applies
\begin{align}
  \!\tr \Cov(\hat{\theta}(Y_n), Y_n)
  \triangleq
  \EE\left[ f(Y_n) - f(\mu) \right]
  =
  \frac{1}{n}\tr\left( \Lambda(\mu) \pda{\hat{\theta}(y)}{y}{\mu}\right)
  + O(n^{-2})~.
\end{align}
Moreover, under the assumptions on $\hat{\theta}$
and as $\Lambda$ is in $\Cc^\infty$,
the first-order approximation of Lemma \ref{lem:delta} applies and
\begin{align}
  \EE\left[\tr\left( \Lambda(Y_n) \pda{\hat{\theta}(y)}{y}{Y_n} \right)\right]
  =
  \tr\left( \Lambda(\mu) \pda{\hat{\theta}(y)}{y}{\mu}\right)
  + O(n^{-1})~.
\end{align}
Subsequently, we have
\begin{align}
\EE
  &\mathrm{DKLA}_n -
\mathrm{MKLA}_n + \dotp{\nabla A_n(\theta_n)}{\theta_n} - A_n(\theta_n)
\nonumber
\\
&=
\EE\left[
  \tr\left( \Lambda_n(Y_n) \pda{\hat{\theta}_n(y)}{y}{Y_n}\right)
  - \tr \Cov\left( \hat{\theta}_n(Y_n), Y_n \right)
  \right]\\
&=
n
\EE\left[
  \frac{1}{n} \tr\left( \Lambda(Y_n) \pda{\hat{\theta}(y)}{y}{Y_n}\right)
  - \tr \Cov\left( \hat{\theta}(Y_n), Y_n \right)
  \right] = O(n^{-1})
\end{align}
and as $\nabla A_n(\theta_n) = \mu$,
this concludes the proof.
\end{proof}

It is worth mentioning that Theorem \ref{thm:dkla} can be applied to Gaussian noise,
with DKLA boiling down to SURE, as
$\mathrm{DKLA} =
(2 \sigma^2)^{-1}(\mathrm{SURE} - \norm{Y}^2 + d\sigma^2)
$.
However, the conclusion is not as strong,
as by virtue of Lemma \ref{lem:stein}, DKLA would be in fact an unbiased estimator
provided only that $\hat \mu$ is weakly differentiable.
More interestingly, consider the two following examples.

\begin{exmp}
Gamma random vectors $Y_n$ with expectation $\mu \in (\RR^+_*)^d$ and
shape parameter $L_n=n$ (as defined in Table \ref{tab:examples})
results from the sample mean of $n$ independent exponential random vectors
with expectation $\mu$
(entries of the vectors are supposed to be independent).
As exponential random vectors have finite moments,
provided $\hat{\mu}$ is sufficiently smooth and since $\phi$ is continuously
differentiable in $(\RR^+_*)^d$,
Theorem \ref{thm:dkla} applies and we get
\begin{align}
  \underset{\text{Gamma}\;\;}{\mathrm{DKLA}_n} &=
  \sum_{i=1}^d
  -L_n \log \hat{\mu}_i(Y_n)
  + \frac{L_n (Y_n)_i}{\hat{\mu}_i(Y_n)}
  +
  \frac{(Y_n)_i^2}{\hat{\mu}_i(Y_n)^2} \pda{\hat{\mu}_i(y)}{y_i}{Y_n}
  \nonumber\\
  \text{satisfies} \quad
  \EE \underset{\text{Gamma}\;\;}{\mathrm{DKLA}_n}
  &=
  \underset{\text{Gamma}\;\;}{\mathrm{MKLA}_n}
  + L_n \sum_{i=1}^d \log(\mu_i) - L_n
  + O(n^{-1})
  \\
  \qwhereq
  \underset{\text{Gamma}\;\;}{\mathrm{MKLA}_n}
  &=
  L_n
  \sum_{i=1}^d
  \EE \left[
  - \log(\hat{\mu}_i(Y_n)) + \frac{\mu_i}{\hat{\mu}_i(Y_n)} + \log(\mu_i) - 1
  \right].
  \nonumber
\end{align}
\end{exmp}

\begin{exmp}
Consider $Y_n$ the sample mean of $n$ independent
Poisson random vectors with expectation $\mu \in (\RR^+_*)^d$.
We have that $Y_n$, for all $n$, belongs to the natural exponential family
with $A_n(\theta_n) = n\exp(\theta_n/n)$ and $\theta_n = n \log \mu$
(entries of the vectors are supposed to be independent).
As Poisson random vectors have finite moments, provided $\hat{\mu}$ is sufficiently smooth
and since $\phi$ is continuously differentiable in $(\RR^+_*)^d$,
Theorem \ref{thm:dkla} applies and we get
\begin{align}
  \underset{\text{Poisson}\;\;}{\mathrm{DKLA}_n}
  &=
  n\norm{\hat{\mu}(Y_n)}_1 - \dotp{Y_n}{n \log \hat{\mu}(Y_n)
    + \diag\left( \pda{\log \hat{\mu}(y)}{y}{Y_n} \right)}
  \nonumber
  \\
  \text{satisfies} \quad
  \EE \underset{\text{Poisson}\;\;}{\mathrm{DKLA}_n}
  &= \underset{\text{Poisson}\;\;}{\mathrm{MKLA}_n}
  - n \dotp{\mu}{\log \mu} + n \norm{\mu}_1
  +
  O(n^{-1})
  \\
  \qwhereq
  \underset{\text{Poisson}\;\;}{\mathrm{MKLA}_n}
  &= n \EE \left[ \norm{\hat{\mu}(Y_n)}_1 - \dotp{\mu}{\log \hat{\mu}(Y_n) - \log \mu} - \norm{\mu}_1 \right].
  \nonumber
\end{align}
Interestingly, remark that
$\mathrm{PUKLA}(\hat{\mu}, Y) \approx \mathrm{DKLA}(\hat{\mu}, Y)$,
as soon as we have both $\hat{\mu}(Y-1) \approx \hat{\mu}(Y) - \hat{\mu}'(Y)$
and $|\hat{\mu}(Y)| \gg |\hat{\mu}'(Y)|$.
\end{exmp}

\section{Reliability study}

In this section, we aim at studying and comparing the sensitivity of the
previously studied risk estimators. Little is known about the variance of SURE:
$\Var[\mathrm{SURE}] = \EE\left[\left(\mathrm{SURE} - \mathrm{MSE}\right)^2\right]$.
It is in general an intricate problem and some studies \cite{pesquet-deconv,luisier-surelet} focus
instead on the reliability $\EE\left[\left(\mathrm{SURE} - \mathrm{SE} \right)^2\right]$
where $\mathrm{SE} = \norm{\mu - \hat{\mu}(Y)}^2$ (note that $\mathrm{MSE} = \EE[\mathrm{SE}]$).
Here, we do not aim at providing tight bounds on the reliability
as this would require specific extra assumptions for each pair of loss functions and estimators.
The next proposition provides only crude bounds on the reliability of each estimator.

\begin{prop}\label{prop:reliabilitis}
Assume $y \mapsto \hat{\theta}(y)$ is weakly differentiable.
Then, provided the following quantities are finite, we have
{\small%
\begin{align*}
  \frac{1}{2} \EE\left[\left(\overline{\mathrm{GSURE}}
  - \overline{\mathrm{SE}}_\theta
  \right)^2\right]^{1/2}
  &\leq
  \EE\left[
    \dotp{\frac{\nabla h(Y)}{h(Y)} + \theta}{\hat{\theta}(Y)}^2
    \right]^{1/2}
  \hspace{-0.5cm}&&+
  \EE\left[
    \left(
    \tr \pda{\hat{\theta}(y)}{y}{Y}
    \right)^2
  \right]^{1/2}\\
  \EE\left[\left(\overline{\mathrm{SUKLS}}
  - \overline{\mathrm{KLS}}
  \right)^2\right]^{1/2}
  &\leq
  \EE\left[
    \dotp{\frac{\nabla h(Y)}{h(Y)} + \theta}{\hat{\mu}(Y)}^2
  \right]^{1/2}
  \hspace{-0.5cm}&&+
  \EE\left[
    \left(
    \tr \pda{\hat{\mu}(y)}{y}{Y}
    \right)^2
  \right]^{1/2}\\
  \frac{1}{2}
  \EE\left[\left(\overline{\mathrm{PURE}}
  - \overline{\mathrm{SE}}_\mu
  \right)^2\right]^{1/2}
  &\leq
  \EE\left[
    \dotp{\mu}{\hat{\mu}(Y)}^2
  \right]^{1/2}
  \hspace{-0.5cm}&&+
  \EE\left[
    \dotp{Y}{\hat{\mu}_\downarrow(Y)}^2
    \right]^{1/2}\\
  \EE\left[\left(\overline{\mathrm{PUKLA}}
  - \overline{\mathrm{KLA}}
  \right)^2\right]^{1/2}
  &\leq
  \EE\left[
  \dotp{\mu}{\log \hat{\mu}(Y)}^2
  \right]^{1/2}
  \hspace{-0.5cm}&&+
  \EE\left[
    \dotp{Y}{\log \hat{\mu}_\downarrow(Y)}^2
  \right]^{1/2}\\
  \EE\left[\left(\overline{\mathrm{DKLA}}
  - \overline{\mathrm{KLA}}
  \right)^2\right]^{1/2}
  &\leq
  \EE\left[\dotp{Y - \mu}{\hat{\theta}(Y)}^2\right]^{1/2}
  \hspace{-0.5cm}&&+
  \EE\left[\tr\left(\Lambda(Y) \pda{\hat{\theta}(y)}{y}{Y}\right)^2\right]^{1/2}
\end{align*}}%
where $\mathrm{KLA} = \Dd(P_\theta \| P_{\hat{\theta}(Y)})$ (note that $\EE[\mathrm{KLA}] = \mathrm{MKLA}$),
and $\mathrm{KLS}$ is defined similarly.
The over-line refers to quantities for which additive constant
with respect to $\hat{\mu}(Y)$ are skipped, e.g.,
$\overline{\mathrm{SE}}_\mu = \mathrm{SE}_\mu - \norm{\mu}^2 = \norm{\hat{\mu}(Y)}^2 - 2 \dotp{\hat{\mu}(Y)}{\mu}$
and
$\overline{\mathrm{KLA}} = \mathrm{KLA} + A(\theta) - \dotp{\nabla A(\theta)}{\theta}$.
\end{prop}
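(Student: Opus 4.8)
The plan is to handle all five bounds through a single mechanism: write each centered estimator $\overline{\mathrm{EST}}$ and its centered loss $\overline{\mathrm{LOSS}}$ as explicit functions of $Y$, subtract so that their shared $\hat{\mu}(Y)$-only terms cancel, and bound the remaining two-term expression by Minkowski's inequality for the $L^2(P_\theta)$ norm $X \mapsto \EE[X^2]^{1/2}$. The only genuine inequality in the whole argument is Minkowski; everything else is an algebraic cancellation dictated by the over-line convention.

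First I would expand each loss using the Bregman representation \eqref{eq:kl} together with $\nabla A(\hat{\theta}(Y)) = \hat{\mu}(Y)$ and $\nabla A(\theta) = \mu$. For the analysis loss this gives $\mathrm{KLA} = A(\hat{\theta}(Y)) - A(\theta) - \dotp{\mu}{\hat{\theta}(Y) - \theta}$, so that discarding the $\hat{\mu}(Y)$-independent constant $-A(\theta) + \dotp{\mu}{\theta}$ yields $\overline{\mathrm{KLA}} = A(\hat{\theta}(Y)) - \dotp{\mu}{\hat{\theta}(Y)}$, exactly as stated; similarly $\overline{\mathrm{SE}}_\theta = \norm{\hat{\theta}(Y)}^2 - 2\dotp{\theta}{\hat{\theta}(Y)}$ and $\overline{\mathrm{KLS}} = -A(\hat{\theta}(Y)) + \dotp{\hat{\mu}(Y)}{\hat{\theta}(Y) - \theta}$. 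On the estimator side the over-line simply drops the terms that do not depend on $\hat{\mu}(Y)$, namely $\dotp{Y}{Y-1}$ in PURE and the base-measure Hessian $h(Y)^{-1}\tr\pdda{h(y)}{y}{Y}$ in GSURE, while SUKLS, PUKLA and DKLA carry no such constant and are left unchanged.

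The key observation is that each estimator is engineered to reproduce its loss after replacing the $\mu$- or $\theta$-dependent pieces by estimable surrogates, so in $\overline{\mathrm{EST}} - \overline{\mathrm{LOSS}}$ every principal term cancels: the quadratic $\norm{\hat{\theta}(Y)}^2$ (resp. $\norm{\hat{\mu}(Y)}^2$) in the MSE lines, and both $A(\hat{\theta}(Y))$ and $\dotp{\hat{\theta}(Y)}{\hat{\mu}(Y)}$ in the KL lines. What survives is a sum of precisely two terms. Concretely, $\tfrac{1}{2}(\overline{\mathrm{GSURE}} - \overline{\mathrm{SE}}_\theta) = \dotp{\nabla h(Y)/h(Y) + \theta}{\hat{\theta}(Y)} + \tr\pda{\hat{\theta}(y)}{y}{Y}$, $\overline{\mathrm{SUKLS}} - \overline{\mathrm{KLS}} = \dotp{\nabla h(Y)/h(Y) + \theta}{\hat{\mu}(Y)} + \tr\pda{\hat{\mu}(y)}{y}{Y}$, and $\overline{\mathrm{DKLA}} - \overline{\mathrm{KLA}} = -\dotp{Y - \mu}{\hat{\theta}(Y)} + \tr(\Lambda(Y)\pda{\hat{\theta}(y)}{y}{Y})$; the PURE and PUKLA cases follow identically using $h_\downarrow(Y)/h(Y) = Y$ for the Poisson base measure, producing the pairs $\dotp{\mu}{\hat{\mu}(Y)},\,\dotp{Y}{\hat{\mu}_\downarrow(Y)}$ and $\dotp{\mu}{\log\hat{\mu}(Y)},\,\dotp{Y}{\log\hat{\mu}_\downarrow(Y)}$. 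The factors $\tfrac{1}{2}$ on the GSURE and PURE lines are exactly the factor $2$ multiplying the correction terms in those two estimators.

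Applying Minkowski's inequality $\EE[(a+b)^2]^{1/2} \le \EE[a^2]^{1/2} + \EE[b^2]^{1/2}$ to each two-term remainder then gives the five stated bounds; since $\EE[(-X)^2] = \EE[X^2]$ the signs are immaterial, so the negative signs above do not affect the right-hand sides. I do not anticipate any real obstacle here: no lemma is invoked beyond Minkowski, and the substance is purely the bookkeeping of the cancellations. The one point demanding care is the consistent use of the over-line convention — correctly deciding, in each of the five pairs, which terms are constant in $\hat{\mu}(Y)$ and are therefore discarded — so that the principal terms line up and cancel cleanly. Finiteness of the displayed expectations, which is assumed in the statement, ensures each $L^2(P_\theta)$ norm is well defined and hence that Minkowski's inequality applies.
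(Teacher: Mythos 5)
Your proposal is correct and matches the paper's argument: the paper's entire proof is the one-line remark that the bounds are ``a straightforward consequence of Cauchy-Schwartz's inequality,'' which is exactly your mechanism, since the $L^2$ triangle (Minkowski) inequality you invoke is the Cauchy--Schwarz consequence being alluded to, applied after the same cancellation of the $\hat{\mu}(Y)$-dependent principal terms under the over-line convention. Your write-up simply makes explicit the bookkeeping (including the factors $\tfrac12$ on the GSURE and PURE lines) that the paper leaves to the reader, and all five of your cancellations check out.
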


\begin{proof}
This is a straightforward consequence of Cauchy-Schwartz's inequality.
\end{proof}

Proposition \ref{prop:reliabilitis} allows us to compare the relative
sensitivities of the different estimators. Comparing GSURE and SUKLS,
one can notice that the bounds are similar but the first one is controlled by $\hat{\theta}(Y)$
while the second one is controlled by $\hat{\mu}(Y)$.
While it is difficult to make a general
statement, we believe SUKLS estimates might be more stable than GSURE
since $\hat{\mu}(Y)$ has usually better control than $\hat{\theta}(Y)$,
given the non-linearity of the canonical link function $\phi$.

\section{Implementation details for the proposed estimators}

In this section, we explain how the proposed risk estimators
can be evaluated in practice within a reasonable computation time.\\

All risk estimators designed for continuous distributions
rely on the computation of
$\tr\left[ g(y) \left.\frac{\partial {f}(y)}{\partial y}\right|_{y} \right]$
for some mappings $g : \RR^d \to \RR^{d \times d}$ and $f : \RR^d \to \RR^d$.
For instance, SURE requires to compute such a a quantity
with $g(y) = \Id_d$ and $f = \hat{\mu}$ (see eq.~\eqref{eq:sure}).
In general, the computation of these terms requires at least $O(d^2)$ operations
and thus prevents the use of such risk estimators in practice.
Fortunately, following \cite{girard1989fast,ramani2008montecarlosure},
we can approximate such terms by using Monte-Carlo simulations,
thanks to the following relation
\begin{align}
  \tr\left[ g(y) \left.\frac{\partial {f}(y)}{\partial y}\right|_{y} \right]
  =
  \EE \dotp{\zeta}{g(y) \left.\frac{\partial {f}(y)}{\partial y}\right|_{y} \zeta}
  \quad \text{for} \quad
  \zeta \sim \Nn(0, \Id_d),
\end{align}
where the directional derivatives in the direction $\zeta \in \RR^d$
can be computed by using finite differences or algorithm differentiations
as described in \cite{deledalle2014stein}.
This leads in general to a much faster evaluation in $O(d)$ operations.\\

In the Poisson setting,
risk estimators rely on the computation of
$
\dotp{y}{f_\downarrow(y)}
$
for some mapping $f : \RR^d \to \RR^d$.
For instance, PUKLA requires to compute such a quantity
with $f = -\log \hat{\mu}$ (see Theorem \eqref{thm:pukla}).
Again, the computation of such terms requires at least $O(d^2)$ operations in general.
Based on first order expansions, we have empirically chosen to perform
Monte-Carlo simulations on the following approximation
\begin{align}
  \dotp{y}{f_\downarrow(y)}
  \approx
  \dotp{y}{
    f(y) - \diag\left( \left(\pda{f(y)}{y}{y} \zeta \right) \zeta^t\right)
  },
\end{align}
where $\zeta \in \{-1, +1\}^d$ is Bernoulli distributed with $\mathtt{p}=0.5$.
In our numerical experiments, this approximation led to $O(d)$ operations and
satisfactory results even though $f$ was chosen to be non-linear.
This approximation clearly deserves more attention but is considered
here to be beyond the scope of this study.

\section{Numerical experiments}
\label{sec:application}

In this section,
we will perform numerical experiments showing the interest
of the proposed Kullback-Leibler risk estimators in two different applications.

\subsection{Application to image denoising}

We first consider that $Y$ and $\mu$ are $d$ dimensional vectors representing images
on a discrete grid of $d$ pixels, such that entries with index $i$ are located at pixel
location $\delta_i \in \Delta \subset \ZZ^2$.
A realization $y$ of $Y$ represents a noisy observation of the image $\mu$.
The estimate $\hat{\mu}$ of $\mu$ is a denoised version of $y$.

\paragraph{Performance evaluation.}
In order to evaluate the proposed loss functions and their
estimates, visual inspection will be considered to assess the image quality
in terms of noise variance reduction and image content preservation.
In order to provide an objective measure of performance,
taking into account heteroscedasticity and tails of the noise, we will evaluate
the mean normalized absolute deviation error defined
as $\mathrm{MNAE} =  d^{-1} \sqrt{\pi/2} \norm{\Lambda(\mu)^{-1/2} (\mu - \hat{\mu}(Y))}_1$.
The MNAE measures to which extent $\hat{\mu}(Y)$ might belong to a confident interval
around $\mu$ with dispersion related to $\Lambda(\mu)$.
The MNAE is expected to be $1$ when $\hat{\mu}(Y) \sim \Nn(\mu, \Lambda(\mu))$,
and should get closer to $0$ when $\hat{\mu}(Y)$ improves on $Y$ itself.

\paragraph{Simulations in linear filtering.}
We consider here that $\hat{\mu}$ is the linear filter
\begin{align}
  \hat{\mu}(y) = W y \qwithq
  W_{i,j} = \frac{\exp(-\norm{\delta_i-\delta_j}^2/\tau^2)}{\sum_j \exp(-\norm{\delta_i-\delta_j}^2/\tau^2)},
\end{align}
where $W \in \RR^{d \times d}$
is a circulant matrix encoding a discrete convolution with a Gaussian kernel of bandwidth $\tau>0$.
In this context,
we will evaluate the relevance
of the different proposed loss functions and their estimates as objectives
to select a bandwidth $\tau$ offering a satisfying denoising.
\\

\begin{figure}[!t]
\centering%
\subfigure[Original image $\mu$]{\includegraphics[width=0.32\linewidth, viewport=125 339 380 432, clip]{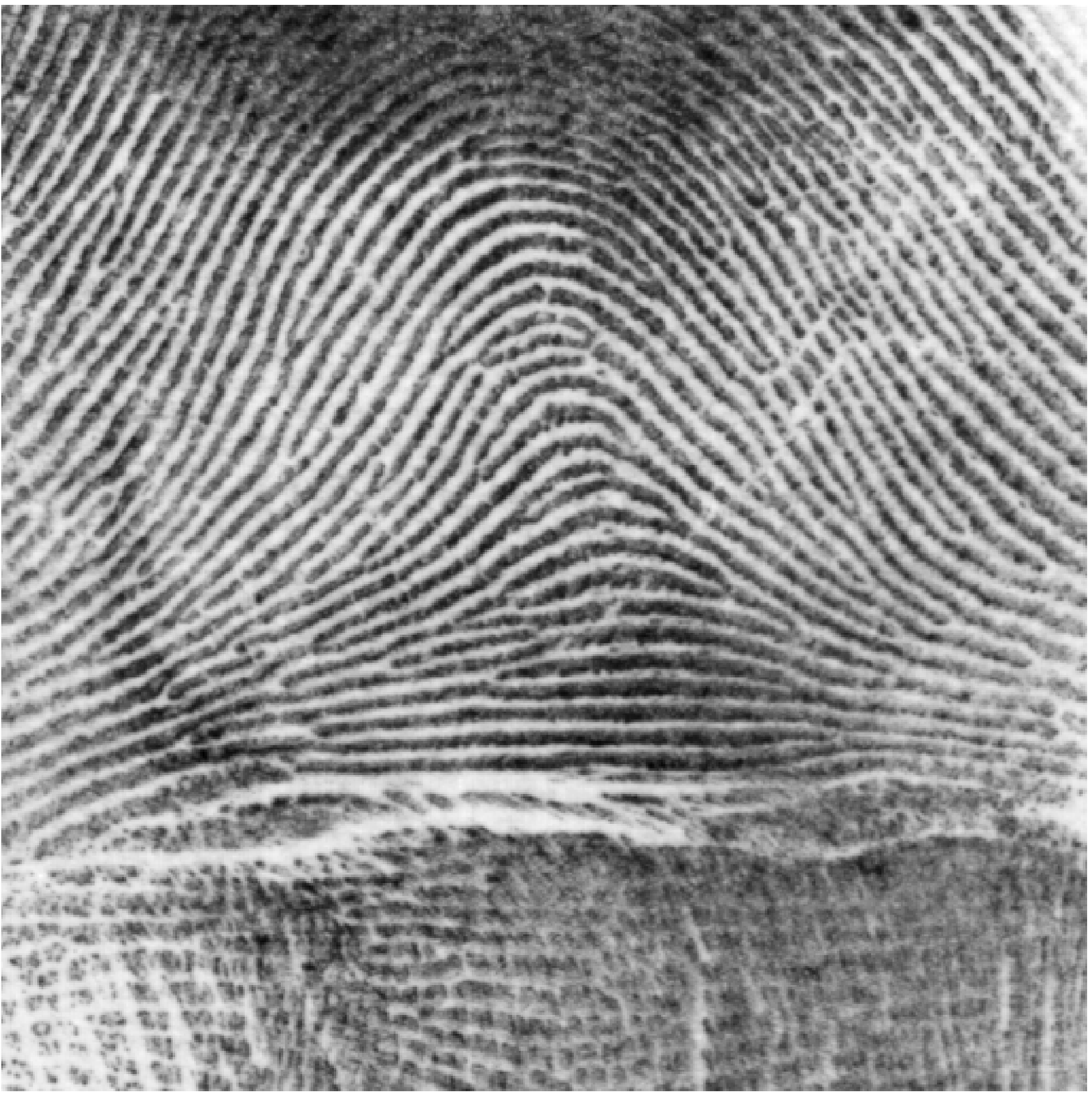}}\hfill%
\subfigure[Noisy image $y$]{\includegraphics[width=0.32\linewidth, viewport=125 339 380 432, clip]{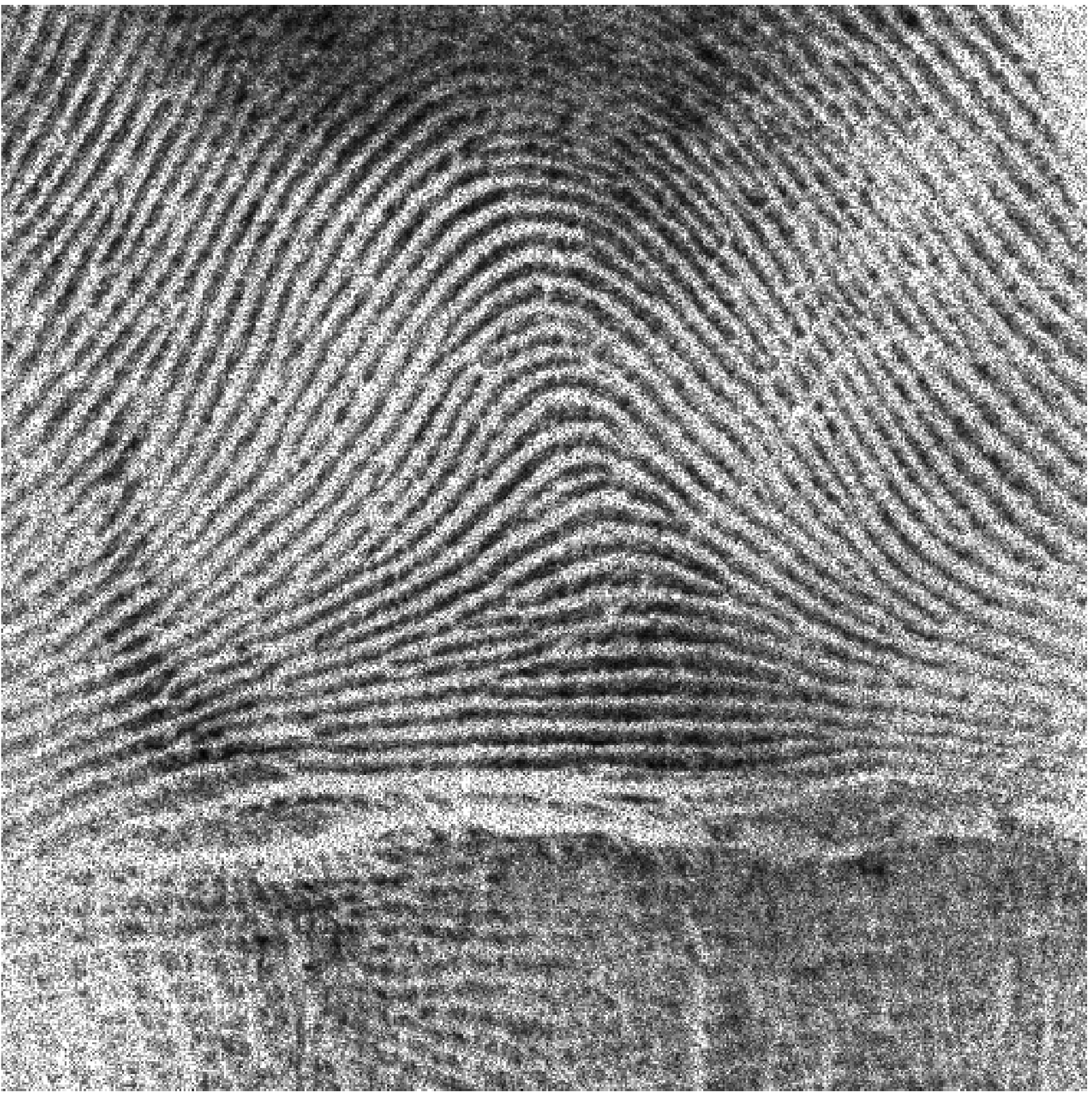}}\hfill%
\hspace{0.32\linewidth}{$ $}\\[-0.5em]
\subfigure[MNAE = 0.972]{\includegraphics[width=0.32\linewidth, viewport=125 339 380 432, clip]{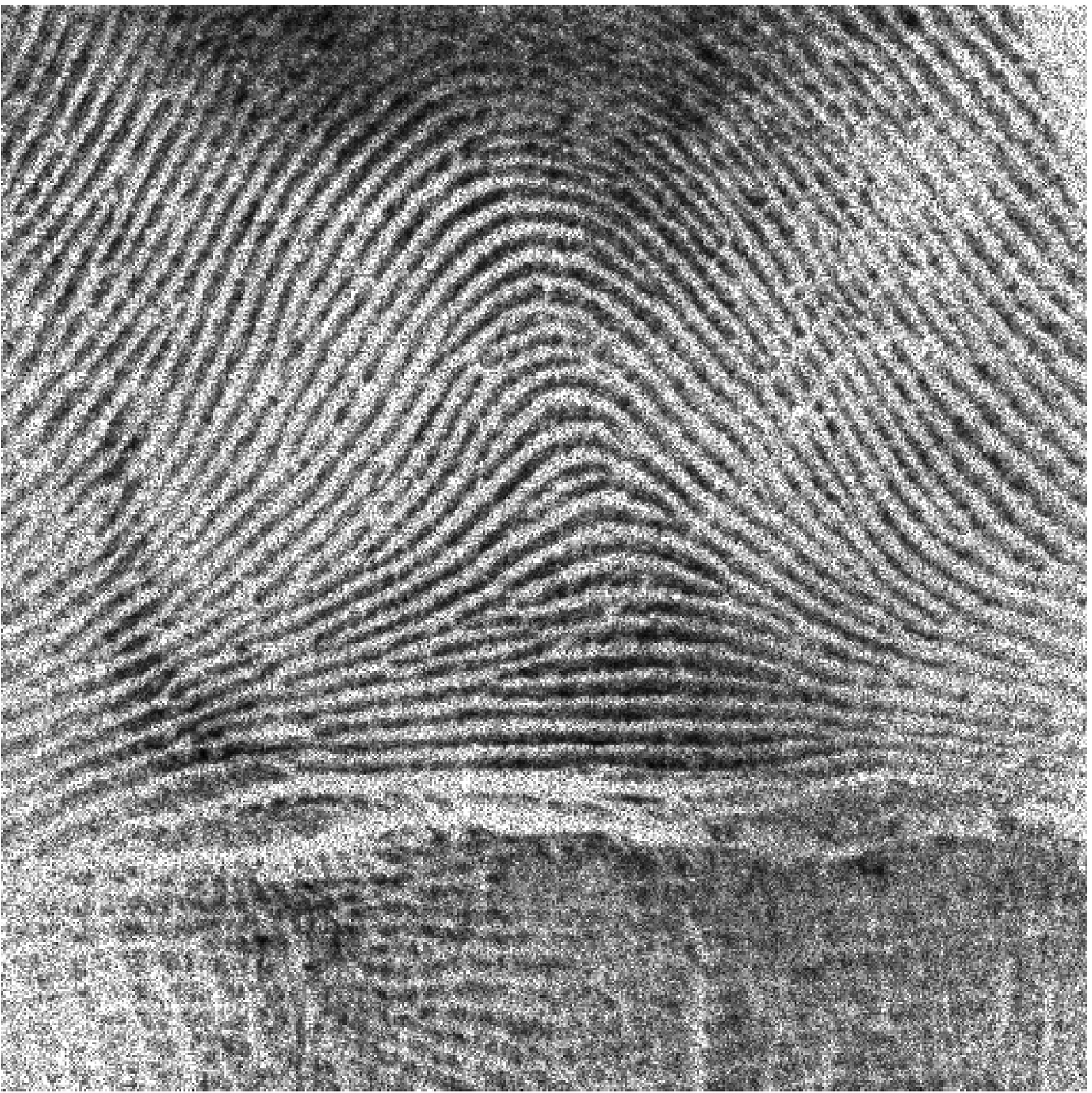}}\hfill%
\subfigure[MNAE = 0.496]{\includegraphics[width=0.32\linewidth, viewport=125 339 380 432, clip]{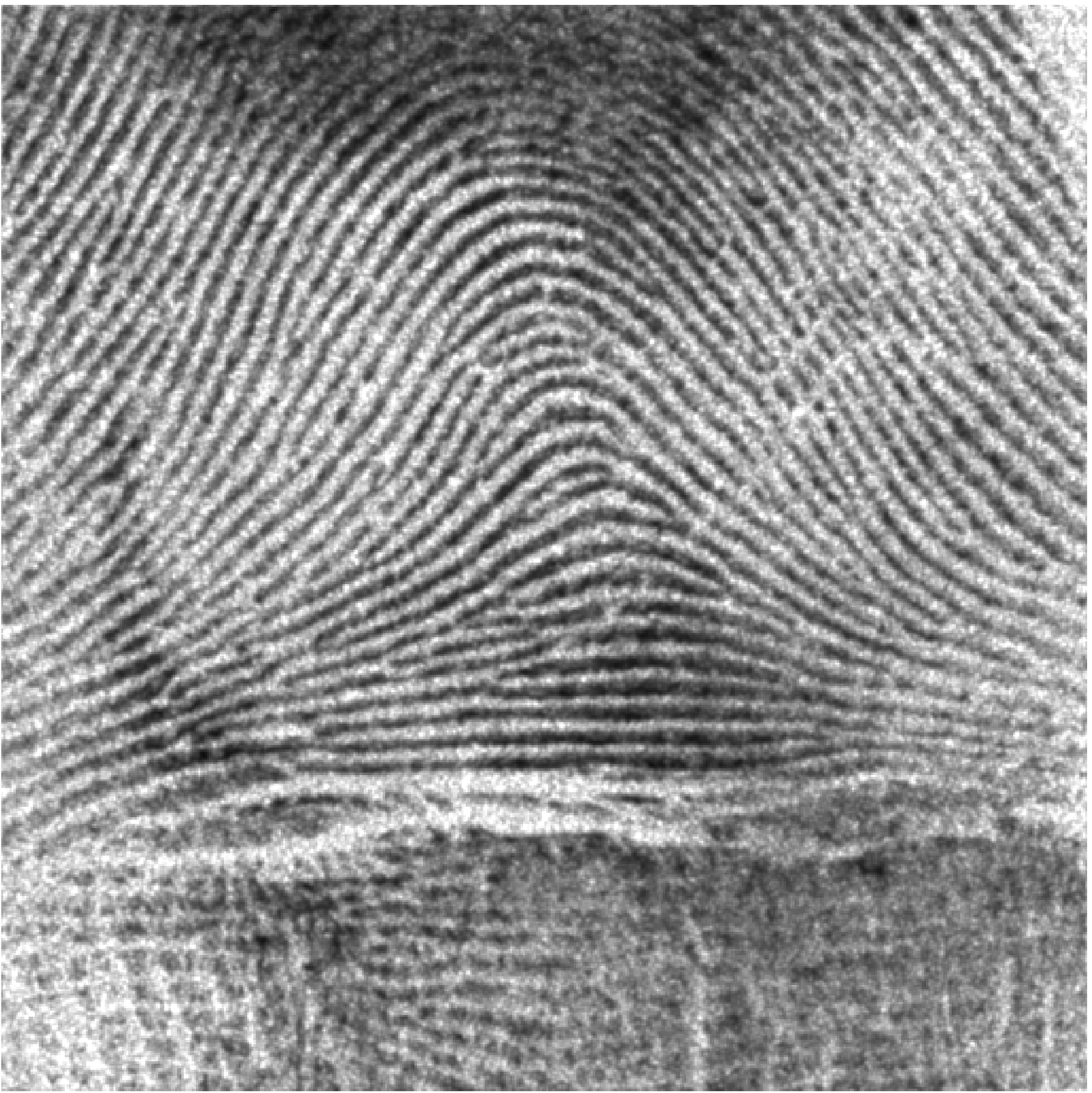}}\hfill%
\subfigure[MNAE = 0.500]{\includegraphics[width=0.32\linewidth, viewport=125 339 380 432, clip]{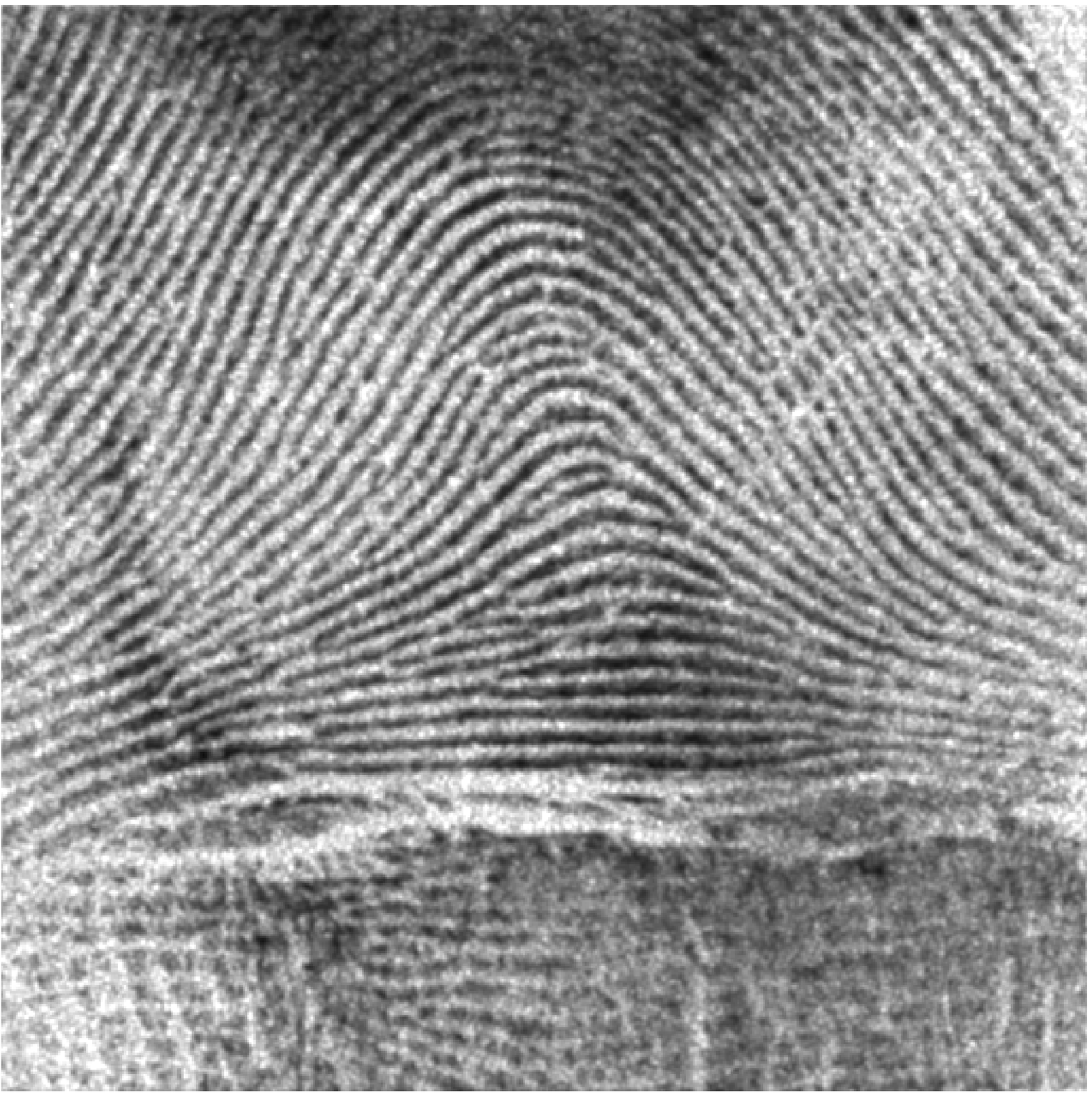}}\\[-0.5em]
\subfigure[$\tau^*=0.28$]{\includegraphics[width=0.32\linewidth]{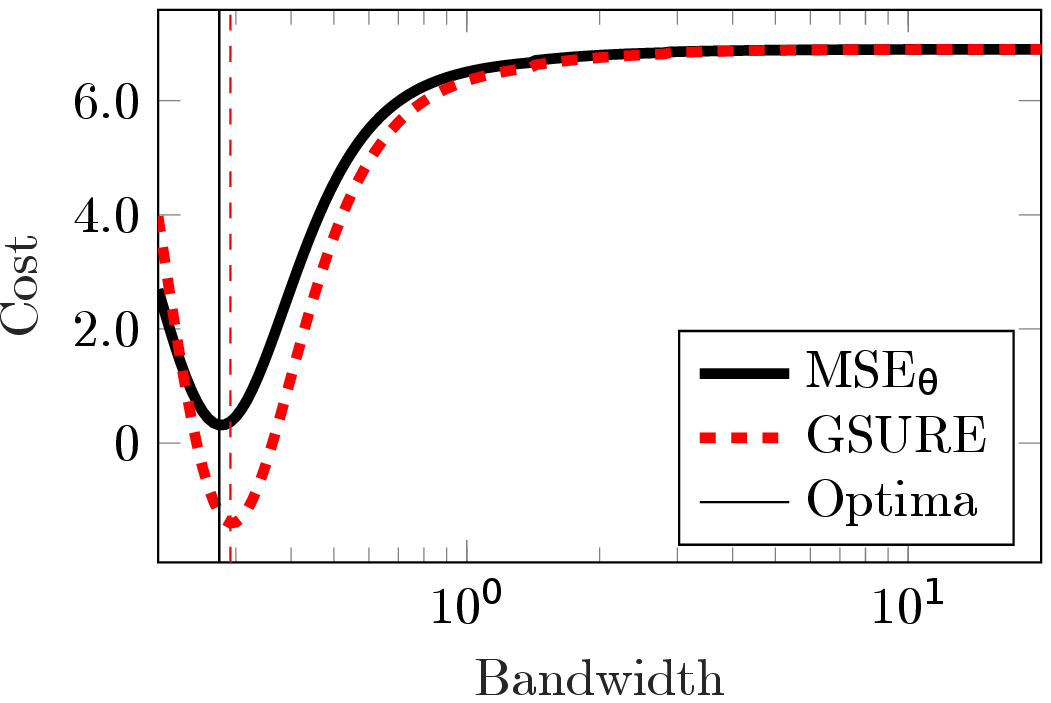}}%
\hfill%
\subfigure[$\tau^*=1.91$]{\includegraphics[width=0.32\linewidth]{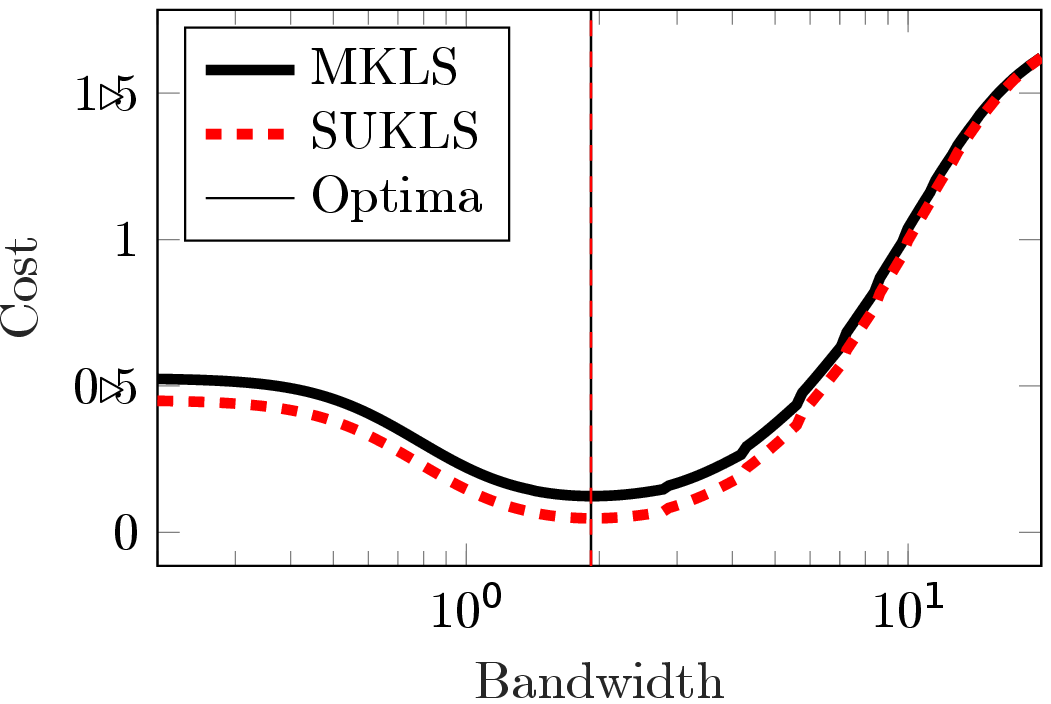}}%
\hfill%
\subfigure[$\tau^*=2.34$]{\includegraphics[width=0.32\linewidth]{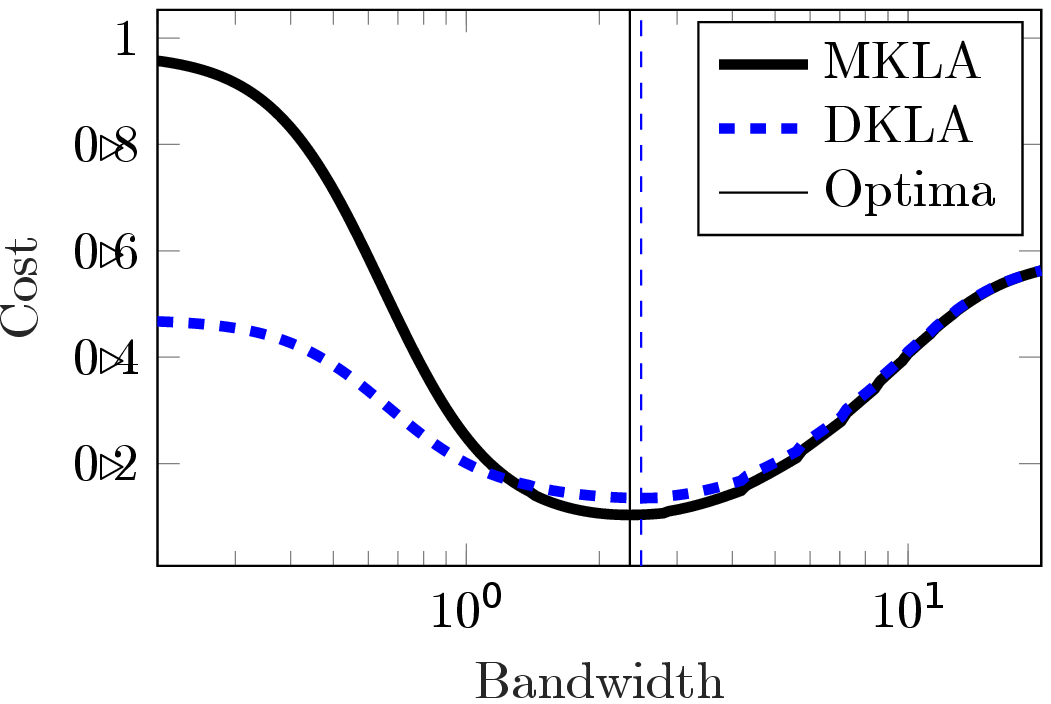}}\\
\caption{
  (a, b) Original and noisy image contaminated by Gamma noise with $L=3$
  (square-root of the images are displayed for better visual inspection).
  (c, d, e) Results of linear filtering
  for the optimal bandwidth with respect to the natural risk $\mathrm{MSE}_\theta$
  $\mathrm{MKLS}$ and $\mathrm{MKLA}$.
  The $\mathrm{MNAE}$ is indicated.
  (f,g,h)
  Risks and their estimates as a function of the bandwidth.
  The optimal bandwidth $\tau^\star$ is indicated.
  Red shows unbiased estimation and blue biased estimation.
}
\label{fig:linear_gamma}
\vspace{1em}
\end{figure}
\begin{figure}[!t]
\centering%
\subfigure[$\tau^*=0.20$/MNAE = 0.999]{\includegraphics[width=0.32\linewidth]{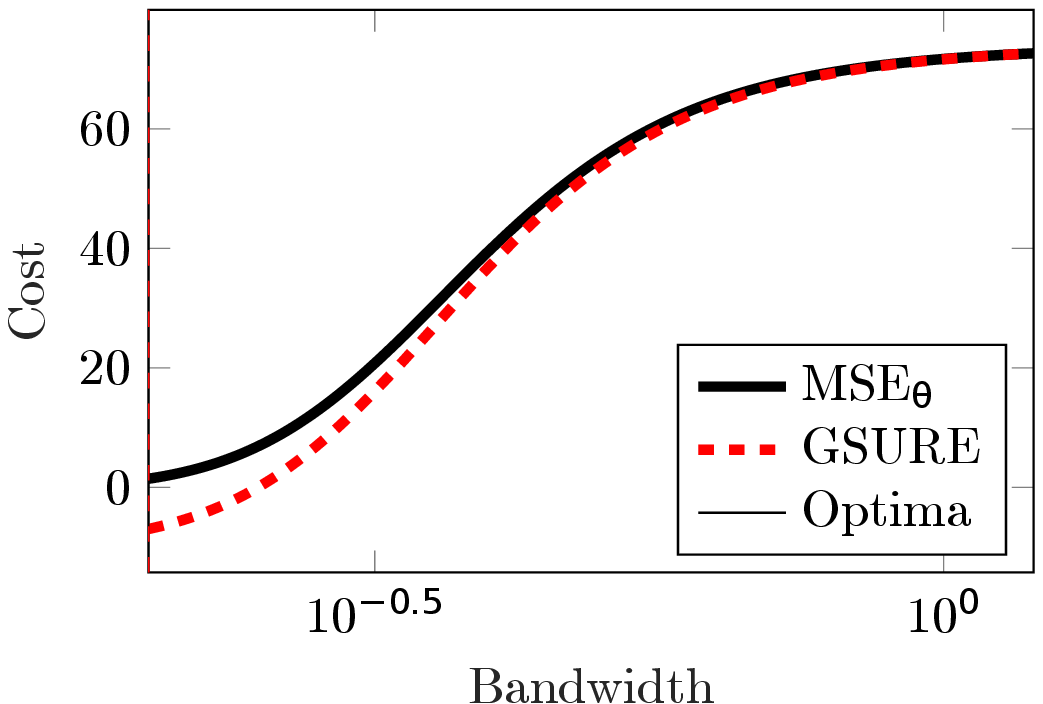}}%
\hfill%
\subfigure[$\tau^*=0.72$/MNAE = 0.904]{\includegraphics[width=0.32\linewidth]{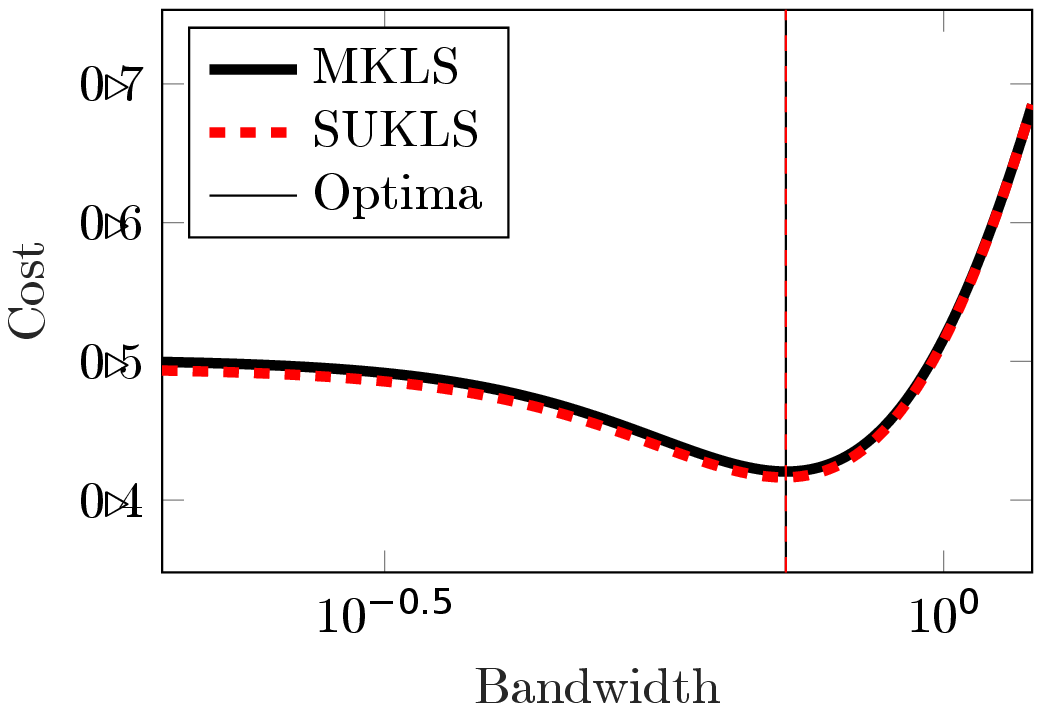}}%
\hfill%
\subfigure[$\tau^*=0.79$/MNAE = 0.900]{\includegraphics[width=0.32\linewidth]{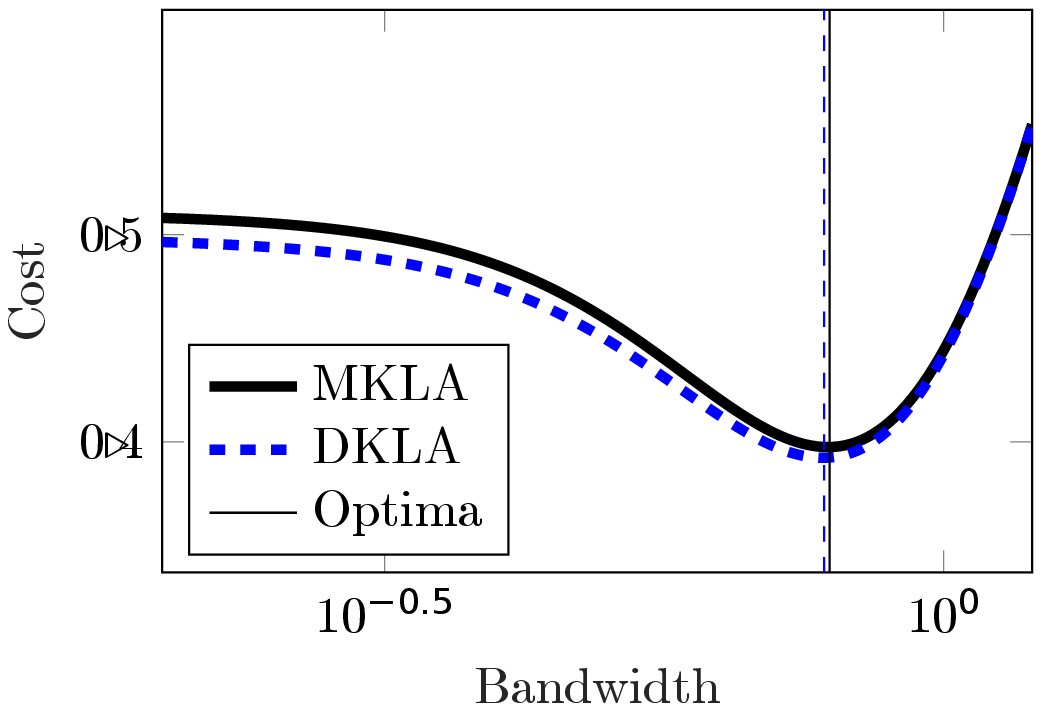}}\\
\caption{
  (a,b,c)
  Risks and their estimates as a function of the bandwidth in the same setting
  as in Figure \ref{fig:linear_gamma} but for Gamma noise with $L=100$.
  The optimal bandwidth $\tau^\star$ and the MNAE are indicated.
  Red shows unbiased estimation and blue biased estimation.
}
\label{fig:linear_gamma_L100}
\vspace{1em}
\end{figure}

Figure \ref{fig:linear_gamma} gives an example of a noisy observation $y$
of an image $\mu$ representing fingerprints whose pixel values are independently
corrupted by Gamma noise with shape parameter $L=3$.
We have evaluated the relevance of the natural risk $\mathrm{MSE}_\theta$
given by $\norm{\mu^{-1} - \hat{\mu}(Y)^{-1}}^2$,
$\mathrm{MKLS}$ and $\mathrm{MKLA}$ in selecting the bandwidth $\tau$.
Visual inspection of the results obtained at the
optimal bandwidth for each criterion shows that the natural risk $\mathrm{MSE}_\theta$ fails in selecting
a relevant bandwidth while $\mathrm{MKLS}$ and $\mathrm{MKLA}$ both provide a better trade-off.
The natural risk strongly penalizes small discrepancies at the lowest
intensities while not being sensitive enough for discrepancies at higher intensities.
As the noisy image has several isolated pixel values approaching $0$,
the natural risk will strongly penalize smoothing effects of such isolated structures
preventing satisfying noise variance reduction.
The Kullback-Leibler loss functions take into
account that Gamma noise has a constant signal to noise ratio. Hence, it does not favor
the restoration of either bright or dark structures more, allowing
satisfying smoothing for both, as assessed by the MNAE.
Finally, estimators of these loss functions with respectively GSURE, SUKLS and DKLA are given.
Note that for $L=3$, the Gamma distribution is far from reaching the asymptotic
conditions of Theorem \ref{thm:dkla}. As a result, bias is
not negligible (it becomes obvious for the lowest values of $\tau$ in Figure \ref{fig:linear_gamma}.h).
Nevertheless, minimizing DKLA can still provide an accurate location of the optimal parameter
for $\mathrm{MKLA}$.

Figure \ref{fig:linear_gamma_L100} reproduces the same experiment but with
Gamma noise with $L=100$, i.e., with a much better signal to noise ratio.
Interestingly, the bias of DKLA
gets much smaller than in the previous experiment.
This was indeed expected as with $L=100$,
the Gamma distribution fulfills much better the asymptotic conditions
of Theorem \ref{thm:dkla}. Remark that MNAE values are still in favor
of Kullback-Leibler objectives, but the gains are much smaller.
In fact, all MNAE values get closer to $1$ since noise reduction
with signal preservation using linear filtering
becomes much harder in such a low signal to noise ratio setting.

\begin{figure}[!t]
\centering
\subfigure[Original image $\mu$]{\includegraphics[width=0.32\linewidth,viewport=100 70 356 163,clip]{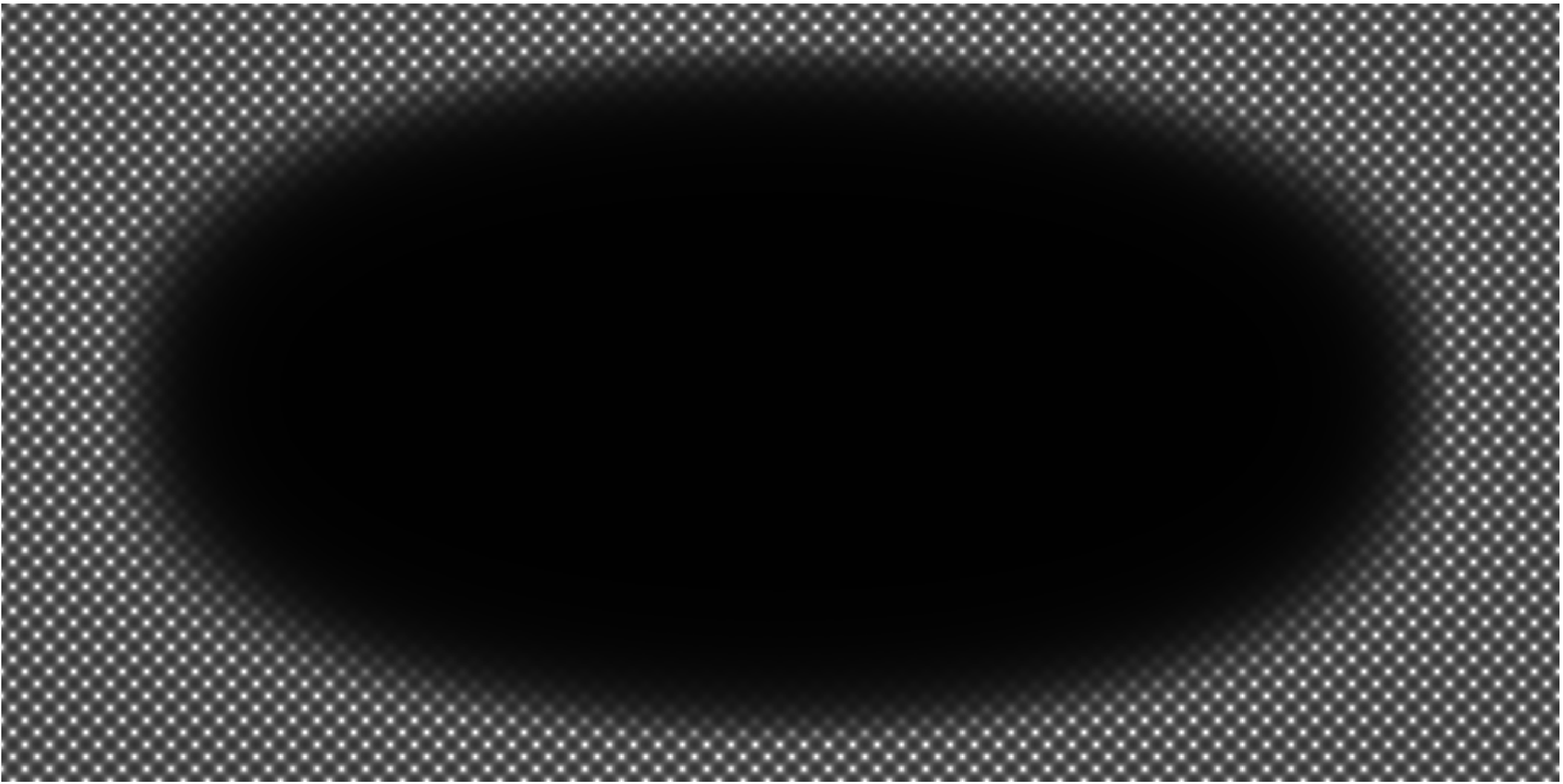}}\hfill%
\subfigure[Noisy image $y$]{\includegraphics[width=0.32\linewidth,viewport=100 70 356 163,clip]{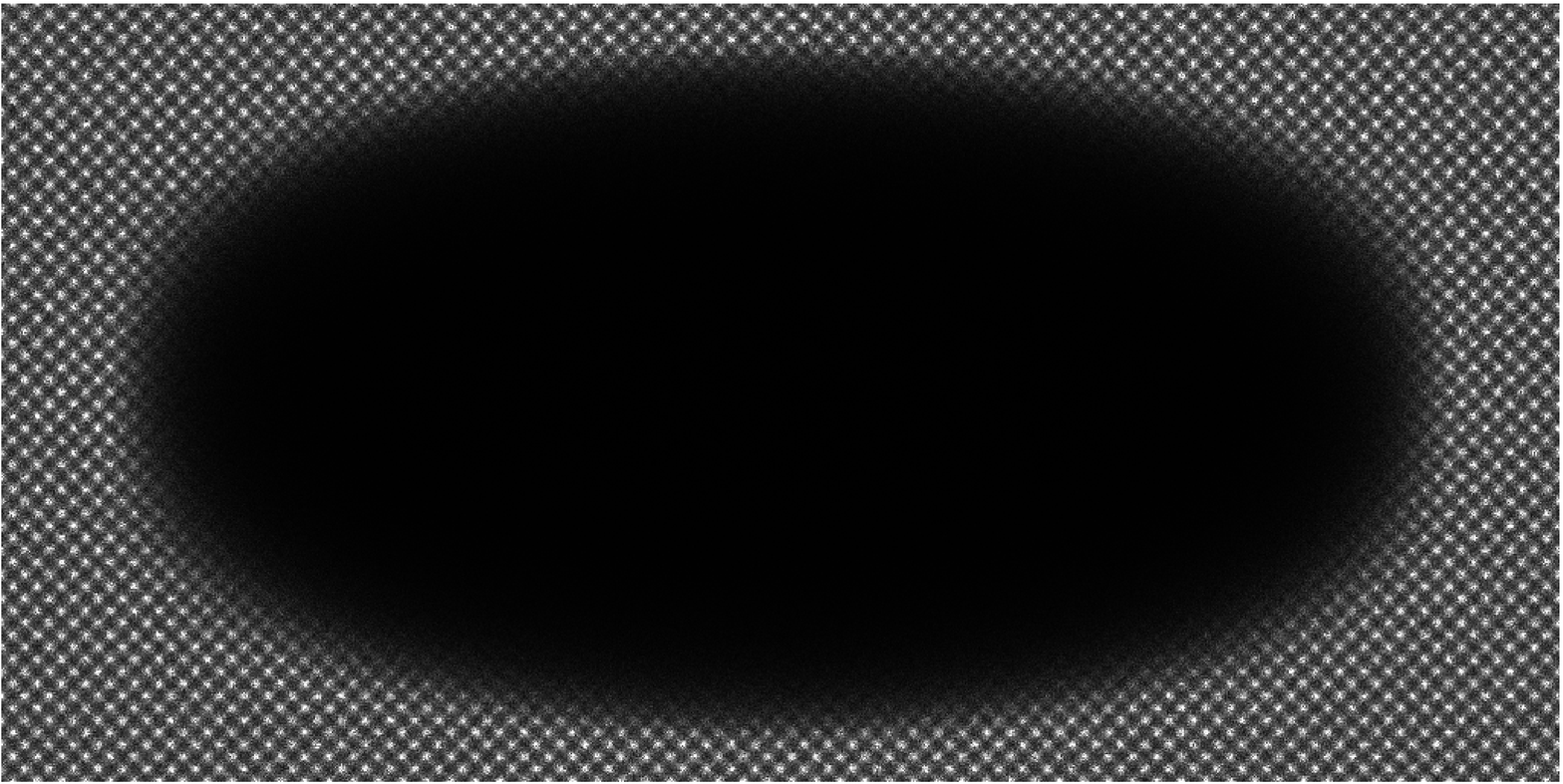}}\hfill%
\hspace{0.32\linewidth}{$ $}\\[-0.5em]
\subfigure[MNAE = 0.853]{\includegraphics[width=0.32\linewidth,viewport=100 70 356 163,clip]{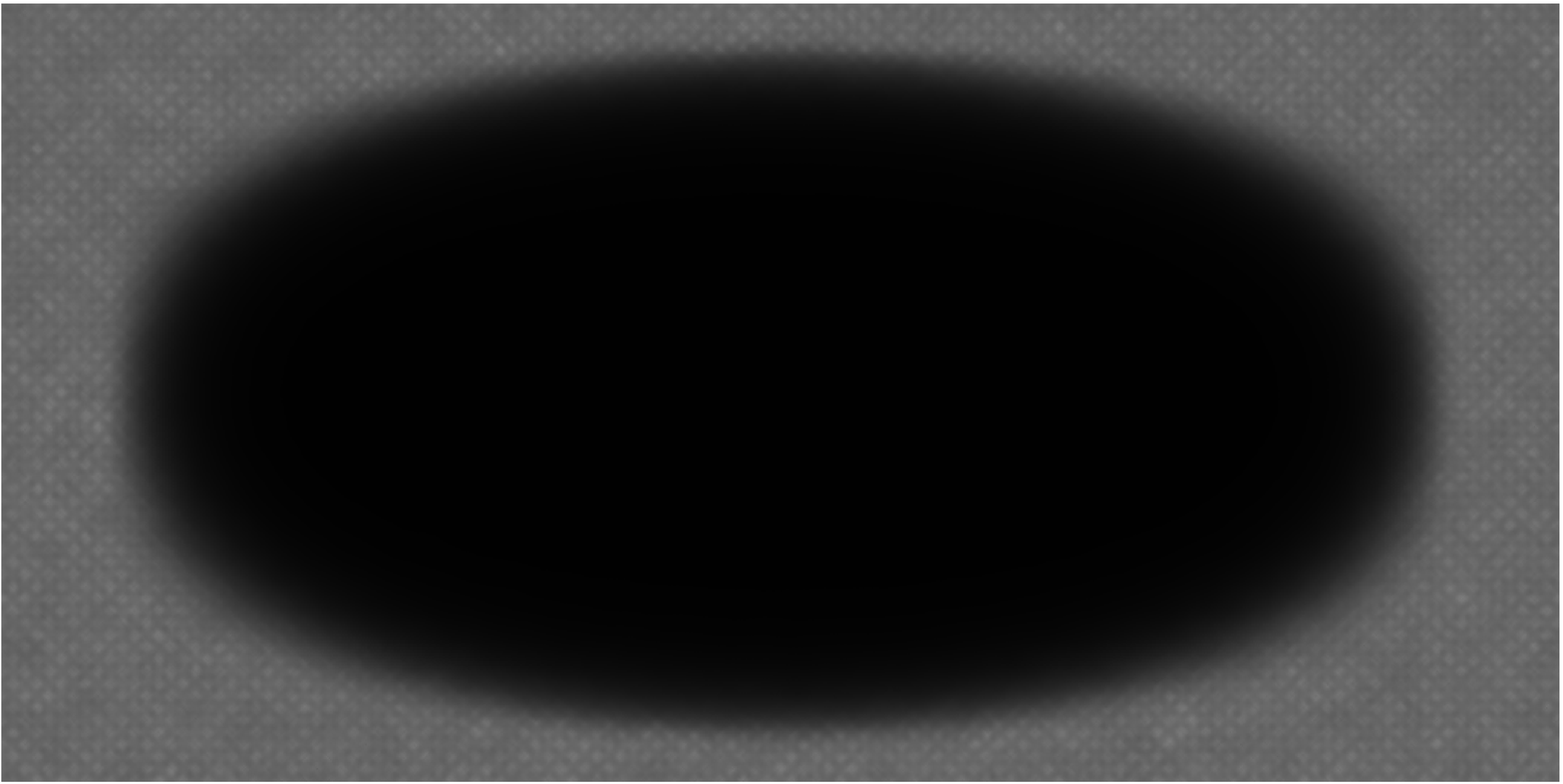}}\hfill%
\subfigure[MNAE = 0.249]{\includegraphics[width=0.32\linewidth,viewport=100 70 356 163,clip]{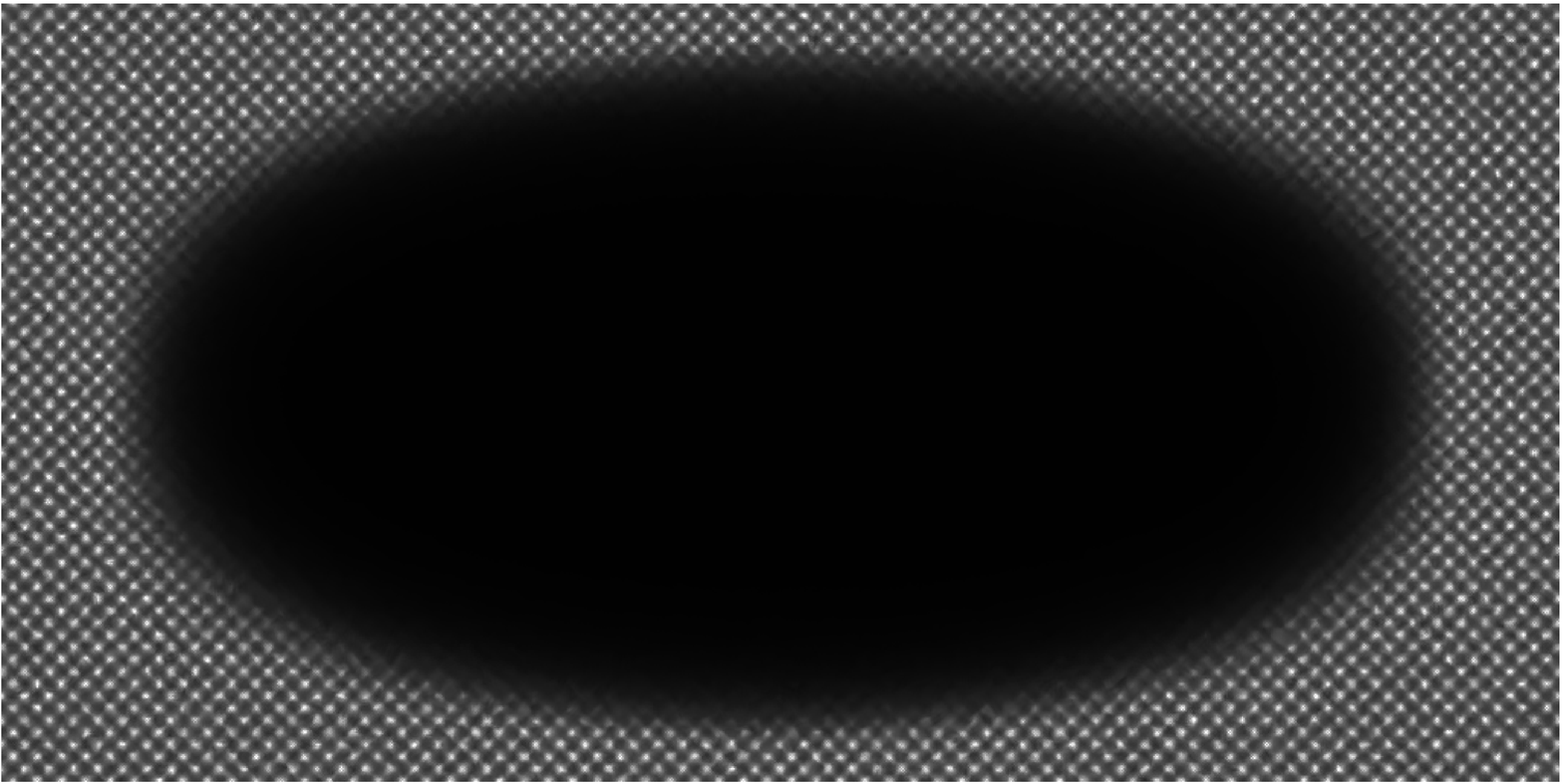}}\hfill%
\subfigure[MNAE = 0.249]{\includegraphics[width=0.32\linewidth,viewport=100 70 356 163,clip]{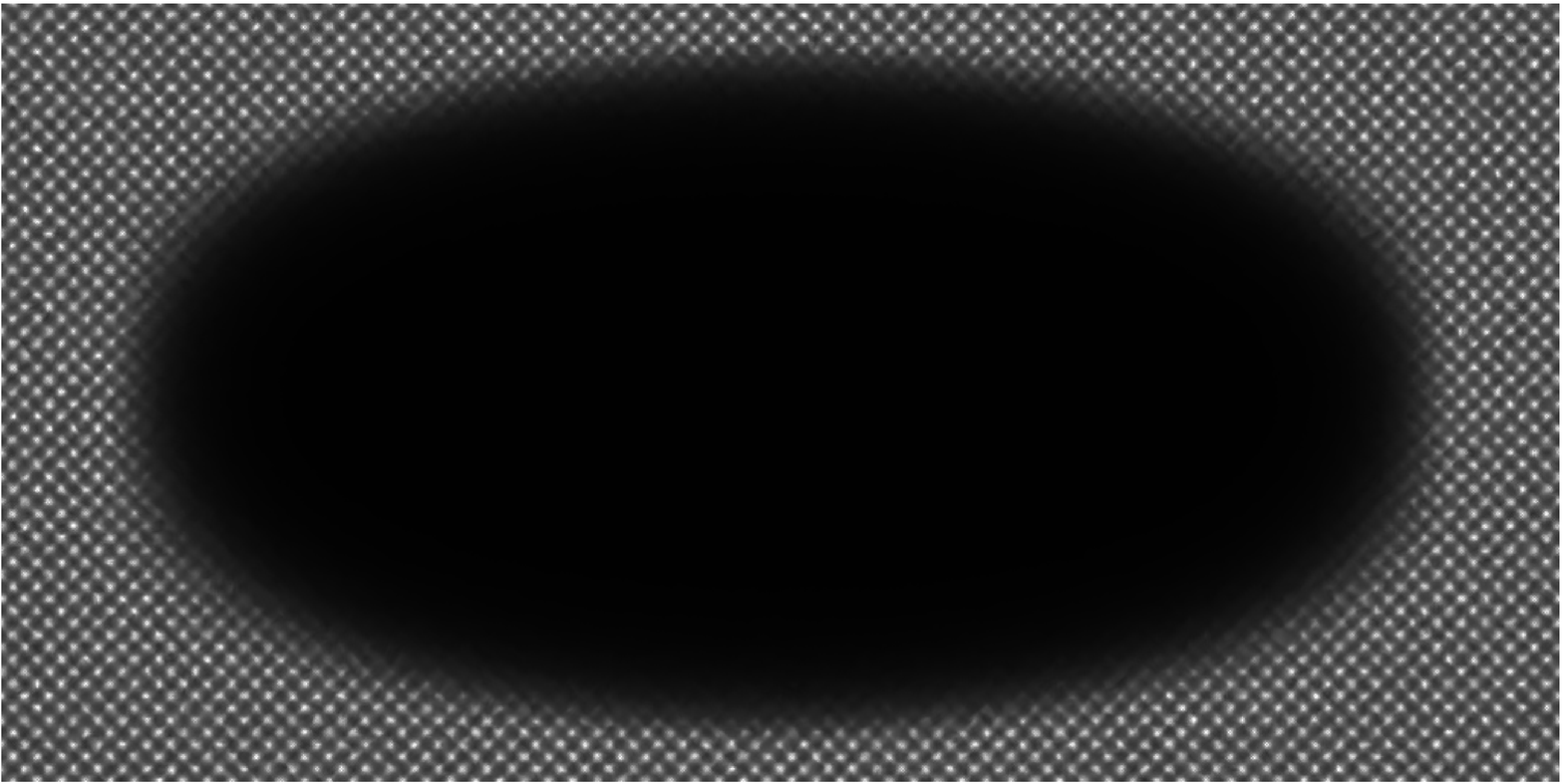}}\\[-0.5em]
\subfigure[$\tau^\star=23.84$]{\includegraphics[width=0.32\linewidth]{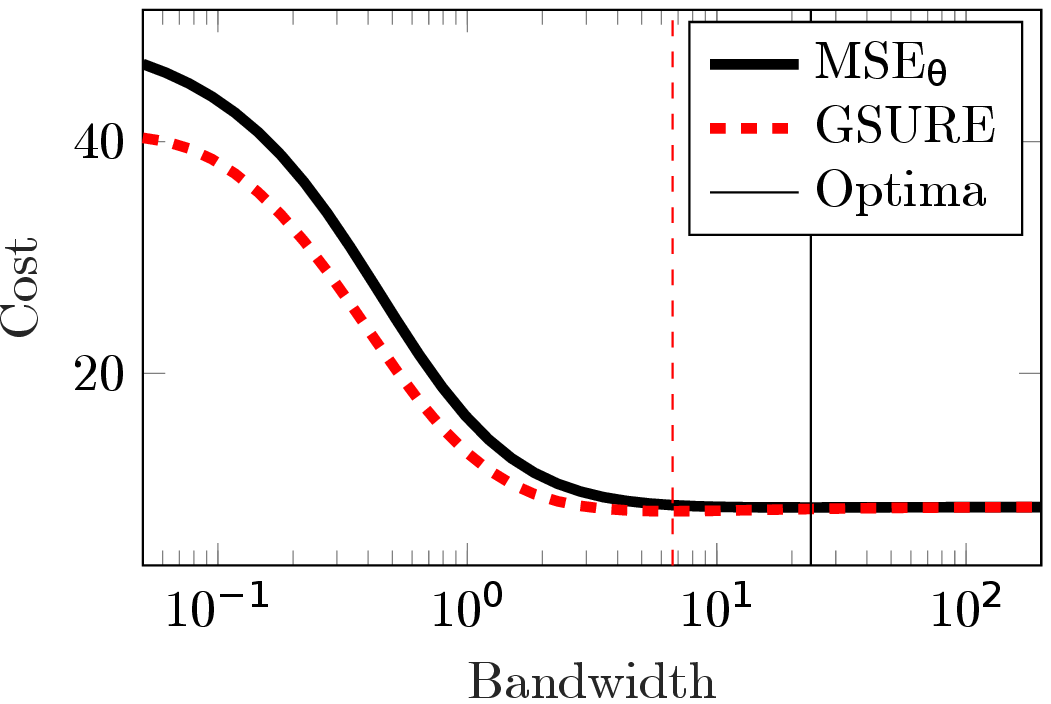}}%
\hfill%
\subfigure[$\tau^\star=0.98$]{\includegraphics[width=0.32\linewidth]{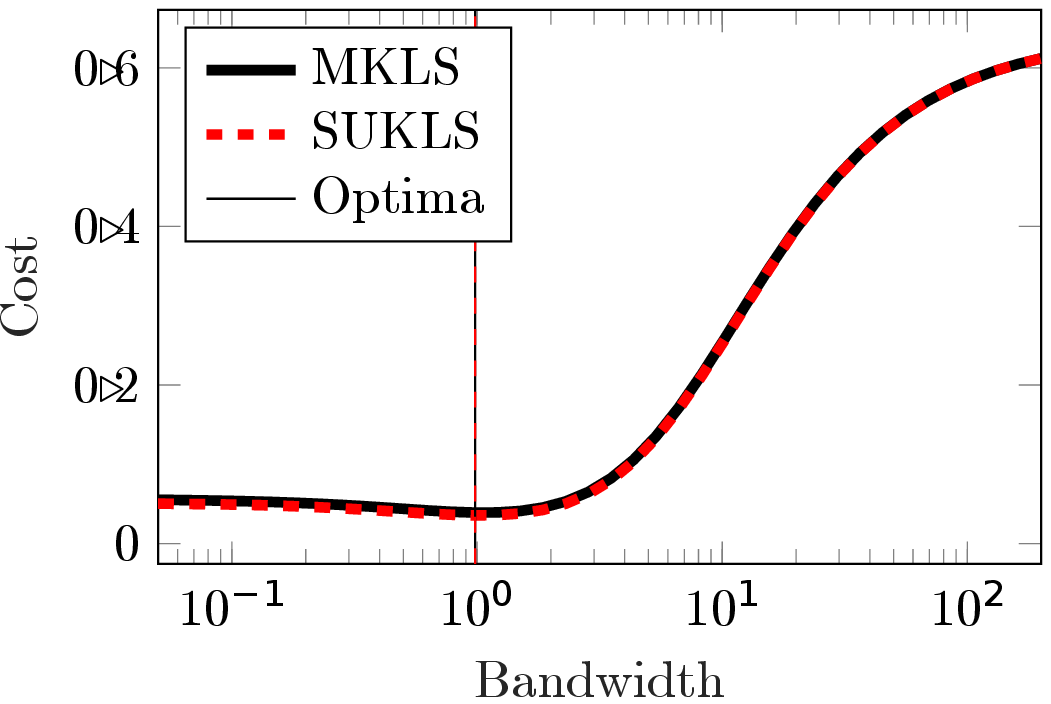}}%
\hfill%
\subfigure[$\tau^\star=1.21$]{\includegraphics[width=0.32\linewidth]{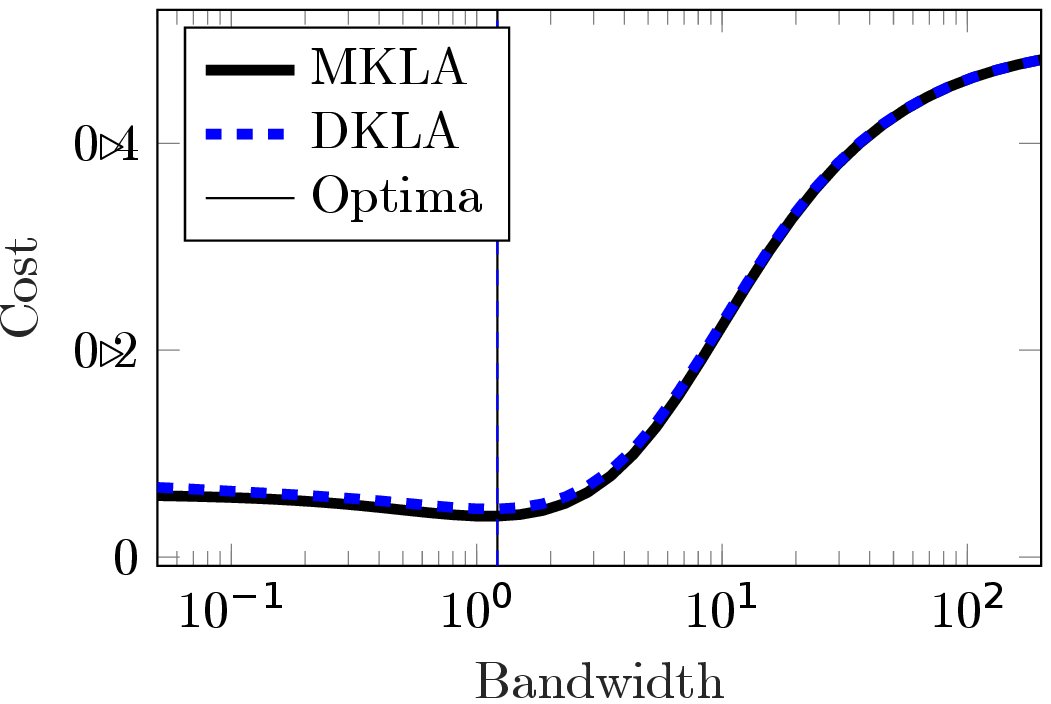}}\\
\caption{
  (a, b) Original and noisy image contaminated by Gamma noise with $L=3$
  (squared root versions of the images are displayed for better visual inspection).
  (c, d, e) Results of non-local filtering
  for the optimal bandwidth with respect to the natural risk $\mathrm{MSE}_\theta$
  $\mathrm{MKLS}$ and $\mathrm{MKLA}$.
  The $\mathrm{MNAE}$ is indicated.
  (f,g,h)
  Risks and their estimates as a function of the bandwidth.
  The optimal bandwidth $\tau^\star$ is indicated.
  Red shows unbiased estimation and blue biased estimation.
}
\label{fig:nlmeans_rayleigh}
\vspace{1em}
\end{figure}

\begin{figure}[!t]
\centering
\subfigure[Original image $\mu$]{\includegraphics[width=0.32\linewidth,viewport=100 70 356 163,clip]{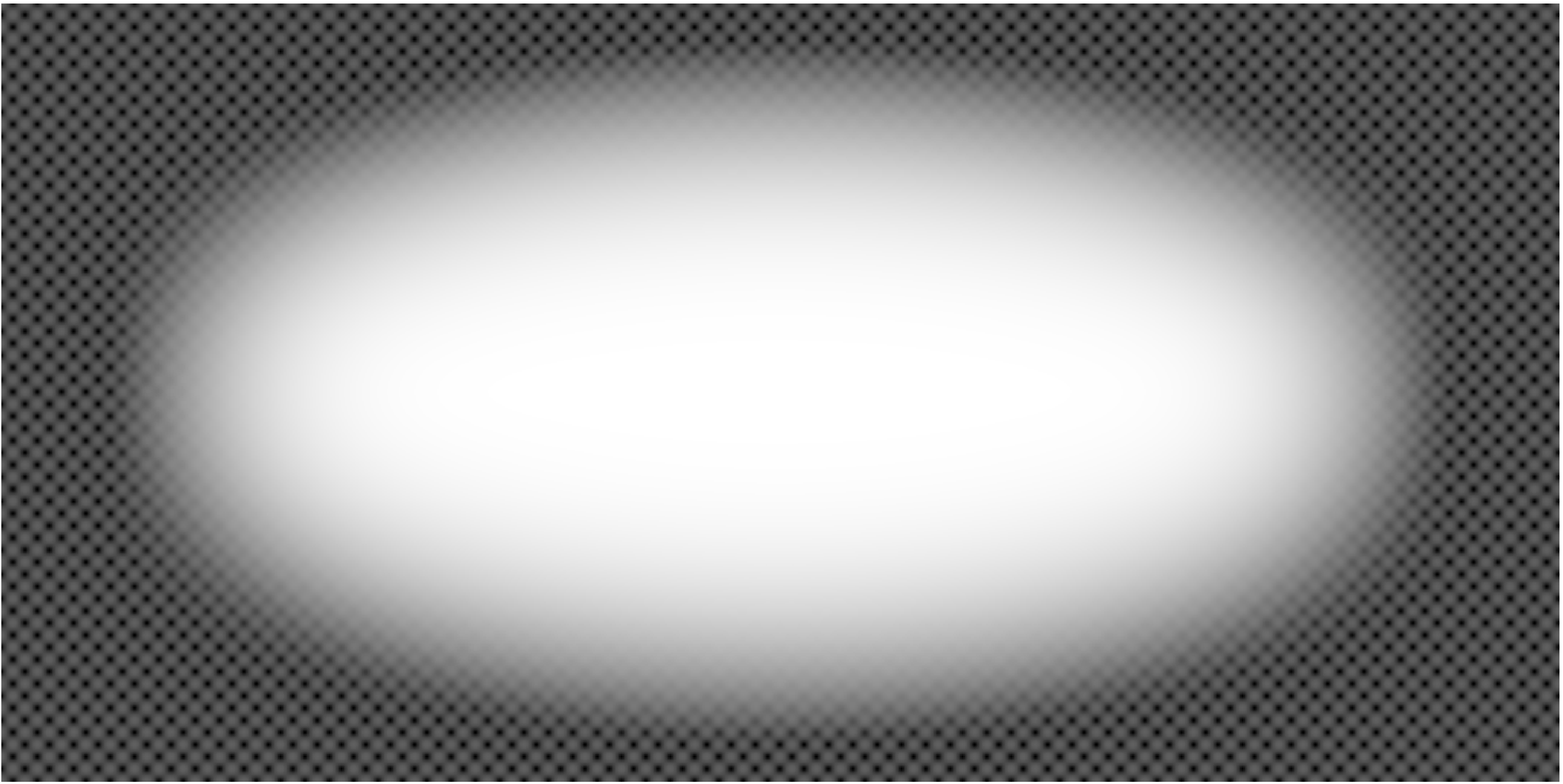}}\hfill%
\subfigure[Noisy image $y$]{\includegraphics[width=0.32\linewidth,viewport=100 70 356 163,clip]{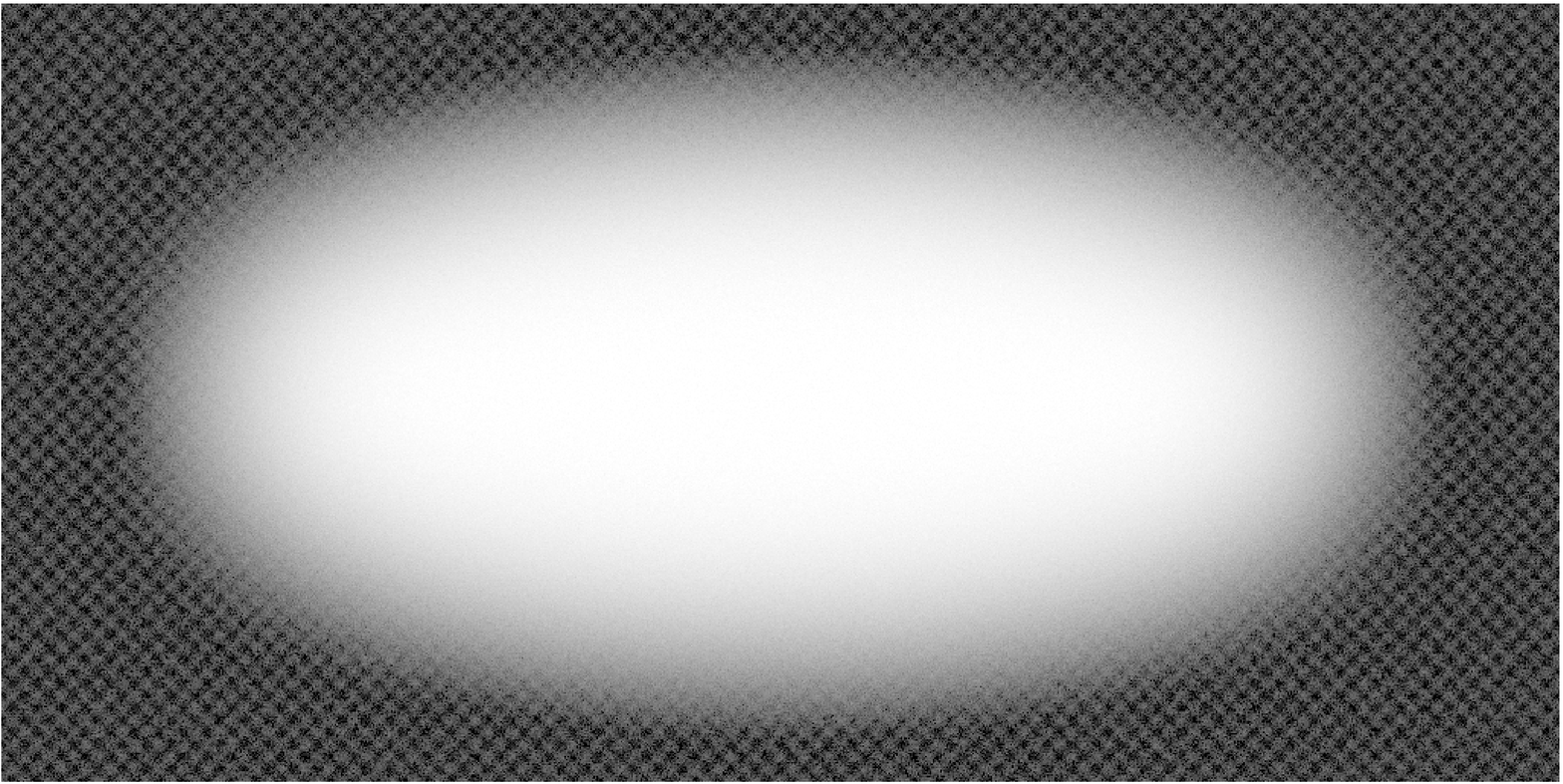}}\hfill%
\hspace{0.32\linewidth}{$ $}\\[-0.5em]
\subfigure[MNAE=0.416]{\includegraphics[width=0.32\linewidth,viewport=100 70 356 163,clip]{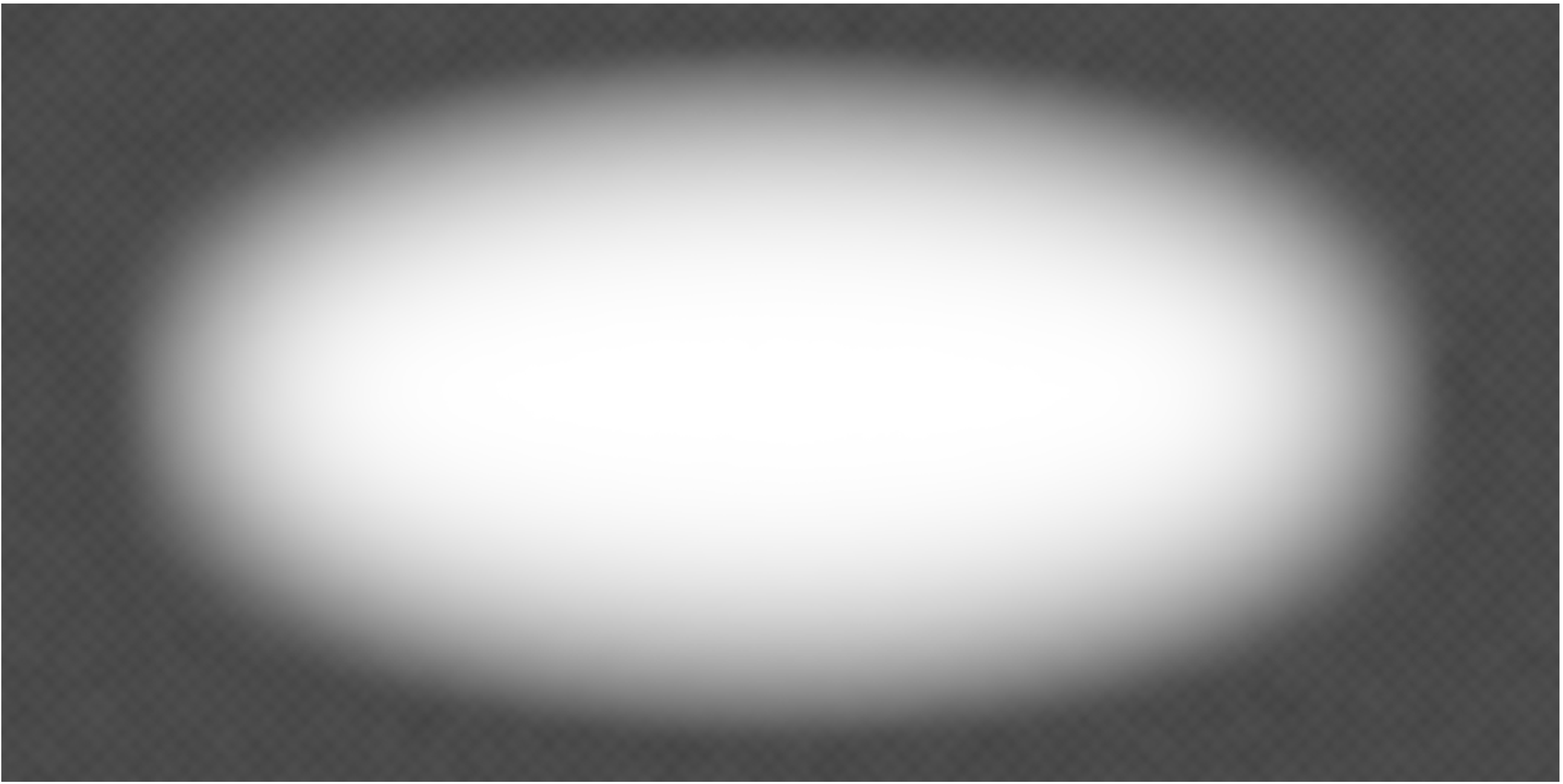}}\hfill%
\subfigure[MNAE=0.261]{\includegraphics[width=0.32\linewidth,viewport=100 70 356 163,clip]{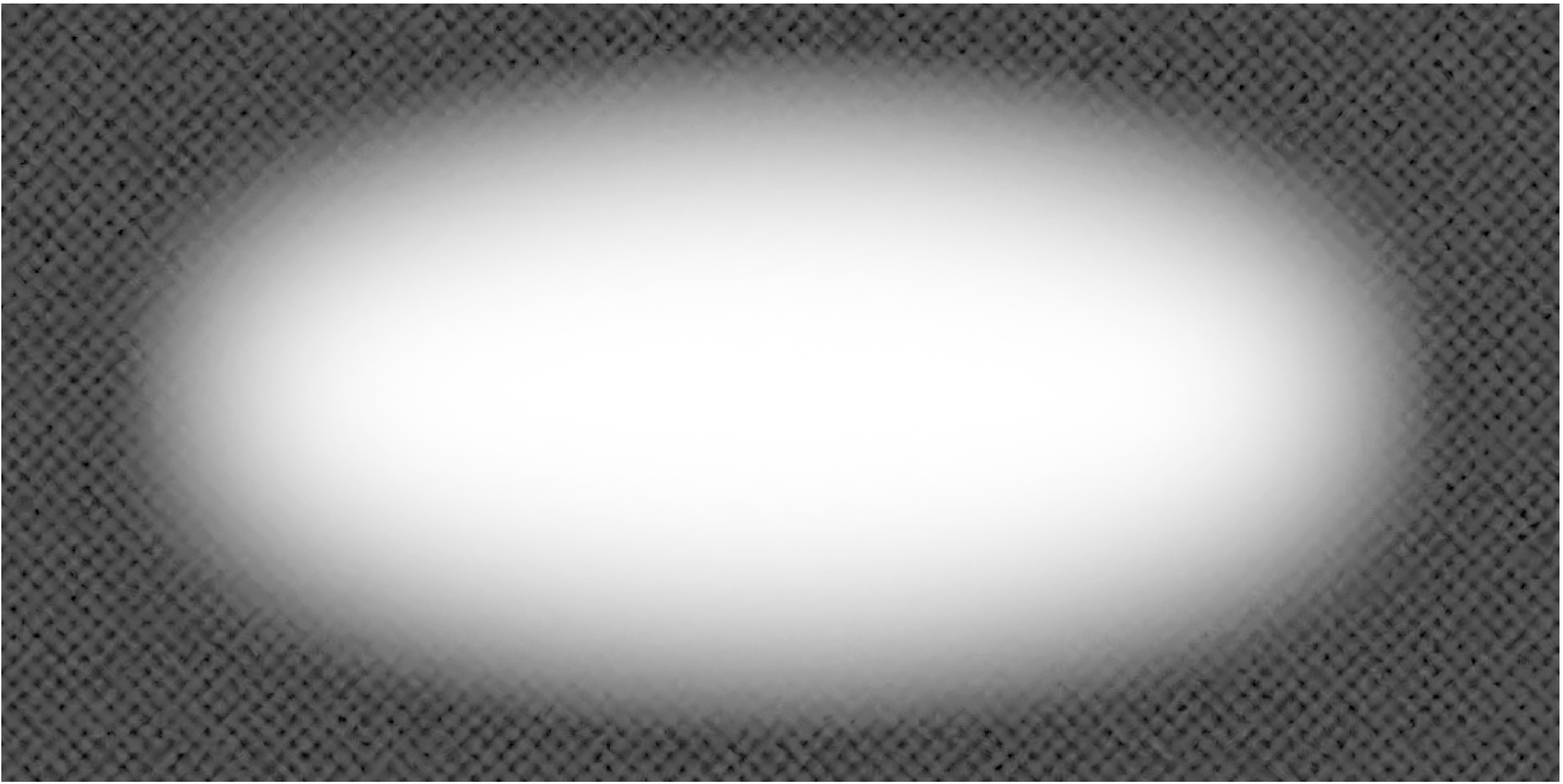}}\hfill%
\subfigure[MNAE=0.261]{\includegraphics[width=0.32\linewidth,viewport=100 70 356 163,clip]{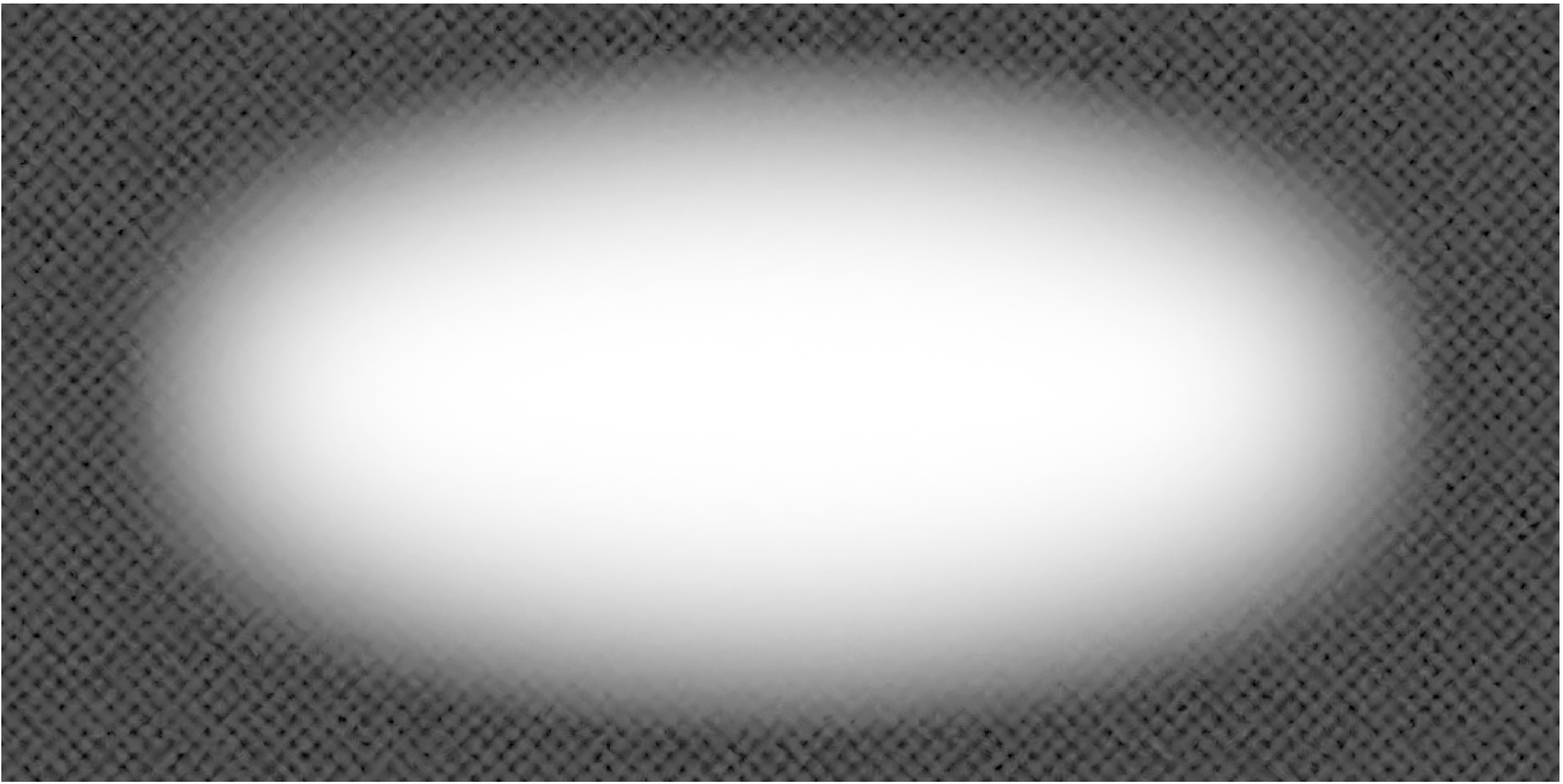}}\\[-0.5em]
\subfigure[$\tau^\star=42.06$]{\includegraphics[width=0.32\linewidth]{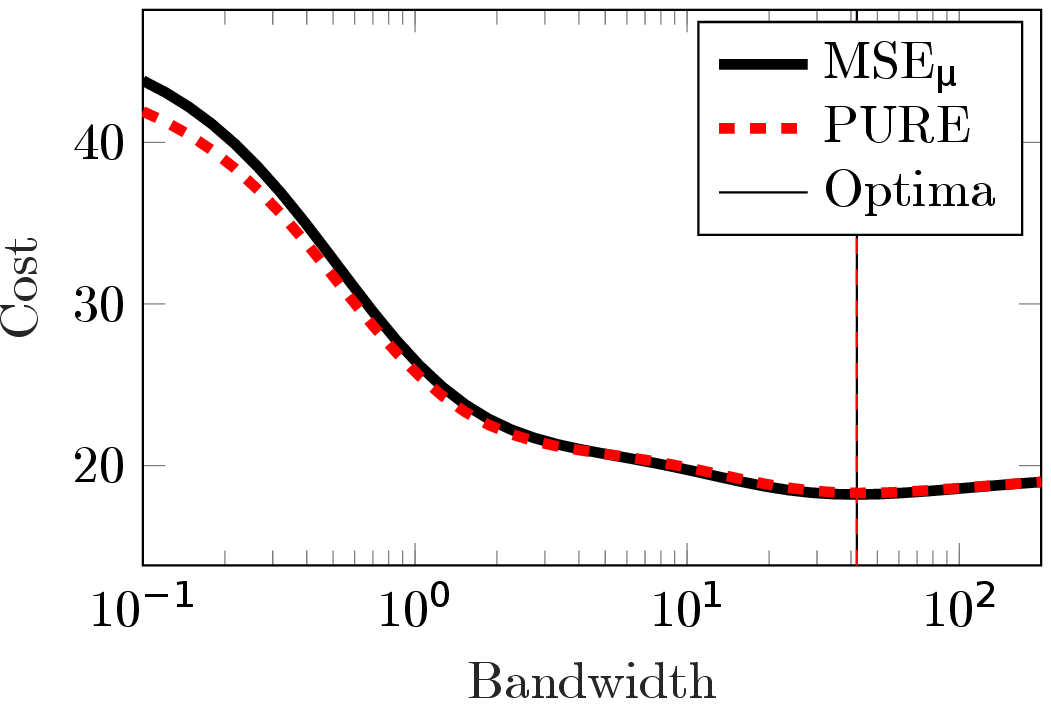}}%
\hfill%
\subfigure[$\tau^\star=0.48$]{\includegraphics[width=0.32\linewidth]{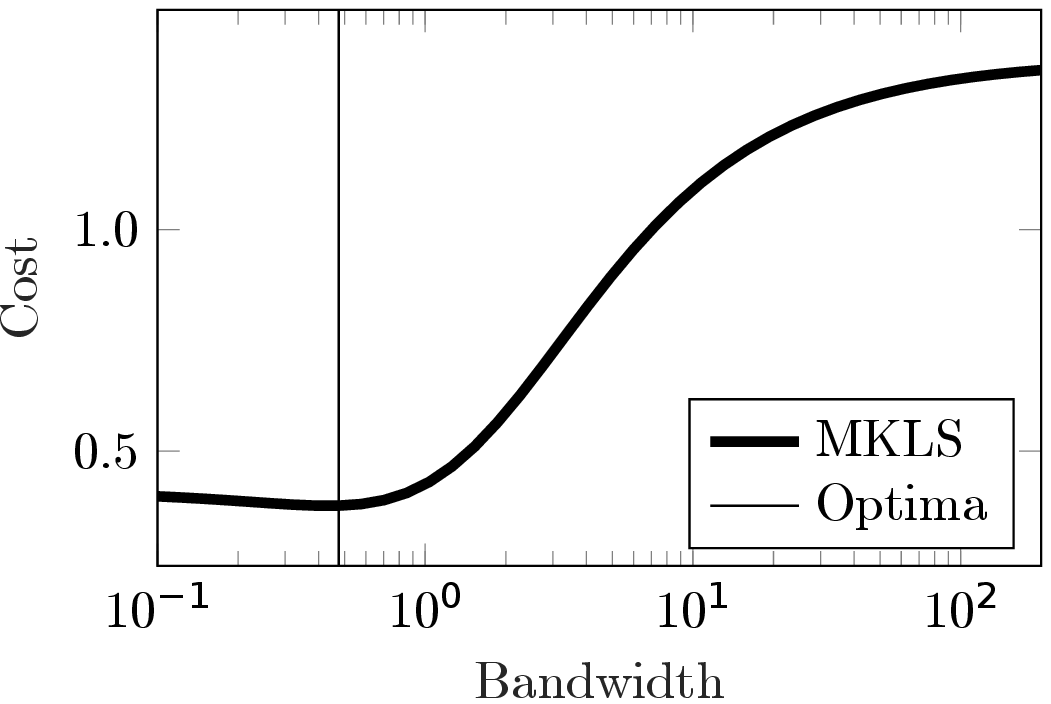}}%
\hfill%
\subfigure[$\tau^\star=0.48$]{\includegraphics[width=0.32\linewidth]{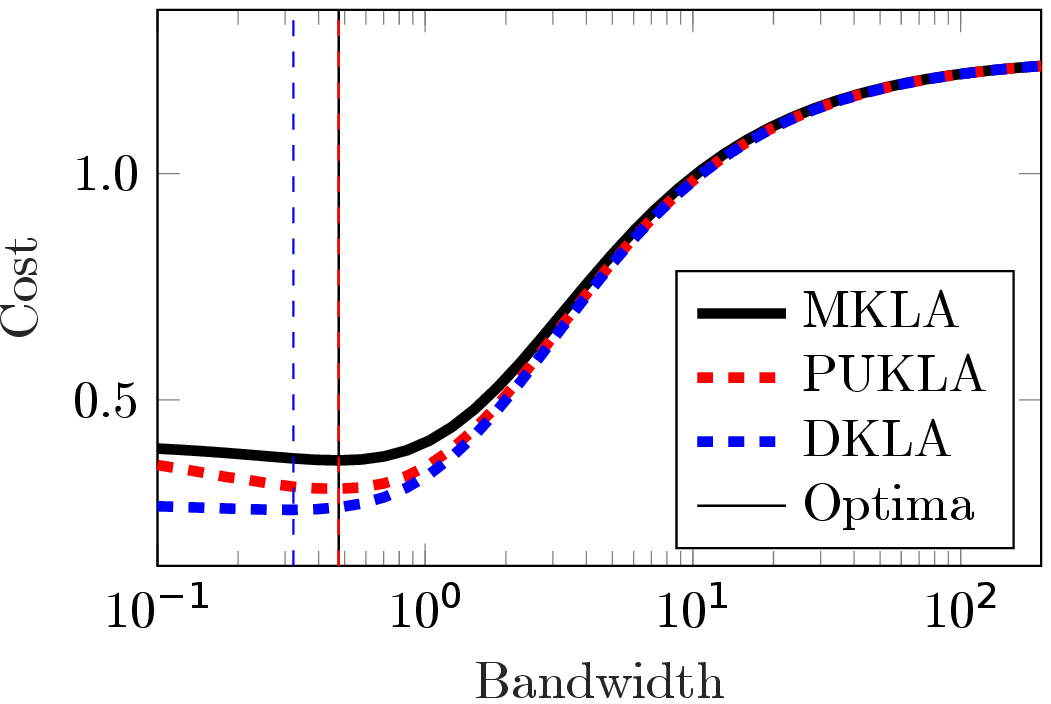}}
\caption{
  (a, b) Original and noisy image contaminated by Poisson noise
  (log of the images are displayed for better visual inspection).
  (c, d, e) Results of non-local filtering
  for the optimal bandwidth with respect to the risk $\mathrm{MSE}_\mu$
  $\mathrm{MKLS}$ and $\mathrm{MKLA}$.
  The $\mathrm{MNAE}$ is indicated.
  (f,g,h)
  Risks and their estimates (when available) as a function of the bandwidth.
  The optimal bandwidth $\tau^\star$ is indicated.
  Red shows unbiased estimation and blue biased estimation.
}
\label{fig:nlmeans_poisson}
\vspace{1em}
\end{figure}

\paragraph{Simulations in non-linear filtering.}

We consider here that $\hat{\mu}$ is the non-local filter \cite{buades2005rid} defined by
\begin{align}
  \hat{\mu}(y) = W(y) y
  \qwithq
  W_{i,j}(y) = \exp\left(-\frac{d(\Pp_i y, \Pp_j y)}{\tau}\right)
\end{align}
where $\Pp_i \in \RR^{p \times n}$ is a linear operator extracting a patch
(a small window of fixed size) at location $\delta_i$,
$d : \RR^p \times \RR^p \to \RR^+$
is a dissimilarity measure (infinitely differentiable and adapted to the exponential family following
\cite{deledalle2012compare})
and $\tau>0$ a bandwidth parameter.
Remark as $W(y) \in \RR^{d \times d}$ depends on $y$, $\hat{\mu}(y)$ is non-linear. In this context,
we will evaluate again the relevance
of the proposed loss functions and their estimates as objectives
to select the bandwidth $\tau$.\\

Figure \ref{fig:nlmeans_rayleigh} gives an example of a noisy observation $y$
of an image $\mu$ representing a bright two dimensional chirp signal shaded gradually into a dark homogeneous region.
The noisy observation $y$ is contaminated by noise following a Gamma
distribution with shape parameter $L=3$.
We have again evaluated the relevance of the natural risk
$\mathrm{MSE}_\theta$ given by $\norm{\mu^{-1} - \hat{\mu}(Y)^{-1}}^2$,
$\mathrm{MKLS}$ and $\mathrm{MKLA}$ in selecting the bandwidth parameter.
Visual inspection of the results obtained at the
optimal bandwidth for each criterion shows that the natural risk fails in selecting
a relevant bandwidth while $\mathrm{MKLS}$ and $\mathrm{MKLA}$ both provide more satisfying results.
As the image $\mu$ is very smooth in the darker region,
the natural risk favors strong variance reduction
leading to a strong smoothing of the texture in the brightest area.
Again, the Kullback-Leibler loss functions find a good trade-off preserving simultaneously
the bright texture and reducing the noise in the dark homogeneous region, as assessed by the MNAE.
Finally, estimators of these loss functions with respectively GSURE, SUKLS and DKLA are given.\\

Figure \ref{fig:nlmeans_poisson} gives a similar example where the image $\mu$ represents
a two dimensional chirp signal
shaded gradually into a bright homogeneous region.
The image is displayed in log-scale to better assess the variations of the texture
in the darkest region.
The noisy observation $y$ is corrupted by independent noise following a Poisson distribution.
We have evaluated the relevance of the risks $\mathrm{MSE}_\mu$,
$\mathrm{MKLS}$ and $\mathrm{MKLA}$ in selecting the bandwidth parameter.
Visual inspection of $y$ shows that darker regions are more affected
by noise than brighter ones. This is due to the fact that Poisson corruptions
lead to a signal to noise ratio evolving as $\sqrt{\mu}$.
It follows that the $\mathrm{MSE}$
essentially penalizes the residual variance of the brightest region hence leading to
a strong smoothing of the texture in the darkest area. Again, Kullback-Leibler losses lead to selecting
a more relevant bandwidth, smoothing less the brightest area but preserving better the texture,
as assessed by the MNAE.
Finally, estimators of the $\mathrm{MSE}$ with PURE and $\mathrm{MKLA}$ with
PUKLA and DKLA are given.
Note that estimators of $\mathrm{MKLS}$ are not available for non-local filtering
under Poisson noise.

\subsection{Application to variable selection}

We now consider the problem of variable selection in linear regression problems, i.e.,
in finding the non-zero components of a vector $\beta \in \RR^q$ that fulfills the
assumption that an observed vector $y \in \RR^d$ has expectation $\mu = X \beta$
where $X \in \RR^{d \times q}$ is the so-called design matrix.
To this aim, we consider the Least Absolute Shrinkage and Selection Operator (LASSO)
\cite{tibshirani1996regre} given, for $\lambda > 0$, by
\begin{align*}
  \hat{\beta}(y) \in \uargmin{\beta \in \RR^q}
  -\log p(y; \theta = \phi(X \beta)) + \lambda \norm{\beta}_1~.
\end{align*}
In this case the predictor $\hat{\mu}$ is given by $\hat{\mu}(y) = X \hat{\beta}(y)$.
The LASSO is known to promote sparse solutions, i.e., such that
the number of non-zero entries of $\hat \beta$ is small compared to $q$.
The level of sparsity is indirectly controlled by the regularization parameter
$\lambda$, the larger $\lambda$ is,
the sparser $\hat{\beta}$ will be.
Finding the optimal parameter $\lambda$, and then selecting the relevant variables
(columns of $X$) explaining $y$, is a challenging problem that can be addressed
by minimizing an estimator of the risk.
In this context,
we will evaluate again the relevance
of the different proposed loss functions and their estimates as objectives
to select a regularization parameter $\lambda$ offering
a relevant selection of variables.

Figure \ref{fig:lasso} and Table \ref{tab:lasso}
provide results obtained on such a linear regression
problem where $X$ is an orthogonal matrix and $d = q = 16,384$. The vector
$\beta$ was chosen such that $28\%$ of its entries are non-zero.
We have generated $200$ independent realizations $y$ of $Y$ using a Gamma
distribution model with scale parameter $L = 8$.
We have again evaluated the relevance of the natural risk
$\mathrm{MSE}_\theta$ given by $\norm{\mu^{-1} - \hat{\mu}(Y)^{-1}}^2$,
$\mathrm{MKLS}$ and $\mathrm{MKLA}$ in selecting the regularization parameter.
Figure \ref{fig:lasso} shows the evolution of these objectives as a function of $\lambda$.
It shows that KL objectives lead to selecting a larger $\lambda$ parameter
than with the natural risk.
Performance in terms of average percentages of
false negatives (FN: $\hat{\beta}_i = 0$ and $\beta_i \ne 0$),
false positives (FP: $\hat{\beta}_i = 0$ and $\beta_i \ne 0$) and
errors (FP or FN) are reported in Table \ref{tab:lasso}.
It shows that tuning the parameter $\lambda$ with respect to KL objectives
leads to lower numbers of errors than with the natural risk.
One can observe that the subsequent LASSO estimators work at different
trade-offs: KL objectives favor FN over FP, while the natural risk
favors FP over FN.
Finally, performances with estimators of $\mathrm{MSE}_\theta$ with GSURE,
$\mathrm{MKLS}$ with SUKLS, and MKLA with DKLA are also given.
It can be observed that risk estimators offer in average comparable results than their
oracle counterparts but have higher variance.
Note that the LASSO is not differentiable, such that DKLA is not guaranteed
to be asymptotically unbiased (as the conditions of Theorem 3 are not fulfilled),
which explains the large discrepancies observed between the results obtained by MKLA and DKLA.
Nevertheless, even though DKLA is not asymptotically unbiased in this case
variable selections with the LASSO guided by DKLA still provides
a good objective for variable selection,
with similar results as if it was guided by the oracle MKLA objective.

\begin{figure}[!t]
  \centering
  \subfigure[]{\includegraphics[width=0.32\linewidth]{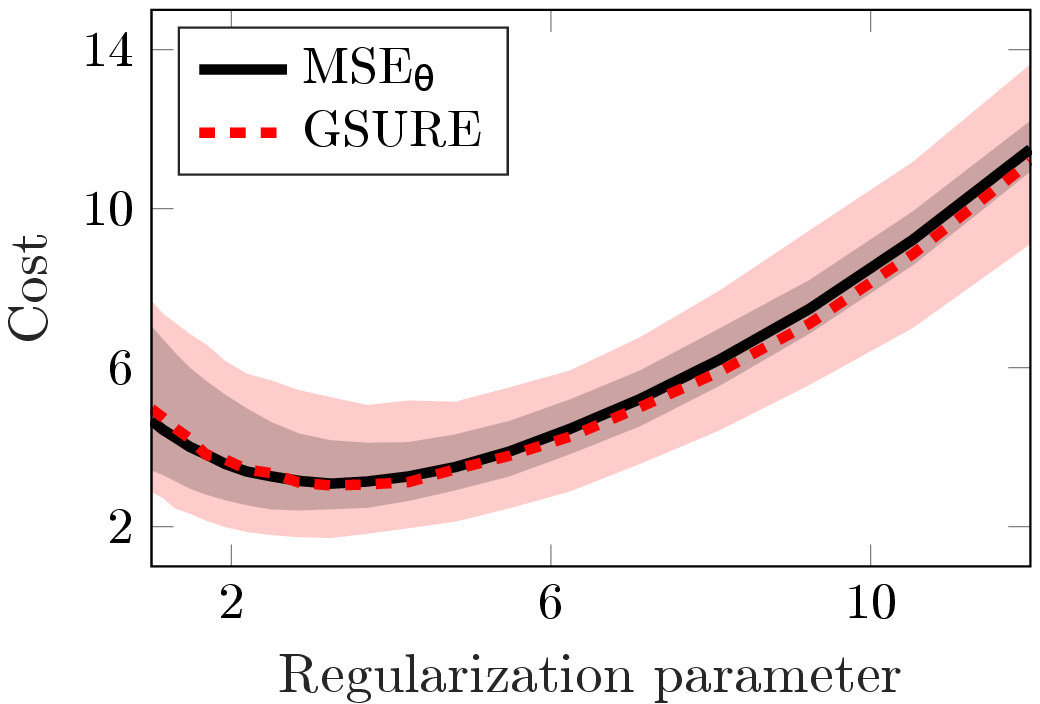}}
  \hfill
  \subfigure[]{\includegraphics[width=0.32\linewidth]{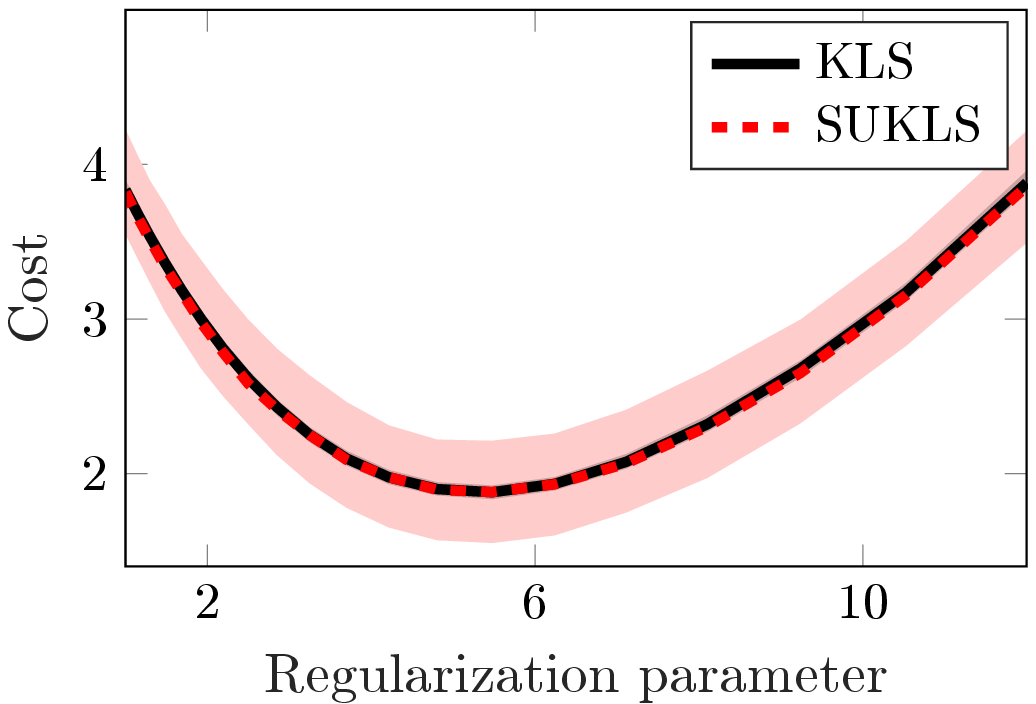}}
  \hfill
  \subfigure[]{\includegraphics[width=0.32\linewidth]{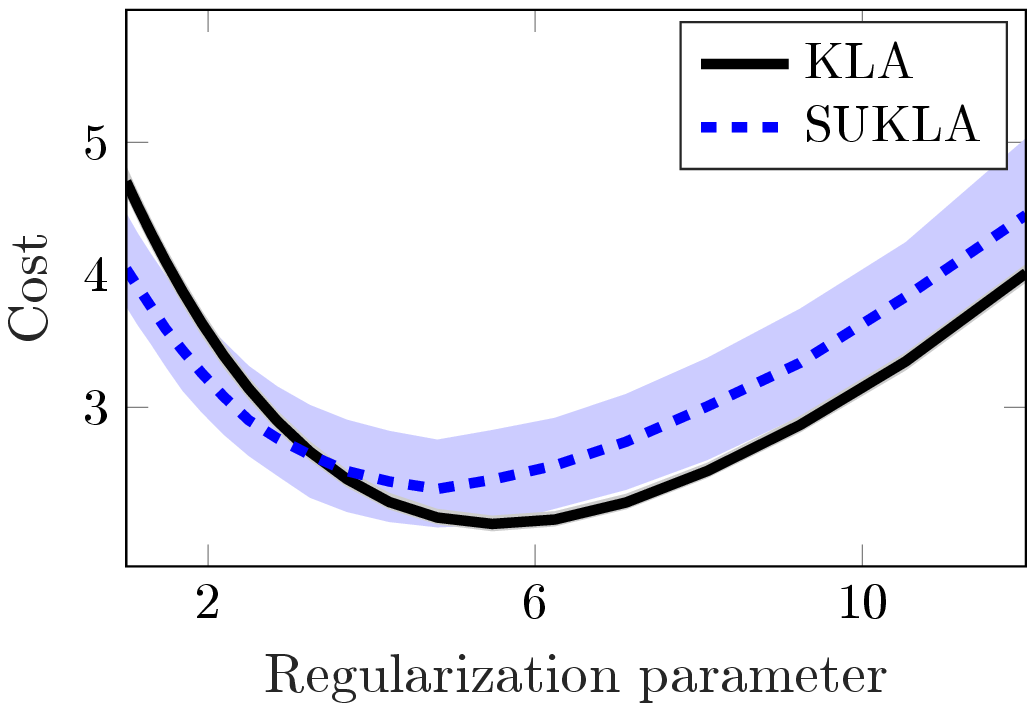}}\\[-1em]
  \caption{Risks and their estimates as a function of the regularization
    parameter $\lambda$ averaged on 200 realizations (their corresponding $90\%$
    confidence intervals are also indicated in shaded colors).
    Red shows unbiased estimation and blue biased estimation.
  }
  \vskip1em
  \label{fig:lasso}
\end{figure}

\begin{table}[!t]
  \centering
  \caption{Performance in terms of errors, false negatives (FN) and false positives (FP)
    of the LASSO in solving a variable selection problem contaminated with
    Gamma noise (with scale parameter $L=8$) and
    guided by different specific objectives. The noise model is considered correctly
    specified ($\hat L = L$) or misspecified
    ($|1 - \hat{L}/L| = 0.1$).
    Results are given in average with plus and minus their standard deviations
    on $200$ noise realizations.
  }
  \label{tab:lasso}
\begin{tabular}{ccccc}
  \hline
  & Errors & FN & FP \\
  \hline
  \hline
  \multicolumn{5}{c}{Correctly specified}\\
  \hline
  MSE$_\theta$ & 34.64 $\pm$ 3.09 & 11.36 $\pm$ 1.28 & 23.27 $\pm$ 4.36 &\\
  GSURE & 34.00 $\pm$ 4.34 & 11.74 $\pm$ 1.95 & 22.25 $\pm$ 6.26 &\\
  \hline
  MKLS & 26.24 $\pm$ 0.28 & 15.68 $\pm$ 0.16 & 10.56 $\pm$ 0.23 &\\
  SUKLS & 26.84 $\pm$ 1.17 & 15.31 $\pm$ 0.82 & 11.52 $\pm$ 1.95 &\\
  \hline
  MKLA & 26.24 $\pm$ 0.28 & 15.68 $\pm$ 0.16 & 10.56 $\pm$ 0.23\\
  DKLA & 28.31 $\pm$ 1.62 & 14.37 $\pm$ 0.95 & 13.94 $\pm$ 2.54\\
  \hline
  \hline
  \multicolumn{5}{c}{Misspecified}\\
  \hline
  GSURE & 35.94 $\pm$ 4.44 & 10.93 $\pm$ 1.81 & 25.01 $\pm$ 6.22\\
  \hline
  SUKLS & 28.67 $\pm$ 1.45 & 14.13 $\pm$ 0.81 & 14.53 $\pm$ 2.22\\
  \hline
  DKLA & 30.20 $\pm$ 1.90 & 13.34 $\pm$ 0.96 & 16.86 $\pm$ 2.84\\
  \hline
\end{tabular}
\vskip2em
\end{table}

A last important question is to know whether our risk estimators are robust
against model misspecification, i.e., when the generative model \eqref{eq:klfamily} is
only approximately known.
Indeed, Lv and Liu \cite{lv2014model}
demonstrated the advantage of using KL divergence principle for model selection problems
in both correctly specified and misspecified models.
Along these lines, we have also shown in Table \ref{tab:lasso}
the results obtained under misspecification. We have chosen to evaluate the performance
of the LASSO guided by the aforementioned estimators of the risk
when the shape parameter $L$ of the Gamma distribution is misestimated
by a factor $0.1\%$: $|1 - \hat{L}/L| = 0.1$. We found that the performance
of all estimators drop in this case.
Nevertheless, their relative performance are preserved:
KL objectives lead to lower numbers of errors than with the natural risk.

\section{Conclusion}

We addressed the problem of using and estimating Kullback-Leibler losses for model
selection in recovery problems involving noise distributed within the exponential family.
Our conclusions are threefold:
1)~Kullback-Leibler losses have shown empirically to be more relevant than squared losses
for model selection in the considered scenarios;
2) Kullback-Leibler losses can be estimated in many cases
unbiasedly or with controlled bias depending on the regularity
of both the predictor and the noise;
3) Even though the estimation is subject to variance and bias, the subsequent selection
has shown empirically to be close to the optimal one associated to the loss being estimated.
Future works should focus on understanding under which conditions such a behavior can be guaranteed.
This includes establishing tighter bounds on the reliability, consistency with
respect to the data dimension $d$ and asymptotic optimality results for some given class of predictors.
Estimation of Kullback-Leibler losses and other discrepancies
(e.g., Battacharyya, Hellinger, Mahanalobis, Rényi or Wasserstein distances/divergences)
beyond the exponential family and requiring less regularity on the predictor
should also be investigated.

\bibliographystyle{imsart-number}
\bibliography{refs}

\end{document}